\newtheorem{theorem}{Theorem}
\newcommand{\sys}{Quarl\xspace}
\newcommand{\qiskit}{Qiskit\xspace}
\newcommand{\quilc}{Quilc\xspace}
\newcommand{\tket}{t$|$ket$\rangle$\xspace}
\newcommand{\voqc}{\textproc{voqc}\xspace}
\newcommand{\er}[1]{\mbox{\rm\em #1}}
\newcommand{\m}[1]{\mathcal{#1}}
\newcommand{\critic}{gate selector\xspace}
\newcommand{\actor}{transformation selector\xspace}
\newcommand{\tof}{{\tt barenco\_tof\_3}\xspace}
\newcommand{\local}{\textproc{IG}\xspace}
\newcommand{\mdp}{Markov decision process\xspace}
\newcommand{\ZJ}[1]{{\textcolor{red}{ZJ: #1}}}
\newcommand{\yw}[1]{{\color{orange} [Yi: #1]}}
\newcommand{\TODO}[1]{{\textcolor{red}{TODO: #1}}}
\newcommand{\oded}[1]{{\textcolor{blue}{OP: #1}}}
\newcommand{\ZL}[1]{{\textcolor{cyan}{Zikun: #1}}}
\newcommand{\JP}[1]{{\textcolor{violet}{Jinjun: #1}}}
\newcommand{\YM}[1]{{\textcolor{magenta}{Yixuan: #1}}}
\newcommand{\commentout}[1]{\ifdim\lastskip>0pt\ignorespaces\fi}
  \renewcommand{\ZJ}[1]{\ifdim\lastskip>0pt\ignorespaces\fi}
  \renewcommand{\yw}[1]{\ifdim\lastskip>0pt\ignorespaces\fi}
 \renewcommand{\TODO}[1]{\ifdim\lastskip>0pt\ignorespaces\fi}
 \renewcommand{\oded}[1]{\ifdim\lastskip>0pt\ignorespaces\fi}
 \renewcommand{\ZL}[1]{\ifdim\lastskip>0pt\ignorespaces\fi}
 \renewcommand{\JP}[1]{\ifdim\lastskip>0pt\ignorespaces\fi}
 \renewcommand{\YM}[1]{\ifdim\lastskip>0pt\ignorespaces\fi}
\newtheorem{definition}{Definition}
\begin{document}

\title{\sys: A Learning-Based Quantum Circuit Optimizer}

\author{Zikun Li}
\affiliation{\institution{Carnegie Mellon University}
\city{Pittsburgh}
\state{PA}
\country{USA}}
\email{zikunl@andrew.cmu.edu}

\author{Jinjun Peng}
\affiliation{\institution{Tsinghua University}
\city{Beijing}
\country{China}}
\email{mail@co1in.me}

\author{Yixuan Mei}
\affiliation{\institution{Tsinghua University}
\city{Beijing}
\country{China}}
\email{meiyixuan2000@gmail.com}

\author{Sina Lin}
\affiliation{\institution{Microsoft}
\city{Mountain View}
\state{CA}
\country{USA}}
\email{silin@microsoft.com}

\author{Yi Wu}
\affiliation{\institution{Tsinghua University}
\city{Beijing}
\country{China}}
\email{jxwuyi@gmail.com}

\author{Oded Padon}
\affiliation{\institution{VMware Research}
\city{Palo Alto}
\state{CA}
\country{USA}}
\email{oded.padon@gmail.com}

\author{Zhihao Jia}
\affiliation{\institution{Carnegie Mellon University}
\city{Pittsburgh}
\state{PA}
\country{USA}}
\email{zhihao@cmu.edu}
\date{}

\thispagestyle{empty}

\begin{abstract}

Optimizing quantum circuits is challenging due to the very large search space of functionally equivalent circuits and the necessity of applying transformations that temporarily decrease performance to achieve a final performance improvement. This paper presents \sys, a learning-based quantum circuit optimizer. 
Applying reinforcement learning (RL) to quantum circuit optimization raises two main challenges: the large and varying action space and the non-uniform state representation.
\sys addresses these issues with a novel neural architecture and RL-training procedure. Our neural architecture decomposes the action space into two parts and leverages graph neural networks in its state representation, both of which are guided by the intuition that optimization decisions can be mostly guided by local reasoning while allowing global circuit-wide reasoning.
Our evaluation shows that \sys significantly outperforms existing circuit optimizers on almost all benchmark circuits. Surprisingly, \sys can learn to perform rotation merging---a complex, non-local circuit optimization implemented as a separate pass in existing optimizers.



\end{abstract}

\maketitle

\section{Introduction\label{sec:intro}}


Quantum computing presents a novel paradigm that enables significant acceleration over classical counterparts in a wide range of applications, such as quantum simulation~\cite{cao2019quantum}, integer factorization~\cite{monz2016realization}, and machine learning~\cite{biamonte2017quantum}.
However, programming quantum computers is a challenging task due to the scarcity of qubits and the diverse forms of noise that affect the performance of near-term intermediate-scale quantum (NISQ) devices.

Quantum programs are commonly represented as {\em quantum circuits}, such as the one shown in \Cref{fig:long-sequence}, where each horizontal wire represents a qubit and boxes on these wires represent quantum gates.
To enhance the success rate of executing a circuit, a common form of optimization is applying {\em circuit transformations}, which replace a subcircuit matching a specific pattern with a functionally equivalent subcircuit that has better performance (e.g. fidelity, depth).


\begin{figure}[ht]
    \centering
    \includegraphics[scale=0.27]{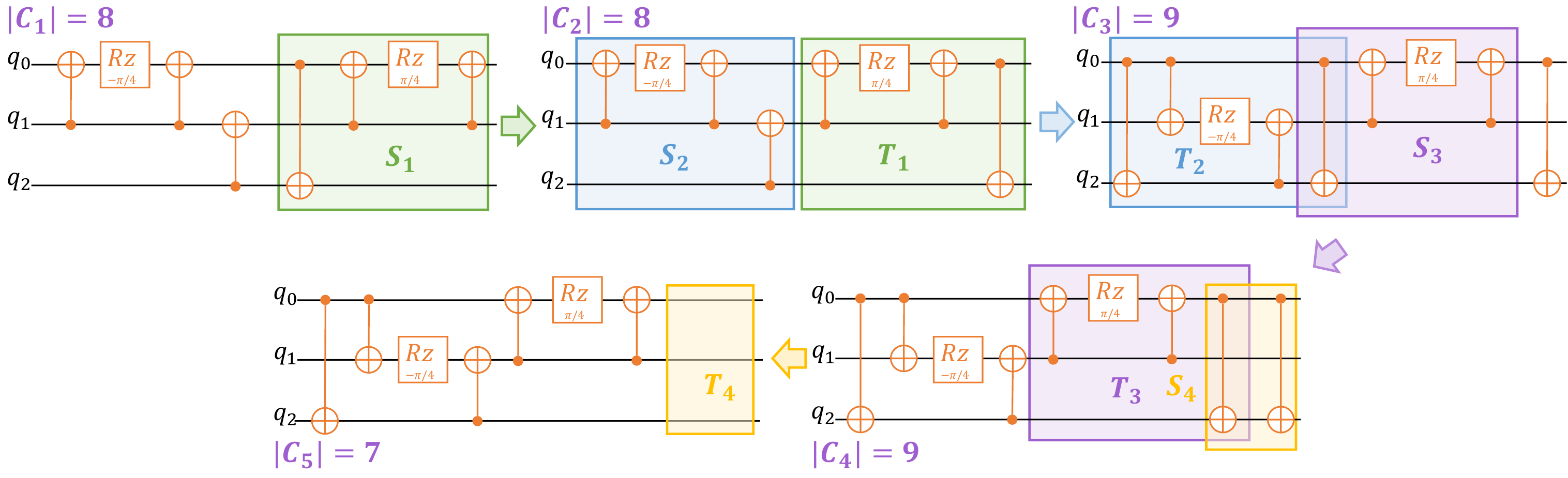}
    \vspace{-2mm}
    \caption{A quantum circuit optimization with four transformations, one of which increases the cost as an intermediate step. Each arrow indicates a transformation, whose source and target sub-circuits are highlighted by boxes in the same color.
    }
    \label{fig:long-sequence}
\end{figure}


Prior research has proposed two approaches for performing circuit transformations on an input circuit. The first approach is the use of {\em rule-based} strategies, which are employed by many quantum compilers such as \qiskit~\cite{qiskit}, \tket~\cite{tket}, and \quilc~\cite{quilc}. 
These strategies involve the greedy application of a set of circuit transformations that are manually designed by quantum computing experts to improve the performance of quantum circuits. 
The second approach, as introduced in recent work~\cite{pldi2022-quartz,quanto, xu2022synthesizing}, is a {\em search-based} approach that explores a search space of circuits that are functionally equivalent to the input circuit. 
Quartz~\cite{pldi2022-quartz} automatically generates and verifies circuit transformations for a given gate set, which preserves equivalence but may not necessarily improve performance. 
To optimize an input circuit, Quartz employs a cost-based backtracking search algorithm to apply these transformations and discover an optimized circuit. 
Although existing approaches improve the performance of quantum circuits, they are limited by the following challenges in transformation-based quantum circuit optimization.


\paragraph{Planar optimization landscape.}
The set of circuits that can be reached from an input circuit by iteratively applying verified, equivalence-preserving transformations comprises the search space in quantum circuit optimization. 
However, finding the optimal circuit in this space is challenging due to the space's size, which makes exhaustive exploration infeasible, and the inability of the cost function (derived from a selected performance metric) to provide enough guidance for a greedy approach. 
This scenario is referred to as a \emph{planar optimization landscape} since the path from one circuit to another with lower cost often contains many steps in which the cost remains unchanged. (Plateaus of this sort are also present in classic program optimization~\cite{DBLP:conf/pldi/KoenigPA21}.)

\commentout{
\begin{figure}[t]
    \centering
    \subfloat[Cost-increasing transformation.]{
    \includegraphics[scale=0.22]{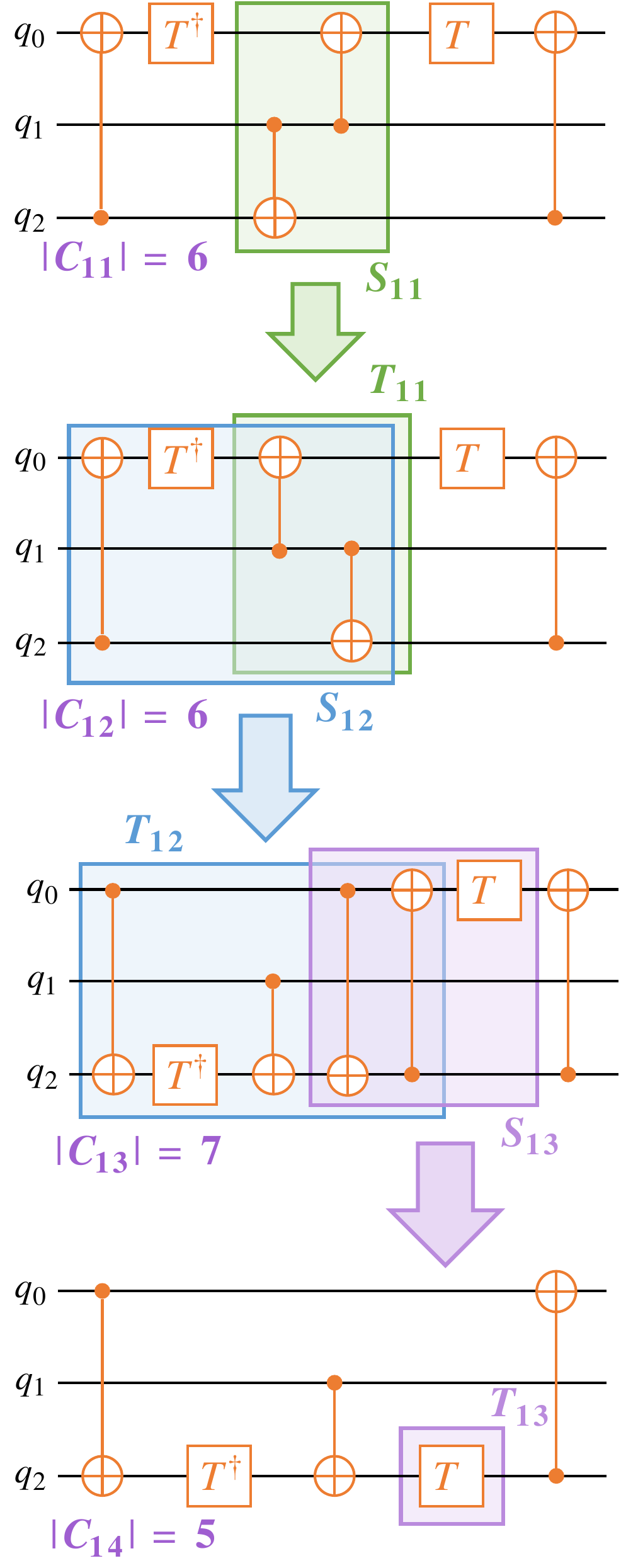}
    \label{fig:xfer_left}
    }
    \subfloat[Long sequence optimization.]{
    \includegraphics[scale=0.205]{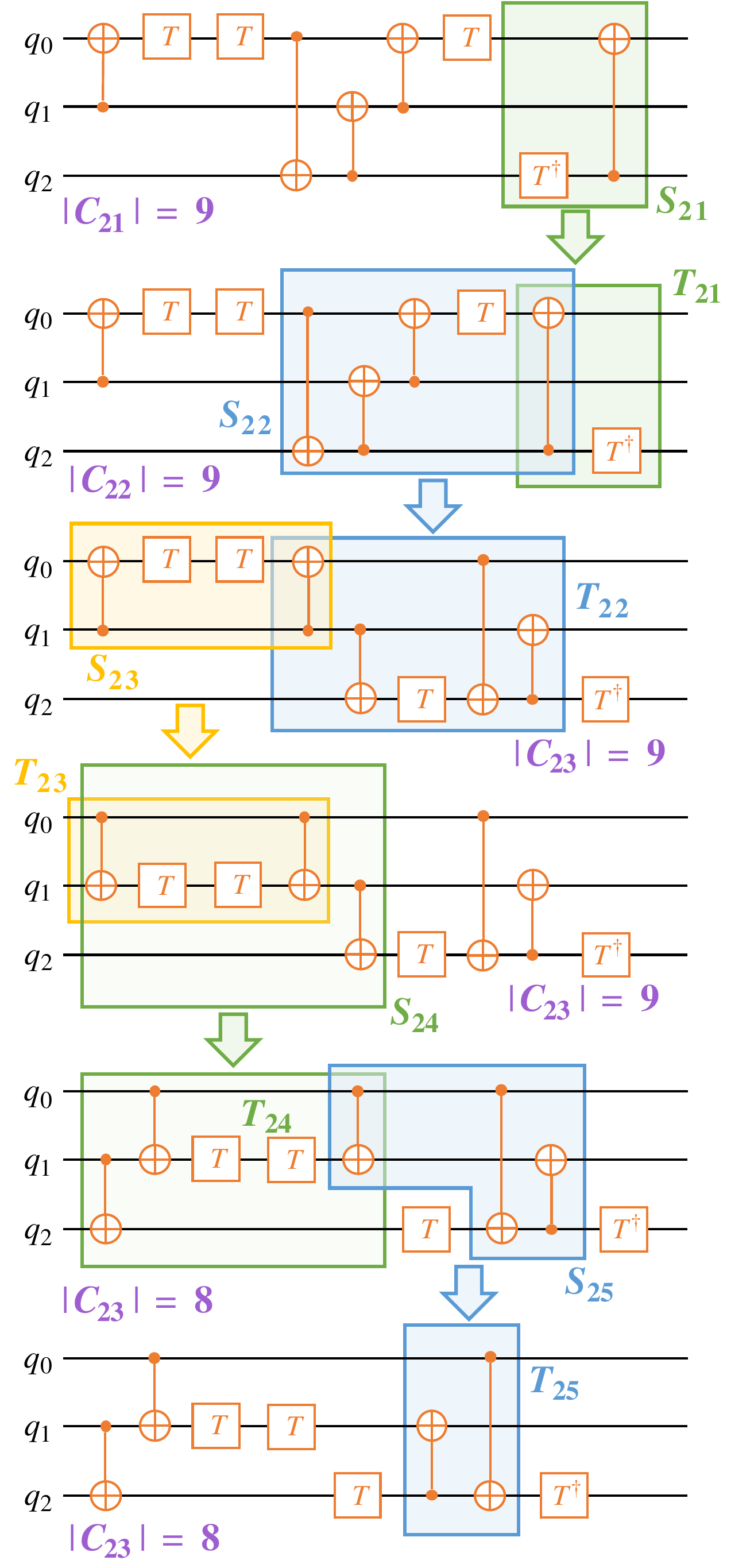}
    \label{fig:xfer_right}
    }
    \caption{captain}
\end{figure}
}



\begin{figure}[htbp]
\label{fig:ibm_depth_fidelity}
  \centering
  \subfloat[]{
    \centering
    \includegraphics[width=0.46\textwidth]{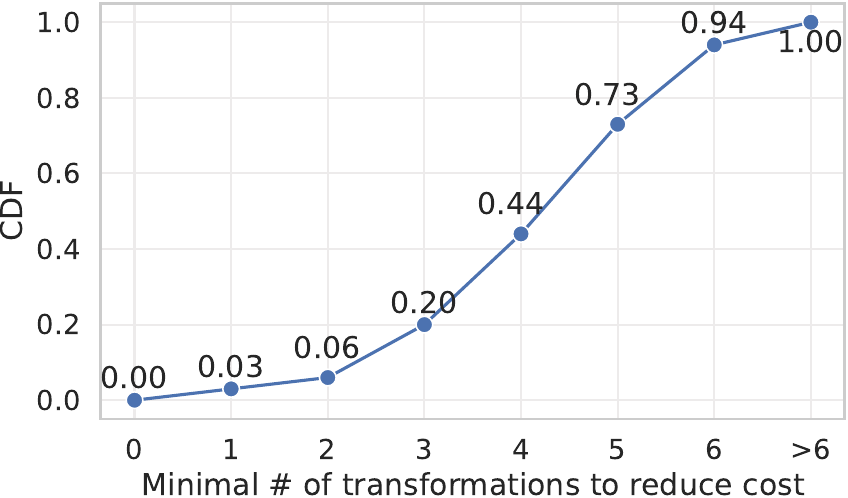}
    }
    \hspace{0.02\textwidth}
  \subfloat[]{
    \centering
    \includegraphics[width=0.46\textwidth]{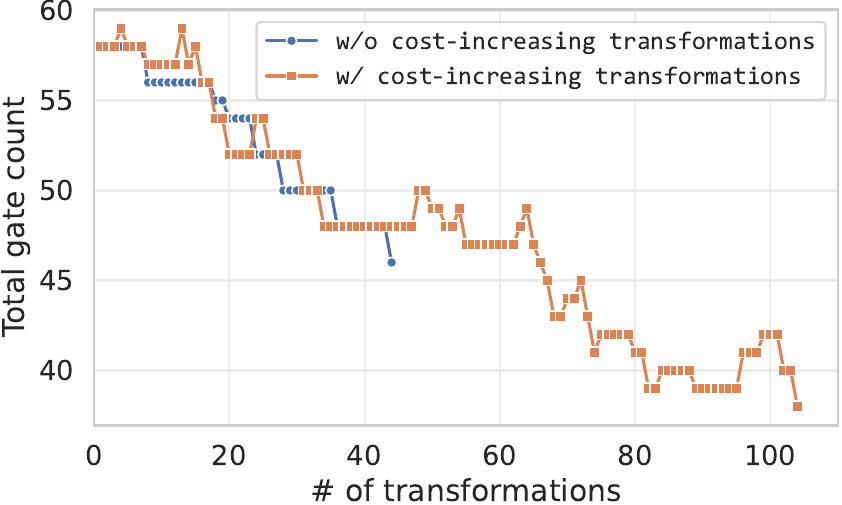}    
    \label{fig:search_space_b}
    }
    \vspace{-2mm}
  \caption{An analysis on the search space of \texttt{barenco\_tof\_3}. (a) CDF of the minimal number of transformations needed to reduce gate count. 
  (b) The optimization trace of the best discovered circuits when including and excluding cost-increasing transformations.}
  \label{fig:search_space}
\end{figure}

To illustrate the challenge of discovering an optimal circuit in the search space, we analyze the search space of a relatively small circuit \tof~\cite{nam2018automated}.
The circuit includes three Toffoli gates, implemented with 58 gates in the Nam gate set ({$CNOT$, $X$, $H$, $Rz$}). 
We consider the 6,206 transformations discovered by Quartz~\cite{pldi2022-quartz} for the Nam gate set and exhaustively find all circuits reachable within seven transformation applications. 
Of these roughly 162000000 circuits, we randomly sample 200 circuits and analyze the optimization landscape around them. 
Specifically, for each sampled circuit $C$, we perform a breadth-first search (BFS) to determine the shortest path from $C$ to a circuit $C'$ with a lower cost (using the total number of gates as the cost function). 
That is, we determine the radius of the planar optimization landscape around $C$. 
The BFS is performed up to a radius of 6, so we either find the exact radius or conclude that it is greater than 6. 
Notably, it is guaranteed that all of the 200 circuits can be optimized because their cost is larger then the optimal cost.
The results are summarized in \Cref{fig:search_space}. 
For more than half of the sampled circuits, reducing the total gate count requires applying more than 4 transformations, and 6\% of the sampled circuits require more than 6 transformations to improve performance. 
While we could exhaustively explore transformation sequences of length seven for \tof, this would not be practical for larger circuits with hundreds of gates or more. 
Therefore, in the absence of guidance from the cost function, it is natural to consider a learning-based approach to learn a good heuristic that can guide the application of transformations towards a lower-cost circuit.

\paragraph{Cost-increasing transformations.}
One of the challenges in optimizing quantum circuits is the need to use transformations that may temporarily increase the cost. 
For example, the second transformation in \Cref{fig:long-sequence} increases the gate count from 8 to 9, which is necessary to enable the subsequent transformations that ultimately reduce the gate count to 7. 
Cost-increasing transformations are crucial to discovering highly optimized circuits. 
To illustrate this point, we compare the performance of the best discovered circuits with and without cost-increasing transformations in \Cref{fig:search_space_b}. 
Without using cost-increasing transformations for circuit \tof, the best circuit that can be discovered in our experiments has 46 gates, while involving cost-increasing transformations can further reduce the gate count to 36. 
However, determining when and where to apply cost-increasing transformations is non-trivial, as these transformations are only beneficial when combined with other transformations that eventually reduce the cost.

\paragraph{Our approach.}
This paper presents \sys, a learning-based quantum circuit optimizer. 
Specifically, \sys utilizes reinforcement learning (RL) to guide the application of quantum circuit transformations. In the RL task, each circuit is defined as a state, and each application of a transformation is considered an action. By adopting this formulation, \sys can learn to identify circuit optimization opportunities and perform long sequences of transformations to achieve them.

The first challenge we must address in applying RL to circuit optimization is the large action space. For example, when learning to apply the 6,206 transformations discovered by Quartz~\cite{pldi2022-quartz} on a thousand-gate circuit, there may be millions of possible actions (i.e., ways to match a transformation to a subcircuit of the current circuit). Such a large action space would degrade the training efficiency of most RL algorithms.
To deal with this issue, we propose a {\em hierarchical action space} that uses a gate-transformation pair to uniquely identify a potential transformation application, so that the RL agent first chooses a gate and then a transformation to apply. This decomposition allows \sys to use two much smaller sub-action spaces, enabling effective training.
To effectively train the RL agent to select both a gate and a transformation, we propose {\em hierarchical advantage estimation} (HAE), which allows \sys to train two policies with a single {\em actor-critic} architecture.
\sys combines HAE with {\em proximal policy optimization} (PPO)~\cite{ppo} to jointly train the gate- and transformation-selecting policies.


\commentout{
In addition, \ZL{we propose {\em hierarchical advantage estimation} (HAE) which allows us to train the two policies jointly with a single actor-critic architecture}. we introduce {\em hierarchical proximal policy optimization} (HPPO), a variant of PPO~\cite{ppo}, to learn two policies that jointly a gate-transformation pair for a transformation.
\oded{Is Zikun's suggestion above meant to replace the following sentence or  is it in addition? Anyway, here's my attempt to combine both of them them:
To effectively train the RL agent to select both a gate and a transformation, we propose {\em hierarchical advantage estimation} (HAE), which allows us to train the two policies jointly with a single actor-critic architecture.
We combine HAE with proximal policy optimization (PPO)~\cite{ppo}, yielding {\em hierarchical proximal policy optimization} (HPPO), a variant of PPO that jointly trains policies for selecting a gate and a transformation.
}
}

The second challenge for RL-based quantum circuit optimization is state representation.
An RL application typically represents states as vectors in a fixed, high-dimensional space. 
Unlike most RL tasks whose states have relatively uniform structure, a state in our setting is a quantum circuit whose size and topology depend on the input circuit and change at each step during the optimization process, making it non-trivial to design a fixed, uniform state representation.
A key insight behind \sys's approach is to leverage the {\em locality} of quantum circuit transformations, that is, the decision of applying a transformation at a certain location is largely guided by the local environment at that location.
Based on this insight, \sys uses a {\em graph neural network} (GNN)-based approach to represent the local environment of each gate. 
Although each representation encodes the local environment of only one gate, combining all gate representations allows \sys to identify optimization opportunities across the entire circuit. 
A key advantage of our learned gate-level representations is that they are independent to the circuit size and generalize well to unseen circuits.
By combining this local decision making with global circuit-wide fine-tuning, we aim to achieve a balance of both local and global guidance for the optimization process.


The evaluation results demonstrate the superior performance of \sys over existing circuit optimizers for almost all circuits. 
For example, on the Nam gate set, \sys achieves an average reduction of 35.8\% and 33.9\% in total gate count and {\tt CNOT} gate count, respectively (geometric mean), while the best existing optimizers achieve reductions of only 28.3\% and 20.6\%. On the IBM gate set, \sys achieves an average reduction of 36.6\% and 21.3\% in total gate count and {\tt CNOT} gate count, respectively (geometric mean), while the best existing optimizers achieve reductions of 20.1\% and 7.7\%. 
Furthermore, \sys improves circuit fidelity by up to 4.66$\times$ (with an average fidelity improvement of 1.37$\times$) on the IBM gate set, while the best existing optimizer only improves circuit fidelity by 1.07$\times$.
Notably, the evaluation also reveals that \sys can automatically discover rotation merging, a non-local circuit optimization, from its own exploration, while existing optimizers require manual implementation of this technique.



The remainder of this paper is organized as follows. 
\Cref{sec:background} provides background information on transformation-based quantum circuit optimization and reinforcement learning. 
\Cref{sec:challenge} outlines the key challenges associated with applying reinforcement learning to quantum circuit optimization and describe our approach for addressing them. 
The main technical contributions of this paper are presented in \Cref{sec:design} that presents \sys's neural architecture, and \Cref{sec:training} that presents \sys's training methodology. 
Empirical evaluations are presented in \Cref{sec:eval}, and related work are discussed \Cref{sec:related}.




\commentout{
\section*{Old Version}
\oded{Following our discussion, I think this paragraph needs to be removed or moved later, or alternatively we would also have here other paragraphs or more sentences in this paragraph about the other two challenges: size of the search space and need for temporary cost increase.}
A key challenge in optimizing quantum circuits is discovering {\em long-sequence} optimizations, which reduce a circuit's cost by applying a long sequence of transformations, while any subset of these transformations does not reduce the cost.
\Cref{fig:long-sequence} demonstrates such an example, where we optimize the input circuit $C_1$ using transformations discovered by Quartz.
The optimization can reduce both the gate count and depth of the input circuit, but realizing this optimization requires sequentially applying five transformations as depicted in the figure.
Note that the first three transformations temporarily increasing the circuit depth in intermediate steps, while the last two transformations reduces both the depth and gate count.

\oded{I think we should eliminate depth from \Cref{fig:long-sequence}. It would be ideal if we can use another example where the gate count needs to temporarily increase. Also, can Quartz discover the sequence currently depicted in the figure? If not, we should say so. If yes, then I think it would be better to use an example where Quartz cannot/does not discover it.}

\if 0
\ZJ{To be removed::}
The problem of optimizing quantum circuits with circuit transformations can be viewed as follows.
Given a quantum circuit, there exists multiple potential optimization opportunities.
However, many of these optimization opportunities are not trivial, which means that in order to realize an optimization opportunities, the optimizer has to perform a long sequence of interdependent transformations rather than a single one.
We call this kind of optimizations {\em long sequence optimizations}.
A common limitation of existing approaches is their inefficiency of discovering {\em long-sequence} optimizations that requires applying a sequence of transformations to optimize performance while a subset of these transformations do not. 
Figure \ref{fig:long-sequence} shows such an example.
One important observation from this example is that all transformations in this optimization process are not independent.
Specifically, transformations, including the last transformation that finally accomplished the optimization, are enabled by their predecessors.
To be more specific, a transformation replaces a subcircuit with an equivalent one, and the new subcircuit interacts with its surroundings, enabling latter transformations.
This happens repeatedly, until eventually the optimization opportunity is unraveled.
\ZJ{::To be removed}
\fi

Existing quantum circuit optimizers cannot efficiently discover long-sequence optimizations.
In particular, enabling the aforementioned long-sequence optimization in today's rule-based circuit optimizers requires designing transformation rules that can match all nine gates in the given pattern, while the number of rules needed to capture long-sequence optimizations grows exponentially with the number of gates.
Enabling long-sequence optimizations in the search-based approach is also challenging due to the very large search space.
\Cref{fig:long-sequence} shows the shortest sequence of transformations needed to perform an optimization from the initial circuit (for the given set of transformations used by Quartz); note that all intermediate transformations do not immediately reduce the cost and achieving improved performance requires simultaneous application of all transformations, which is challenging in a purely randomized search. A potential strategy is introducing heuristics into the cost function used to conduct the search; however, the heuristics largely depend on (1) the set of gates supported by the underlying quantum hardware and (2) the performance metrics we are optimizing for.
In addition, the heuristics also need to consider transformations that decrease performance as intermediate steps.
\oded{The end of the above paragraph is a little repetitive. If we keep it like this I can try to edit it later.}

The first contribution of this paper is identifying and systematically analyzing long-sequence optimizations, a key class of circuit optimizations missing in existing quantum compilers. 
For a quantum circuit, the transformations discovered by prior work constitutes a very large search space of equivalent circuits.
We show that long-sequence optimizations are critical to discover high-optimized circuits. This is because most circuits in the search space are each associated with a large, planar optimization landscape, requiring combinations of multiple transformations to reduce cost.
In addition, discovering highly optimized circuits also requires applying transformations that temporarily decrease performance in order to enable subsequent optimizations; however, existing optimizers do not consider cost-increasing transformations, thus miss these optimization opportunities.

We introduce \sys, a learning-based quantum circuit optimizer to discover long-sequence optimizations.
A key idea behind \sys is formalizing quantum circuit optimization as a Markov decision process (MDP) and applying long-sequence optimizations using reinforcement learning (RL).
\sys learns to detect the appearances and positions of long-sequence optimization opportunities and perform the the right sequence of transformations to realize them.

The first challenge we must address in building \sys is designing a learnable representation for circuits that can cope with reinforcement learning and generalize to circuits with arbitrary number of qubits and gates.
\sys uses a novel {\em graph neural network} (GNN)-based approach to representing the local subcircuits associated with each gate as a high-dimensional vector, which is used by \sys to identify long-sequence optimization opportunities in a circuit and guide \sys to perform the necessary transformations to enable these optimizations.

The second challenge is that {\em explicitly} constructing an action space (i.e., all valid applications of transformations) for a given state (i.e., a quantum circuit) is computationally expensive, since a transformation can be applied to multiple places of a circuit. 
To deal with this issue, we propose a {\em hierarchical action space} that uses a gate-transformation pair to uniquely identify an application of a transformation on a circuit. 

allowing \sys to construct only a sub-action space to save computation while being concise.

This goal is deeply reflected in the design of \sys, which has two key components, namely the \critic and the \actor.
The \critic learns to evaluate the potential to be optimized of different positions of a circuit.
It aims to propagate the reward signal from the end of a long optimizing sequence to the beginning, enabling the agent to detect the optimization opportunity multiple transformations away.
The \actor learns how to choose the right transformation in different cases to realize the optimization.
with the two components, \sys discovers, locates and seizes long-sequence optimization opportunities, which allows it to optimize circuits more thoroughly.

\commentout{
\subsection*{Old version}
To discover long-sequence optimizations, we introduce \sys, a learning-based quantum circuit optimizer.
A key idea behind \sys is formalizing quantum circuit optimization as a Markov decision process (MDP) and discovering long-sequence optimizations using reinforcement learning (RL).
By adopting RL in the long-sequence optimization problem, our goal is to learn to detect the appearances and positions of optimization opportunities, and to learn to perform the right sequence of transformations to realize them.
This goal is deeply reflected in the design of \sys, which has two key components, namely the \critic and the \actor.
The \critic learns to evaluate the potential to be optimized of different positions of a circuit.
It aims to propagate the reward signal from the end of a long optimizing sequence to the beginning, enabling the agent to detect the optimization opportunity multiple transformations away.
The \actor learns how to choose the right transformation in different cases to realize the optimization.
with the two components, \sys discovers, locates and seizes long-sequence optimization opportunities, which allows it to optimize circuits more thoroughly.

While the long-sequence optimization problems are difficult to solve, as will be stated in section \ref{sec:motivation}, designing a RL algorithm for this problem is also non-trivial.
One of the challenges is that constructing an action space (all valid applications of transformations) for a given state (a quantum circuit) is highly inefficient in computation.
To deal with this issue, we propose a {\em hierarchical action space} which uses a gate-transformation pair to uniquely locate an application of a transformation on a circuit, which allows to construct only a sub-action space in action selection to save computation while being concise.
This will be discussed more exhaustively in section \ref{sec:background}.
Also, designing an RL algorithm that matches our problem is not easy since deploying RL algorithms in literature naively works poorly.
In order to tackle the particular difficulties in our problem, we modified the PPO algorithm~\cite{ppo} and introduce a hierarchical policy.
We will discuss our design of the RL algorithm and the training procedure in section \ref{sec:design} and \ref{sec:training} respectively.
By focusing on long-sequence optimization, \sys is able to explore for and exploit more optimization opportunities, which leads to a great improvement in the optimizing results.
As shown in \ref{}, \sys outperforms previous works in all the benchmarks in total gate count.
*** also mention cx count, need experiment data***
Also, it is shown that \sys is able to discover optimizing rules from scratch without external knowledge, such as rotation merging.
This indicates that \sys has a potential to discover unknown rules which can help us better understand quantum circuit optimization problem.

\ZL{
\begin{itemize}
    \item contributions
\end{itemize}
}
}
\ZJ{What are the challenges we had addressed when designing a RL solution for circuit optimizations?}
\begin{itemize}
    \item May worth mentioning that \sys discovers a long-sequence optimization that optimizes barenco\_tof\_3 from 38 to 36 gates by applying 75 transformations, some of which even increases the gate count. 
\end{itemize}
}

\section{Background\label{sec:background}}

\if 0
\begin{figure}[t]
    \centering
    \subfloat[RL {\em without} action space decomposition.]{
    \includegraphics[scale=0.38]{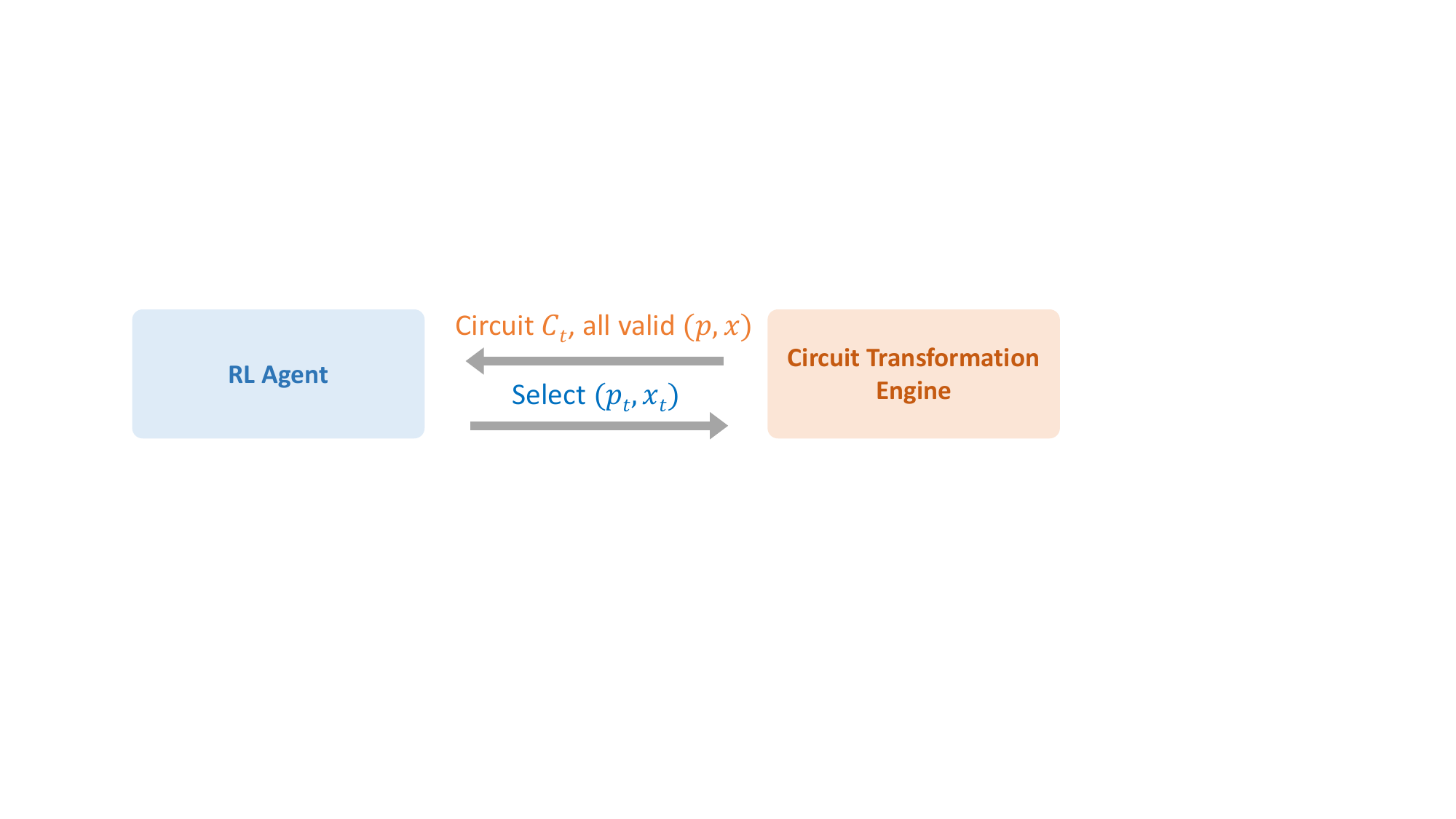}
    \label{fig:non_hierarchical_rl}
    }
    \\
    \subfloat[RL {\em with} action space decomposition.]{
    \includegraphics[scale=0.38]{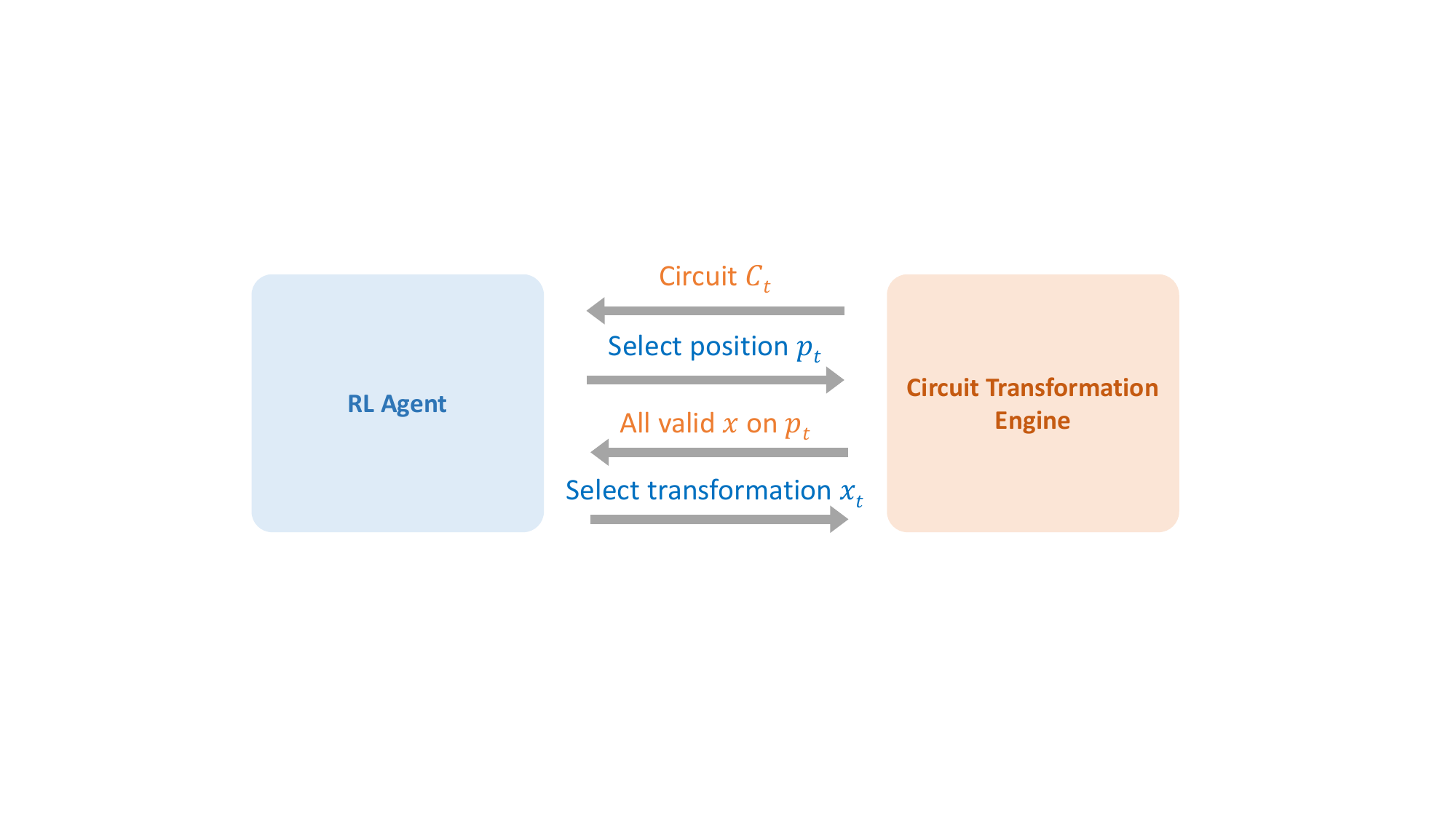}
    \label{fig:hierarchical_rl}
    }
    \caption{
    Comparing the interactions between the RL agent and circuit transformation engine with and without action space decomposition.
    }
    \label{fig:rl_interaction}
\end{figure}
\fi

\begin{figure}
    \centering
    \subfloat[Circuit representation.]{
    \includegraphics[scale=0.28]{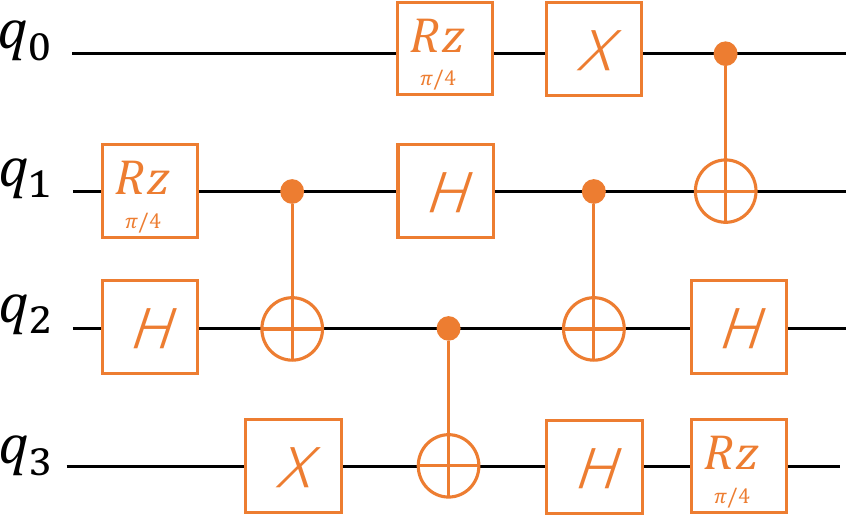}
    \label{fig:circuit_representation}
    }
    \subfloat[Graph representation.]{
    \includegraphics[scale=0.28]{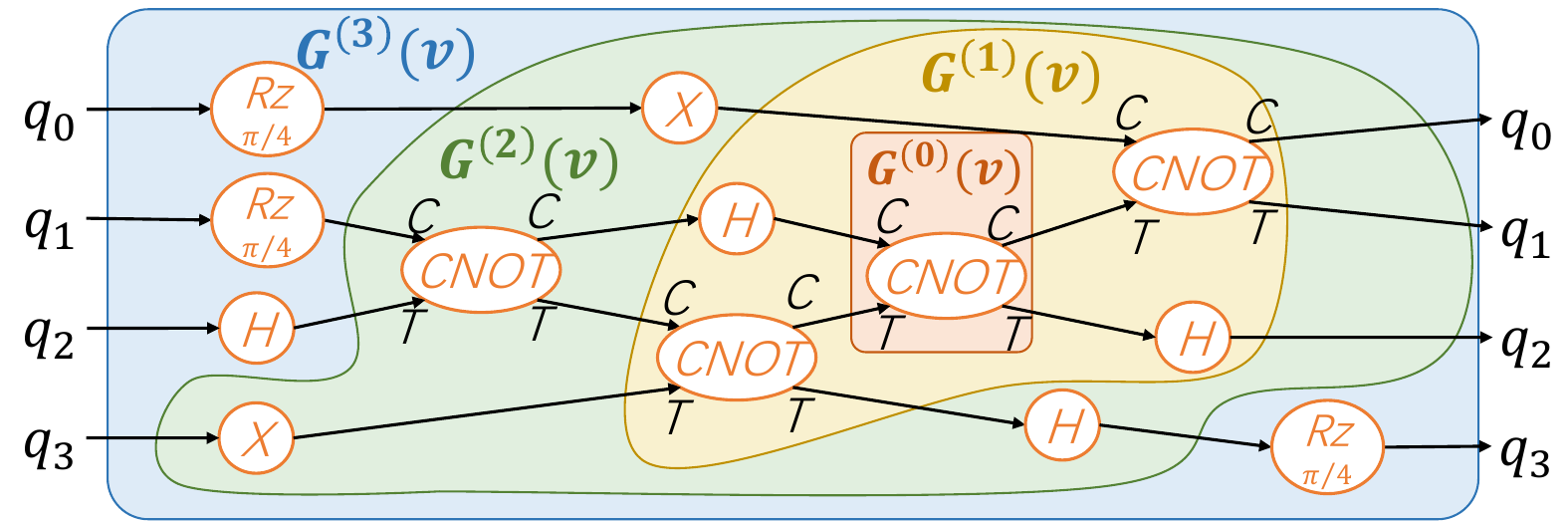}
    \label{fig:graph_representation}
    }
    \vspace{2mm}
    \\
    \subfloat[Circuit transformation and its graph representation.]{
    \includegraphics[scale=0.34]{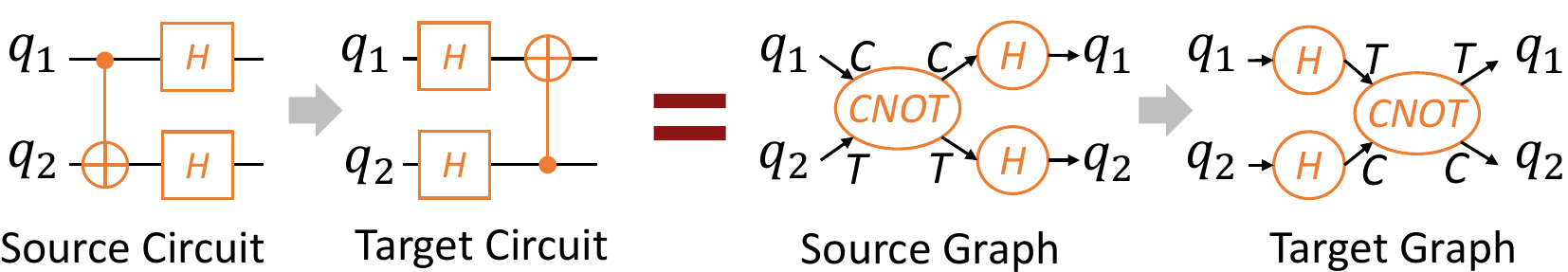}
    \label{fig:graph_subst}
    }
    \vspace{-2mm}
    \caption{A quantum circuit (a) and its graph representation (b). For each {\tt CNOT}, C and T correspond to the control and target qubits. $G^{(k)}(v)$ shows the $k$-hop neighborhood (see \Cref{subsec:design:gnn}) for the {\tt CNOT} gate identified by $G^{(0)}(v)$. (c) illustrates a circuit transformation and its graph representation.}
\end{figure}

\paragraph{Graph representation of quantum circuits.}\label{subsec:graph_repr}
We adopt the graph representation of quantum circuits from prior work~\cite{pldi2022-quartz}.
As illustrated in \Cref{fig:circuit_representation,fig:graph_representation},
a circuit $C$ is represented as a directed acyclic labeled graph $G = (V, E)$.
Each gate with $q$ qubits is represented as a node $v\in V$ with $q$ in- and out-edges, and each edge $e\in E$ represents a connection of two adjacent gates on a qubit.
Nodes are labeled with gate types, and edges are labeled to distinguish different qubits of a multi-qubit gate 
(e.g., the control and target qubits of a {\tt CNOT} gate).

\commentout{
\oded{We can explain a bit more about transformations: mention that a transformation involves changing a convex subgraph, say that it preserves the semantics of the circuit (refer to Quartz paper for details), and maybe also explain here that a an application of a transformation to a circuit is determined once a single gate is determined (and explain why). Maybe it makes sense to split this paragraph into two paragraphs: "Graph representation of quantum circuits." and "Circuit transformations."/"Optimizing quantum circuits with transformations."}
\ZJ{Done.}

\oded{We should say somewhere that for the rest of this paper, we fix a given gate set and a given set of transformations over this gate set. I think we can say it here. Later when we define quantum optimization as RL we currently don't explain that this all depends on a given set of transformations.}
}

\paragraph{Transformation-based circuit optimization.}
To improve the performance of quantum circuits, a common form of optimization is {\em circuit transformations}, which replace a subcircuit matching a specific pattern with a distinct equivalent subcircuit.
In the graph representation, a subcircuit correspond to a convex subgraph~\cite[p.8]{pldi2022-quartz}; a transformation is represented as a pair of connected graphs $(G,G')$ representing equivalent circuits as illustrated in \Cref{fig:graph_subst}, and applying it to a circuit $C$ amounts to finding a convex subgraph of $C$ that is isomorphic to $G$, and then replacing it by $G'$ to yield a new circuit $C'$.
\commentout{
\oded{There isn't such a thing as a convex graph, only a convex subgraph of another graph. Maybe we can say instead:
In the graph representation, a subcircuit correspond to a convex subgraph~\cite[p.8]{pldi2022-quartz}; a transformation is represented as a pair of graphs $(G,G')$ representing equivalent circuits as illustrated in \Cref{fig:graph_subst}, and applying it to a circuit $C$ amounts to finding a convex subgraph of $C$ that is isomorphic to $G$, and then replacing it by $G'$ to yield a new circuit $C'$.
}
}
An application of a transformation $(G,G')$ to a circuit $C$ is uniquely determined once a single gate $g \in G$ is matched with a gate $g_C \in C$, due to the unique edge labeling and the fact that $G$ is always a connected graph.
\commentout{
\oded{I don't think we need the part about explaining what gate we actually use. If it's important we can explain it later when we explain more low-level implementation details, or maybe we can even drop it completely (does it have any significance which gate we choose for locating a transformation, as long as we have some consistent choice of one gate in each source graph?)
}
}
\commentout{
\sys uses a {\em single} gate to locate an application of the transformation on a circuit.
While in general graph transformations, a single node cannot deterministically identify how a transformation is applied, this approach is feasible and highly desirable under the quantum setting, since the source and target graphs of a transformation are connected, and
the in- and out-edges of multi-qubit gates are indexed\footnote{
For instance, the control and target qubits of a {\tt CNOT} are indexed 0 and 1.}.
This property guarantees that the orders of in- and out-edges are preserved in graph pattern matching.
As a result, when a gate $g_x$ in transformation $x$ is mapped to a gate $g_C$ in circuit $C$, all other gates in the source graph of the transformation deterministically match gates in $C$ if the transformation is applicable.
In practice, we use the topologically first gate $g_x$ of a transformation $x$ to locate the transformation\footnote{When there is a tie, we pick the gate with the smallest qubit id.}.
For example, the first transformation in \Cref{fig:long-sequence} is applied to the $T^\dag$ gate in $C_1$, and the remaining four transformations in the figures are applied to the left-most $CNOT$ gate in $S_2$, $S_3$, $S_4$, and $S_5$, respectively.
}

\if 0
\sys uses a {\em single} gate to locate an application of the transformation on a circuit.
While in general graph transformations, a single node cannot deterministically identify how a transformation is applied, this approach is feasible and highly desirable under the quantum setting for two reasons.
First, the source and target graphs of a transformation are connected, since a transformation with a disconnected source (or target) graph can be decomposed into multiple smaller transformations, each of which has a connected source (or target) graph.
Second, the in- and out-edges of multi-qubit gates are indexed.
For instance, the control and target qubits of a {\tt CNOT} are indexed 0 and 1, respectively.
This property guarantees that the orders of in- and out-edges are preserved in graph pattern matching.
As a result, when a gate $g_x$ in transformation $x$ is mapped to a gate $g_C$ in circuit $C$, all other gates in the source graph of the transformation deterministically match gates in $C$ if the transformation is applicable.
In practice, we use the topologically first gate $g_x$ of a transformation $x$ to locate the transformation~\footnote{When there is a tie, we pick the gate with the smallest qubit id.}, that is, when applying $x$ to circuit $C$, if $g_x$ is matched with gate $g_C$, we say that we apply $x$ to $g_C$.
For example, the first transformation in \Cref{fig:long-sequence} is applied to the $T^\dag$ gate in $C_1$, and the remaining four transformations in the figures are applied to the left-most $CNOT$ gate in $S_2$, $S_3$, $S_4$, and $S_5$, respectively.
A key benefit of locating a transformation using a single gate instead of a subcircuit is to simplify the position space $P$, since $P$ is exactly all the gates in a circuit.
As a result, each action in the action space can be represented as $(g, x)$, where $g$ is a gate in circuit $C$ and $x \in X$.
\fi


\if 0
\noindent\textbf{Reinforcement Learning (RL).} RL formalizes problems as a Markov decision process $M=\langle S, A, P, R\rangle$, in which $S$ is the set of all possible states, $A$ is the set of all possible actions, $P$ is the transition function that returns the next state $s^\prime$ given the current state $s$ and action $a$, and $R$ is the reward function that returns the reward $r$ given current state $s$ and action $a$. The goal of RL training is to find a policy $\pi$ that maximizes the expectation of discounted return, formally defined as $\mathbb E[\sum_{t}\gamma^t r_t|a_t\sim \pi(\cdot | s_{\le t})]$. In the definition, $\gamma$ is the discount factor that helps RL agents balance short-term and long-term rewards. The RL objective sometimes can also be written in an equivalent but more compact form: $\mathbb E_{\tau \sim \pi}[R(\tau)]$, in which trajectory $\tau = (s_0, a_0, s_1, a_1, ...)$ represents the sequence of state-action pairs collected by the RL agent and $R(\tau) = \sum_{t}\gamma^t r_t$ is the discount sum of rewards along trajectory $\tau$.

\noindent\textbf{Proximal Policy Optimization (PPO).} In this paper, we focus on a specific RL method called Proximal Policy Optimization (PPO), which is a kind of \textit{policy gradient} method that directly optimizes the policy $\pi$. At training step $t$, PPO improves the current policy $\pi_t$ based on data collected using policy $\pi_t$. PPO also limits the degree of change in the policy in each training step to avoid accidental performance collapse. To express policy improvement in a mathematical way, PPO uses \textit{advantage} function $Ad^\pi(s, a)$, which describes how good an action $a$ is relative to other actions on average under policy $\pi$ at state $s$. The advantage function is formally defined as $Ad^\pi(a)=Q^\pi(s, a) - V^\pi(s)$, in which $Q^\pi(s, a) = \mathbb E_{\tau \sim \pi}[R(\tau) | s_o= s, a_0=a]$ is the \textit{Q-value} of action $a$ in state $s$ and $V^\pi(s) = E_{\tau \sim \pi}[R(\tau) | s_o= s]$ is the \textit{value} of state $s$ (i.e. a weighted average of Q-values of all possible actions). PPO improves the policy $\pi_t$ at training step $t$ with
\[ \pi_{t+1} = \arg\max_\pi \mathbb{E}_{\pi_{t}}\left[\min\left( \frac{\pi(a|s)}{\pi_t(a|s)}A^{\pi_t}(s, a), g(\epsilon, A^{\pi_t}(s, a)) \right)\right] \],
in which $g(\epsilon, A) = (1 + \epsilon)A$ if $A \geq 0$ and $g(\epsilon, A) = (1 - \epsilon)A$ if $A < 0$. The $g(\epsilon, A)$ clip function limits the degree of change in parameters to avoid accidental performance collapse.
\fi
\paragraph{Reinforcement learning (RL)} 
RL is a class of machine learning algorithms that train an agent to take actions in an environment while optimizing a given objective.
\if 0
\ZL{ Old Version:
In an RL task, the agent perceives the {\em state} of the environment ---
for a given state, the agent decides to take a particular {\em action} using an internal mechanism called {\em policy}, denoted as $\pi$.
There are two outcomes of an action: (1) the environment is influenced state may change; and (2) the environment also returns a {\em reward} signal.
}
\fi
An RL problem is formalized as a \mdp defined by a tuple $(S, A_s, P, r, \gamma)$, where $S$ is the state space of the environment, $A_s$ is the action space for the agent at state $s\in S$, $P(s'|s, a)$ defines the probability of the state transiting from $s$ to $s'$ if the agent takes the action $a$, $r(s',s, a)$ 
defines the immediate reward of the transition from state $s$ to state $s'$ by action $a$, and $\gamma\in(0,1)$ is the discounted factor used to give more immediate rewards greater weight than future rewards.
Each state-action pair is a {\em step}, and a sequence of steps is a {\em trajectory}, denoted as $\tau = (s_0, a_0, s_1, a_1, \dots)$.
The {\em return} of a trajectory 
$\tau = (s_0, a_0, s_1, a_1, \dots)$
is the discounted cumulative reward along the trajectory, given by
$R_{\tau} = \sum_{t = 0}\gamma^t r_t$, where $r_t=r(s_{t+1},s_t,a_t)$ denotes the reward at step $t$.
We sometimes use $R_t$ for the discounted cumulative reward starting from step $t$ in the trajectory, which is $R_t = \sum_{t'=t}\gamma^{t' - t}r_t$.
%

RL agents aim to learn a {\em policy} to maximize the return through trial and error, by first collecting trajectories and then optimizing the policy based on the actions taken and the returns observed.
Formally, a {\em policy} is a function $\pi$ that maps each state to a  probability distribution over valid actions.
A policy induces a probability distribution over trajectories starting at a given state (i.e., trajectories {\em generated} by the policy), where $a_t$ is sampled according to $\pi(s_t)$ and $s_{t+1}$ is sampled according to $P(s_{t+1} | s_t, a_t)$.
A policy parameterized by $\theta$ is denoted as $\pi(\cdot|s; \theta)$ or  $\pi_\theta$. 
The learning objective in RL is to maximize  the expected return of trajectories generated by the policy $\pi_\theta$, i.e., $J(\theta) =  E_{\tau \sim \pi_\theta}[R_\tau]$. 

Policy gradient methods~\cite{sutton1999policy} optimize this objective by gradient accent over $J(\theta)$ w.r.t. $\theta$. 
By the policy gradient theorem~\cite{sutton1999policy}, an equivalent formulation in the form of a loss function can be expressed as
$L^{PG}(\theta) = E_{\tau \sim \pi_\theta}[\sum_t log\pi(a_t|s_t;\theta)A(s_t, a_t)]$, whose gradient is equivalent to $\nabla J(\theta)$ and can be estimated by sampling $\tau$. 
$A(s_t, a_t)$ in $L^{PG}(\theta)$ is the {\em advantage}, defined as $A(s, a) = Q^\pi(s, a) - V^\pi(s)$, where $Q^\pi(s, a) = E[R_\tau|s_0 = s, a_0 = a]$ is the Q-function (i.e., the expected return starting from state $s$ and taking action $a$), and $V^\pi(s) = E[R_\tau|s_0 = s]$ is the expected return starting from state $s$.
Advantage measures how much better the agent can get by taking a specific action at a state than average.
By gradient ascent, actions are {\em reinforced} based on their advantages (i.e. the probability of an action increases if its advantage is positive, and drops otherwise).
The empirical advantage $\hat{A}_t$ is often estimated as $\hat{A}(s_t, a_t) = R_t - V^\pi(s_t)$ over a sampled trajectory, where $R_t$ is the return, which is a sampled estimation of $Q^\pi(s_t, a_t)$, and 
$V^\pi(s_t)$ is typically approximated by training another neural network. 
Such a framework is also referred to as the \emph{actor-critic} method, where actor denotes the policy and critic denotes the value network~\cite{a3c}.

\if 0
\ZL{ OLD VERSION: 
The policy gradient is as follows, where $b(s_t)$ is a user-defined function that computes a {\em baseline} value of state $s_t$:
$$
\nabla J(\theta) = E_{\tau \sim \pi_\theta}[\sum_{t = 0}^T \nabla_\theta log\pi(a_t|s_t;\theta)(R_t - b(s_t)]
$$
}
\fi
\if 0
\oded{Is the above equation precise or an estimate? If it's an estimate, maybe it should be denoted by $\nabla J(\theta) \approx$ instead of $\nabla J(\theta)=$, or some other way to explain it. Maybe we should just say "The gradient $\nabla J(\theta)$ is difficult to compute precisely, and is typically estimated by: ... (where we would not write $\nabla J(\theta)=$ and just write the expression used.}
\oded{why is log used above? It doesn't make sense to me, so maybe we should explain it. In fact, we should just explain this estimator. I also don't really understand the role of $b(s_t)$. In what sense is the above expression an estimate of $\nabla J(\theta)$? If we use the ideal $b_s$, is it actually precise?}
\fi 
\if 0
\ZL{OLD VERSION:
Intuitively, through gradient ascent, the algorithm reinforces the actions based on how ``good'' they are compared to the baseline (i.e. the probability of selecting an action increases if its return is larger than the baseline, and decreases otherwise).
A widely adopted baseline is an estimator of the {\em state value} function, which is the expected return starting from a state: $V^\pi(s) = E[R_t|s_t = s]$, denoted as $\hat{V}^\pi(\cdot)$.
In this case, the term $R_t - \hat{V}^\pi(s_t)$ in the gradient expression estimates the {\em advantage} of action $a_t$ at state $s_t$ which is defined as $A(s, a) = Q^\pi(s, a) - V^\pi(s)$, where $Q^\pi(s, a) = E[R_t|s_t = s, a_t = a]$ is the expected return starting from state $s$ and taking action $a$.
$Q^\pi(s_t, a_t)$ is estimated by $R_t$.
The advantage value measures how much better the agent can get by taking an specific action at a state than average.
A common neural architecture to implement the algorithm is the {\em actor-critic} architecture, where the actor is a neural network that instantiate the policy, and the critic network is a neural estimator of the state value function.
}
\fi
\paragraph{Proximal policy optimization (PPO)} PPO is a popular variant of policy gradient methods that achieve state-of-the-art performances in a wide range of applications~\cite{ppo}. 
PPO propose an clipped surrogate objective:
\begin{equation}
\label{eqn:ppo_objective}
L^{CLIP}(\theta) = \mathop{E}\limits_{\tau \sim \pi_{\theta_{old}}}\left[\sum_t min(\rho_\theta( s_t, a_t)\hat{A}_t, \er{clip}(\rho_\theta(s_t, a_t), \epsilon)\hat{A}_t)\right]
\end{equation}
where $\rho_\theta(s, a) = \frac{\pi_\theta(a|s)}{\pi_{\theta_{old}}(a|s)}$, $\hat{A}_t$ denotes the estimated advantage $\hat{A}(s_t, a_t)$ at step $t$ over trajectory $\tau$ generated by $\pi_{\theta_{old}}$, and $\er{clip}(\rho, \epsilon) = \min(1 + \epsilon, \max(1 - \epsilon, \rho))$ is the clip function used by PPO~\cite[Eq. 7]{ppo}.

\section{Challenges and High-Level Approach}\label{sec:challenge}
Applying reinforcement learning to optimize quantum circuits presents several challenges unique to this problem setting. 
This section presents these challenges and the key ideas \sys uses to overcome them.
%
%
%
In the sequel, we assume a fixed gate set, set of equivalence-preserving transformations, and cost function.
We formulate quantum circuit optimization as a \mdp $(S, A_s, P, r, \gamma)$ as follows.
$S$ is the set of circuits over the given gate set.
For a circuit $C \in S$, $A_C$ includes all valid applications of transformations on $C$.
Applying a transformation (i.e., an action) $a$ to a circuit $C$ deterministically defines a new circuit $C'$; therefore we let $P(C'|C, a) = 1$ and $P(C''|C,a)=0$ for $C''\neq C'$.
Finally, the reward function is given by the cost difference between the circuits before and after a transformation: $r(C, a)$ = \Call{Cost}{$C$} - \Call{Cost}{$C'$}.


\commentout{
Therefore, we use reinforcement learning to train an agent that optimize circuits by applying transformations over a number of discrete steps.
At step $t$, the agent receives a circuit $C_{t}$  and selects a transformation $x_t$ from the transformation set $X$ with its policy.
Note that the agent also needs to select a subcircuit in $C_{t}$ with its policy to apply the transformation because a transformation can be potentially applied to multiple subcircuits in a circuit. 
Each time the agent applies a transformation, it receives (1) a new circuit $C_{t+1}$ and (2) a scalar reward $r_t = \textproc{Cost}(C_t) - \textproc{Cost}(C_{t+1})$, where $\textproc{Cost}(.)$ is a user-provided cost metric for evaluating the performance of a circuit.
Example cost metrics include total gate count, {\tt CNOT} gate count, and circuit depth.
Thus, the objective of the quantum circuit optimization problem is $J(\theta) = E_{\tau\sim\pi_\theta}[R_\tau]$ where $\theta$ denotes the parameters of policy $\pi$.
}
\subsection{Challenge 1: Action Space}

Directly applying existing policy gradient methods to our setting requires the RL agent to learn a policy that can simultaneously select a transformation and a subcircuit to apply the transformation for a given circuit.
However, learning such a policy is challenging due to the very large action space.
For example, when learning to apply the 8,664 transformations discovered by Quartz~\cite{pldi2022-quartz} on a thousand-gate circuit, there can be up to millions of actions (i.e., possible applications of the transformations) for a given state (i.e., circuit).
The large action space degrades the training efficiency of the RL agent, since a training sample only directly updates the probability of a single action, and exploring a large action space requires a huge amount of training samples.
Moreover, the action space of a state depends on its graph structure, which changes at each step.


\commentout{
\oded{I think we need to remove this paragraph or rewrite it. Currently I don't see how it's related to the challenge of the large action space. The whole reason we're using RL is to find long sequences that sometimes include cost increases. This subsection discusses a specific technical challenge in applying RL to our setting, not the challenge that makes us want to use RL.}
In addition, as introduced in \Cref{sec:intro}, discovering highly optimized circuits requires applying transformations that decrease performance as intermediate steps; however, learning when and where to apply performance-decreasing transformations is non-trivial, since they are only beneficial when combined with other transformations that further improve performance.
}
%


\paragraph{Solution.}

\if 0
A key challenge we must address to enable RL training is how to efficiently construct an action space for a given circuit.
Modern RL algorithms generally require an explicitly constructed action space for a state to choose actions from\ZJ{Need citations}.
The structure of the action space (e.g. continuous, discrete, multi-discrete) profoundly influences the design and performance of the RL algorithm \cite{action_space_shaping}\ZJ{need more citations}.
In quantum circuit optimization, for a given set of transformations $X$ and an input circuit $C$, the action space $A_C$ includes all valid applications of each transformation $x\in X$ on $C$.
Note that the transformation set $X$ does not naturally form $A_C$, in the sense that a transformation can be applied to multiple subcircuits of $C$.
The definition of $A_C$ is straightforward, however, constructing it is non-trivial.
Specifically, as shown in figure \ref{fig:rl_interaction}, in each step $t$ of the decision making process, given circuit $C_t$, the environment (the circuit transformation engine) needs to construct an action space which contains all valid actions which is then provided to the RL agent to choose an action.

A naive approach to constructing $A_C$ is to directly include all valid applications of all transformations.
Easy at the first glance, this is computationally expensive and requires performing pattern matching on up to thousands of transformations on all possible subcircuits of $C$.
Instead of explicitly constructing $A_C$, recent work has also presented {\em sampling}-based approaches that form a sampled action space by randomly selecting transformations and computing possible applications of them~\ZJ{add citations}.
However, existing sampling-based approaches do not apply to quantum circuit optimizations, since on a circuit only a few applications of transformations (among up to millions of candidates) are valid, and even fewer can lead to performance improvement, resulting in a low sampling efficiency.
\fi

\if 0
To select a valid transformation to apply to circuit $C_t$ at time step $t$, the circuit transformation engine needs to build the action space $A_{C_{t}}$.
Given a circuit $C$, building the action space $A_C$ explicitly is expensive.
Note that the transformation set $X$ does not naturally form $A_C$, in the sense that a transformation can be applied to multiple sub-circuits of $C$.
To explicitly construct $A_C$, for each transformation, the circuit transformation engine needs to find {\em all} sub-circuits to which the transformation is applicable. 
\fi

\if 0
To address the aforementioned challenge, we introduce a {\em hierarchical action space}, denoted as $H_C$.
Each action in $H_C$ is identified by a tuple $(p, x)$, where $x$ is the transformation to apply, and $p$ denotes the position of the subcircuit which $x$ is applied on.
This approach decomposes the original action space into two sub-spaces. 
The first is the {\em position space} $P$ to identify different positions in a circuit to apply transformations on, and the second is the {\em transformation space} $X$ that includes all transformations.
$A_C$ can be viewed as a subset of the tensor product of the two sub-spaces: $A_C \subseteq P \bigotimes X$.
\fi 

For a circuit $C$, \sys decomposes its action space $A_C$ into two ``subspaces'': a position space $P_C$ and a transformation space $X$.
$P_C$ includes all gates in $C$, and $X$ contains all transformations. (Recall that a subcircuit matching a transformation can be determined by matching a single gate, see \Cref{sec:background}.)
Under this decomposition,
$A_C$ is a subset of $P_C \times X$,
and we can train separate policies for $P_C$ and $X$.
\commentout{
\oded{I shortened here and left what was here commented out below. Zhihao, please review and accept if you like it.}
}

\if 0
\ZJ{To be removed: Compared to explicitly constructing $A_C$, decomposing it into the position and transformation spaces requires much less amount of computation.
As shown in \Cref{fig:hierarchical_rl}, with action space decomposition, the RL agent first selects a position $p_t$ from the position space $P_{C_{t}}$, and then the circuit transformation engine returns all valid transformations applicable at $p_t$. 
For a given position $p$, discovering all valid transformations only requires $|X|$ pattern matches on the circuit (i.e., one for each transformation). 
In contrast, constructing action space $A_C$ explicitly requires $|P_C|\times|X|$ pattern matches (i.e., one for each $(p,x)$ pair, where $p\in P_C, x\in X$).
}
\fi 

\if 0
\ZJ{<==Old version}
To address the aforementioned challenge, we introduce a {\em hierarchical action space} of transformations.
Each action in the hierarchical action space is specified by a tuple $(p, x)$ where $p$ denotes the position of the subcircuit to be transformed and $x \in \m{X}$ denotes a transformation to apply.
An immediate advantage of this structure is that it offers us an easier way to represent an application of a transformation.
More importantly, by introducing a hierarchical action space, we essentially decompose the original action space into two sub-spaces, namely the position space $\m{P}$ and the transformation space $\m{X}$.
This allows us to decompose the act of selecting an action into two phases, i.e. selecting a position $p \in \m{P}$ and selecting a transformation $t \in \m{T}$.
This largely saves the computation overhead to construct an action space for the RL agent to make a choice.
The reason is that if you first select a position $p$ from the position space $\m{P}$, then all valid transformations at position $p$, which is a subset of $\m{X}$, forms a sub-action space.
To construct this sub-action space, you only need to examine the validity of all  $(p, x)$ pairs where $p$ is fixed and $x \in \m{X}$ to find out which transformations can be apply at position $p$.
This needs $|\m{X}|$ calls of graph pattern matching.
In contrast, with a non-hierarchical action space, to construct an action space you have to examine the validity of all $(p, x)$ pairs where $|\m{P}| \times |\m{X}|$ pattern matching is needed.
Structuring the action space this way makes the overhead of constructing action space unrelated to $|\m{P}|$, i.e., unrelated to the input circuit, which renders action space construction scalable.
Further, the size of the sub-space $\m{X}$ is fixed, mitigating the difficulty in designing the network architecture of the RL agent.
\ZJ{Old version==>}
\fi

\commentout{
\ZJ{May need to move this paragraph to a later section.}
\oded{I think the following can be cut significantly and moved to the previous section. Especially if we have a paragraph there about transformations. If we're tight for space, this really doesn't require more than a few sentences, and we could also cut the example.}
\ZJ{Yes, this has been moved to the background section.}
\ZJ{To be removed:
\sys uses a single gate to locate each application of the transformation on a circuit.
While in general graph transformations a single node cannot deterministically identify how a transformation is applied, this approach is feasible and highly desirable under the quantum setting for two reasons.
First, the source and target graphs of a transformation are connected, since a transformation with a disconnected source (or target) graph can be decomposed into multiple smaller transformations, each of which has a connected source (or target) graph.
Second, the in- and out-edges of multi-qubit gates are indexed.
For instance, the control and target qubits of a {\tt CNOT} are indexed 0 and 1, respectively.
This property guarantees that the orders of in- and out-edges are preserved in graph pattern matching.
As a result, when a gate $g_x$ in transformation $x$ is mapped to a gate $g_C$ in circuit $C$, all other gates in the source graph of the transformation deterministically match gates in $C$ if the transformation is applicable.
In practice, we use the topologically first gate $g_x$ of a transformation $x$ to locate the transformation~\footnote{When there is a tie, we pick the gate with the smallest qubit id.}, that is, when applying $x$ to circuit $C$, if $g_x$ is matched with gate $g_C$, we say that we apply $x$ to $g_C$.
For example, the first transformation in \Cref{fig:long-sequence} is applied to the $T^\dag$ gate in $C_1$, and the remaining four transformations in the figures are applied to the left-most $CNOT$ gate in $S_2$, $S_3$, $S_4$, and $S_5$, respectively.
A key benefit of locating a transformation using a single gate instead of a subcircuit is to simplify the position space $P$, since $P$ is exactly all the gates in a circuit.
As a result, each action in the action space can be represented as $(g, x)$, where $g$ is a gate in circuit $C$ and $x \in X$.
}
}

\if 0
Instead of viewing this task as figuring out the best gate to choose, it seems more appropriate to describe this it as learning a evaluator to score the gates according to the topology of its surroundings, and then choose one of the high-value gates.
Therefore, hierarchical PPO uses a single actor-critic architecture to train both of the policies.
Specifically, the actor network learns the transformation selecting policy, which corresponds to the \actor of \sys, and the critic network acts both as a baseline estimator for the transformation selecting policy and a gate value predictor.
%
\fi

\if 0
\ZJ{Old version <==}
In practice, we find that utilizing the gates in the circuits is a highly desirable choice to represent position, which means that gates can be used to locate applications of transformations and we can use $(g, x)$ to denote an action in the action space.
The graph representation of both the circuits and the source circuit of transformations are connected directed acyclic graphs.
In addition to that, the inedges and outedges of each nodes in these graphs are indexed.
When doing graph pattern matching, the source circuit of a transformation might match multiple subcircuits in a circuit. 
But if we pin a gate $g_x$ in the source circuit of a transformation to a gate $g_c$ in the circuit, the matching will be unique.
This is guaranteed by the data structure of the graph representation for circuits, i.e. connected directed acyclic graph with indexed edges.
\sys always use the first gate (in a deterministic order) of the source circuit of a transformation $x$ as $g_x$ and pin it to a given gate $g_c$ on a circuit $C$ to do the graph pattern matching.
If there is a match, then we say that $x$ can be applied to gate $g_c$ and vice versa.
All valid application of transformations can be positioned this way, because in any application, the first gate $g_x$ in the source circuit of a transformation must be matched to a gate $g_c$ in the circuit, then this application can be represented as $(g_c, x)$.
Locating transformations with gates rather than locating transformation to some subcircuit is beneficial because this makes it easy for \sys to get the position sub-space of the hierarchical action space which is simply all gates in a circuit.
In summary, we form a hierarchical action space where each action can be represented to a tuple $(g, x)$ where $g$ is a gate in the input circuit and $x$ is a transformation.
To select an action at a certain circuit $C$, you can first select a gate $g$ and then construct a sub-action space to select a transformation $x$.
The sub-action space consists of all transformations that can be applied to $g$, and it can be constructed with a fixed number ($|\m{X}|$) of calls of graph pattern matching, 
\ZJ{Old version ==>}
\fi

\if 0
A transformation can be applied to multiple gates. For each transformation, we select a gate from its source circuit and use only the selected gate to locate the application of the transformation.
This approach brings two potential benefits. First, it suffices to use a single gate of a transformation to deterministically identify how to other gates of the transformation are matched.
Second, this simplifies the representation of the first action space, which directly includes all gates of a circuit.
\ZJ{Describe the benefits of this approach.}
In this hierarchical approach, the second action space is simply the set of transformations provided to \sys while the first action space includes all gates of a circuit.
This allows the learned RL policy to generalize to unseen circuits.
\fi

\if 0
\paragraph{Generalization across circuits.}
\ZJ{This paragraph can be removed.}
In contrast to prior work that focuses on a single circuit, the objective for \sys is to learn a generic agent that can optimize the performance of a variety of circuits: $\mathbb{E}_{C\sim\mathcal{C}}[R(C)]$, where $\mathcal{C}$ denotes an empirical distribution of quantum circuits.
A key challenge we must address in learning a generalizable agent is dealing with the action space specific to each circuit.
Specifically, the set of applicable transformations and the places each transformation can be applied depend on the circuit, and empirically the size of the action space is proportionally to the size of the circuit.
\ZJ{To be removed==>}
\fi


\if 0
Oded suggested outline:
Sec 3 (Background):
* quantum circuit as graph, transformations.
* RL
* PPO

Sec 4: Challenges and high-level approach:
* Explain challenges in "naively" applying PPO to our setting:
 - state has nontrivial structure - it is a large graph
 - action space is too large
* List our insights for solving these challenges:
 - locality - interesting design choice: we don't have a high-dimensional representation of the state, only of the local neighborhood of each node
 - hierarchical PPO
   + split selection into selecting a gate and then selecting a transformation (currently this is in sec. 3)
   + same network is both actor network for gate selection and critic network for transformation selection
Section 5: Neural Architecture
* present architecture in detail (fig. 6)
Section 6: Training Procedure
* present the training process (sec 5.2-5.4)
 - revisit the locality issue
 \fi
 
\subsection{Challenge 2: State Representation}
At the core of most RL algorithms is a representation of states as high-dimensional vectors.
However, unlike most RL tasks whose states have a uniform structure, a state in our setting is a quantum circuit whose size and topology depend on the input circuit and may change at each step during the optimization process.
Therefore, designing a uniform state representation for quantum circuits is challenging.
A straightforward approach would be to directly represent each quantum circuit as a high-dimensional vector.
However, due to the diversity of quantum circuits, this approach leads to learned representations that are highly tailored to the circuits used in training and do not generalize well to  unseen circuits.

\paragraph{Solution.}
A key insight for addressing this challenge is leveraging the {\em locality} of circuit transformations, that is, while the overall optimization strategy depends on the entire circuit, we hypothesize that the decision of applying a transformation at a gate can be largely guided by the local environment of the gate.
Based on this hypothesis, we design a neural architecture that relies on local decision making when selecting a gate to apply a transformation. 
In particular, \sys uses a $K$-layer graph neural network (GNN) to represent the $K$-hop neighborhood of each gate (see \Cref{def:k_hop_neighborhood}).
While each representation only encodes a local subcircuit, combining all gates' representations allows \sys to collectively represent an entire circuit. 
A key advantage of our approach is that the representations generated by the GNN is independent of the circuit size and thus can generalize to circuits at different time steps in the optimization process.
While our approach localizes the decision making, it still allows global circuit-wide guidance, since we fine-tune the RL agent (including the weights of the GNN) when optimizing a circuit (see \Cref{sec:training}). With this design, we aim to achieve a good balance between local and global decision making in the optimization process.

\commentout{
\oded{I think we need to say something here about how our design allows some global view (unless you think it's not correct). How about the following:
While our approach localizes the decision making, it still allows some global circuit-wide guidance, since we fine-tune RL agent (including the the weights of the GNN) for each input circuit. With this desing, we aim to achieve a good balance between local and global decision making in the circuit optimization process.
}
\ZJ{I like this suggestion. Added this sentence to the end of the previous paragraph.}
}

\commentout{
Challenge 1: Action space
The action space is large and also not fixed (varies with the circuit structure).
Solution: split the choice into first choosing a gate and then choosing a transformation. Then, the second part is fixed.
Challenge 2: State representation
The classic RL and PPO would be to try to train some high-dimensional representation of an entire circuit.
The state in our setting is a quantum circuit that we represent as a graph. The graph size and structure is different for each input circuit, and at each step in the optimization process of each input circuit. Finding a uniform state representation is therefore not straightforward, and it might miss some generalization opportunities. Moreover, an important part of the action space, namely selecting a gate, is closely tied to the state: the state is a graph, and selecting a gate means selecting a node in that graph. Furthermore, while the overall optimization sequence depends on the entire circuit, we expect/hypothesize that often the decision is largely guided by the local environment of the affected node.
For these reasons, we desing a neural architecture that relies on \emph{local} decision making when selecting a gate to apply a transformation. The key idea is to train a graph neural network (GNN) that collects information from a $K$-hop neigborhood of a gate ($K=6$ in our experiments) and then assigns a score to that gate, that is meant to represent the potential accumelated reward that would result from selecting that gate for optimization. We note that while this localizes the decision making, our design still allows some global circuit-wide information to ultimately affect the decisions, since during the training process we fine-tune the weights in the GNN for each circuit, so there is some feedback from distant parts of the circuit that affects the GNN weights.
The next section explains how we combine our ideas for splitting the actions space, and for embracing a local view of the circuit graph into our neural architecture.
5.
Combining our choice to split the actions space to choosing a gate and a transformation, and our idea to localize the gate selection, we arrive at our novel neural design: hierarchical PPO.
....
Use the gate-selection network as the critic network for the transformation selection process.
5.1 ..
5.2 ..
}

\commentout{
\ZJ{Old version: TO BE REMOVED::
Directly applying PPO to learn a representation for an entire circuit (i.e., state) results in suboptimal performance and limited generalizability.
\sys is designed as a learning-based optimizer that can generalize to circuits with arbitrary number of gates and qubits.
However, if we directly use PPO to learn to represent a quantum circuit as a high-dimensional vector, this learned representation will be highly tailored to the circuits used in training and cannot generalize to previously unseen circuits.
\paragraph{Solution:} A key insight for addressing this challenge is leveraging the {\em locality} of quantum circuit transformations --- each transformation only affects a fixed number of locally connected gates.
Therefore, instead of directly learning a representation for an entire circuit, \sys learns to represent a local subcircuit for each gate. Specifically, for each gate $g$ in the circuit, \sys uses a $K$-layer graph neural network (GNN) to represent $g$'s $K$-hop neighborhood, which includes all gates within $K$ hops of $g$ and all edges between these gates. These per-gate representations are used by \sys to identify optimization opportunities and perform necessary transformations to realizes the optimizations.
Compared to directly representing an entire circuit, representing a local subcircuit for each gate has two advantages.
First, our approach significantly reduces the complexity of the RL task, since the learned representation only needs to capture optimization opportunities around a single gate instead of capturing circuit-wide optimizations. These gate-level representations can collectively identify potential optimizations of a circuit.
Second, our approach allows the learned RL-agent to generalize to previously unseen circuits.
}
}

\subsection{\sys's Approach}\label{subsec:approach}
Combining our solutions to the two challenges discussed above, we propose a hierarchical approach to optimizing quantum circuits using RL.
\sys's neural architecture is outlined in \Cref{fig:overview}.
The first stage in processing the current circuit $C_t$
is the {\em gate representation generator}, which is a graph neural network (GNN) that computes a learned vector representation for the $K$-hop neighborhood of each gate in $C_t$ (see \Cref{subsec:design:gnn}).
Next, based on these learned representations, \sys's {\em gate selector} chooses a gate $g_t$ using a learned {\em gate-selecting policy}, denoted  $\pi_g(\cdot|C_t;\theta_g)$, which is a probability distribution over all gates in $C_t$ (see \Cref{subsec:design:critic}).
Finally, the learned representation of $g_t$ is fed into \sys's {\em transformation selector}, which selects a transformation using a learned  {\em transformation-selecting policy}, denoted  $\pi_x(\cdot|C_t, g_t;\theta_x)$, which is a probability distribution over all valid transformations at $g_t$ (see \Cref{subsec:design:actor}).

The gate- and transformation-selecting policies are trained jointly in \sys with a combined actor-critic architecture.
Specifically, the actor network learns the transformation-selecting policy, and the critic network acts both as a value estimator for the transformation-selecting policy and as a predictor for the gate-selecting policy.
In other words, the transformation-selecting policy is the {\em only} policy whose parameters are updated by gradient ascent and whose actions (i.e., transformation selections) are reinforced.
The critic network learns to estimate the value of the state derived 
by applying the transformation-selecting policy, which is the $K$-hop neighborhood of a gate.
The gate-selecting policy evaluates all gates with the value estimator of the transformation-selecting policy and selects (with high probability) a high-value gate.
Intuitively, 
the value estimator is suitable to form the gate-selecting policy, because high value indicates high optimization opportunity.

Compared to a straightforward PPO approach, where the actor learns both $\pi_g$ and $\pi_x$, and the critic learns to estimate the value of an entire circuit, our method has fix-sized state (the $K$-hop neighborhood) for both the actor and critic, and a fixed and relatively small action space for the transformation-selecting policy.
These advantages ultimately make our policies easier to train.
However, our specialized architecture requires a different advantage estimator from standard PPO, which we develop in \Cref{sec:training}.

\commentout{
\ZJ{To be removed:
We introduce {\em hierarchical PPO}, a variant of the proximal policy optimization (PPO) algorithm~\cite{ppo}, to decompose policy $\pi\big((g,s) | C ; \theta\big)$ into a gate-selecting policy $\pi_g(g | C; \theta_g)$ and a transformation-selecting policy $\pi_x(x | C, g; \theta_x)$, each of which involves a much smaller action space and thus easier to learn.
In particular, the transformation-selecting policy only learns to choose the best transformation to apply for a given circuit and gate, while the gate-selecting policy only learns to choose a gate where optimization is likely to take place under the current transformation-selecting policy.
Another key insight of our hierarchical PPO algorithm is to use a single actor-critic architecture to train both policies.
Specifically, the actor network learns the transformation-selecting policy, and the critic network acts both as a value estimator for the transformation-selecting policy and as a predictor for the gate-selecting policy.
}
}
\section{\sys's Neural Architecture\label{sec:design}}

\begin{figure*}
    \centering
    \includegraphics[scale=0.28]{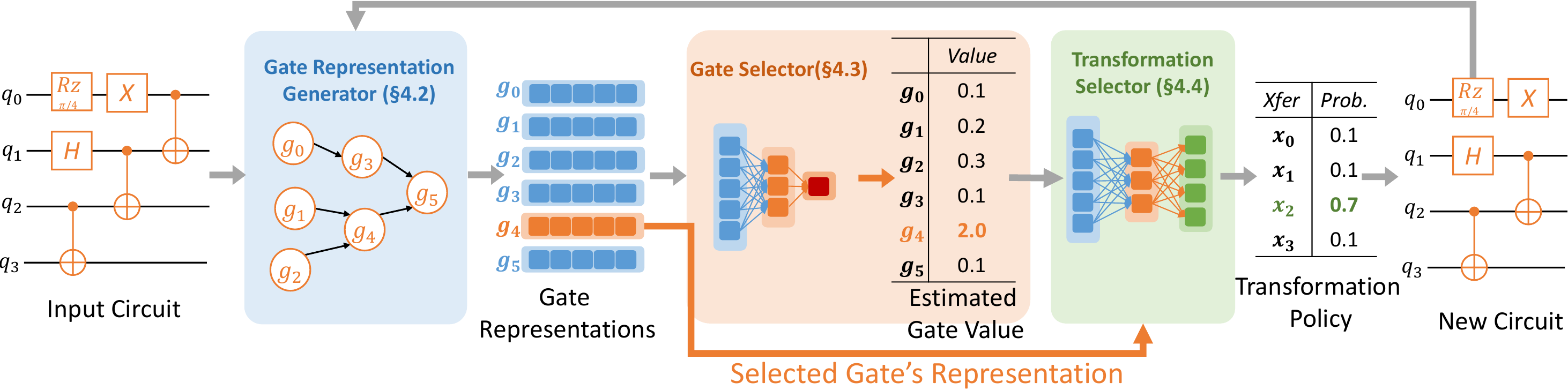}
    \vspace{-2mm}
    \caption{The neural architecture of \sys's RL agent. The arrows indicate the control flow.}
    \label{fig:overview}
\end{figure*}

We introduce \sys's gate representation generator in \Cref{subsec:design:gnn}, the gate selector in \Cref{subsec:design:critic}, and the transformation selector in \Cref{subsec:design:actor}.



\commentout{
\ZJ{To be removed:
\Cref{fig:overview} shows the neural architecture of \sys's RL agent.
To optimize a circuit, \sys first constructs its graph representation which is then fed into the {\em gate representation generator}.
The gate representation generator is a {\em graph neural network} (GNN) trained to compute a vector representation for each gate $g$.
The representation captures a subgraph that includes $g$ and its $K$-hop neighbors, where $K$ is the number of layers in the GNN.
Based on the learned representation, \sys's {\em \critic} predicts a value $V_g$ for each gate $g$, which is used to chose a gate to apply a transformation.
$V_g$ is introduced in \Cref{subsec:design:critic}.
After a gate is chosen by the \critic, its learned representation is fed into \sys's {\em \actor}, which predicts a probability distribution across all transformations from which a transformation is sampled to apply.
We introduce the graph neural network, the \critic and the \actor in \Cref{subsec:design:gnn}, \Cref{subsec:design:critic}, and \Cref{subsec:design:actor}, respectively.
}
}

\if 0
\ZL{Under the setting of the $search-based$ approaches, such as quartz, the problem of quantum circuits optimization can be viewed as follows.
Given a quantum circuit, there exists multiple potential optimization opportunities.
However, these opportunities are not trivial, which means that in order to realize an optimization, the optimizer has to perform a sequence of transformations instead of a single transformation.
Therefore, a search algorithm is needed to explore the search space for the right sequence.
By adopting RL in this problem, our goal is to learn to predict the right position that having optimization opportunities first, and then to perform the right sequence of transformations to realize the optimization.
The gate selection (or value, critic) network learns to evaluate the potential of optimization for each gate. (By our design) It propagate the reward signal from the end of the sequence to the beginning, enabling the agent to detect the optimization opportunity several transformations away (this also benefits the search).
The policy (actor) network learns how to choose the right transformation in different cases to realize the optimization.
}
\fi

\if 0
\subsection{Problem Formulation\label{subsec:problem}}
We formulate quantum circuit optimization as a reinforcement learning task where a learned agent interacts with an input circuit over a number of discrete time steps.
At each time step $t$, the agent receives a quantum circuit $C_t$ and selects a transformation $x_t$ from a given set of possible transformations $\m{X}$ according to its policy $\pi$, where $\pi$ is a mapping from circuits $C_t$ to transformations $x_t$.
As a result, the agent receives (1) a new circuit $C_{t+1}$ obtained by applying transformation $x_t$ to circuit $C_t$, and (2) a scalar reward$r_t = \er{Cost}(C_t) - \er{Cost}(C_{t+1})$, where $\er{Cost}(C)$ is a user-provided cost metric for evaluating the performance of circuit $C$.
This process continues until the agent reaches a terminal circuit and chooses not to apply any further transformations.
The accumulated rewards $R(C_t) = \sum_{k=t}^{\infty}{r_k}$ is the performance difference between the input and optimized circuits, and the goal of the agent is to maximize the expected rewards $\mathbb{E}[R(C_t)]$ from each circuit $C_t$.
\fi

\subsection{Gate Representation Generator\label{subsec:design:gnn}}

\if 0
\begin{figure}
    \centering
    \includegraphics[scale=0.4]{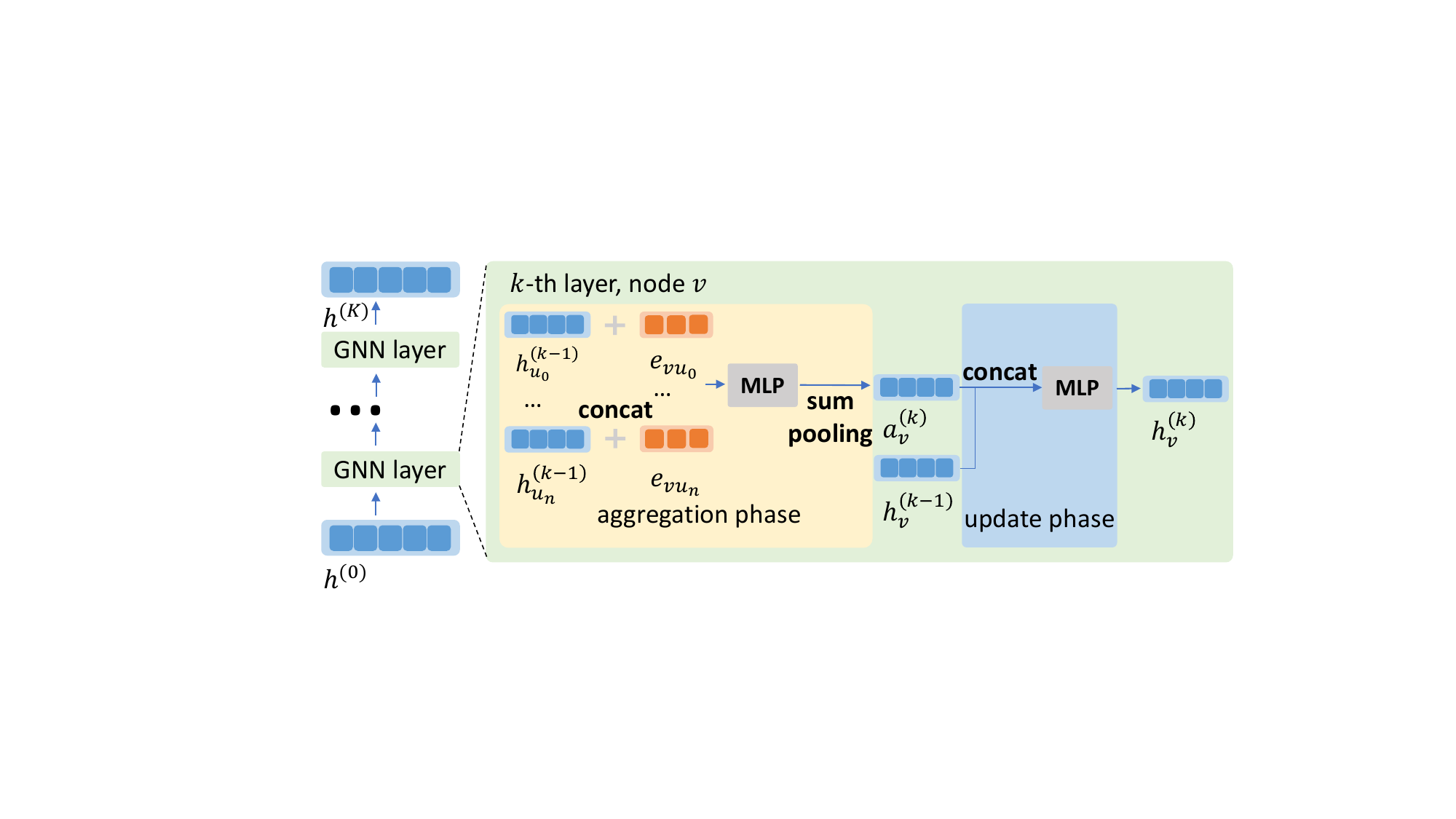}
    \caption{\sys's GNN architecture.}
    \label{fig:gnn}
\end{figure}
\fi


For an input circuit $C$, \sys uses a graph neural network (GNN) with $K$ layers to learn to represent the \emph{$K$-hop neighborhood} of each gate in $C$ as a high-dimensional vector. Our GNN architecture follows GraphSAGE~\cite{graphsage}.

\begin{definition}[$k$-hop neighborhood]
\label{def:k_hop_neighborhood}
For a node $v$ in graph $G$, its {\em $k$-hop neighborhood}, denoted $G^{(k)}(v)$, is the subgraph of $G$ that includes all nodes within $k$ (undirected) hops from $v$.
\commentout{
\oded{I think we can remove this( but if we keep it we should fix it to clarify that the distance is in the undirected graph---a node u can be in the 3-hop neighbourhood of v even if there isn't a path from u to v or from v to u):
and all edges between these nodes:
\begin{itemize}
    \item A node $u \in G^{(k)}(v)$ iff $u \in G$ and the distance between $u$ and $v$ is at most $k$.
    \item An edge $(u_1, u_2) \in G^{(k)}(v)$ iff $u_1 \in G^{(k)}(v)$ and $u_2 \in G^{(k)}(v)$.
\end{itemize}
}
}
\end{definition}

\oded{Explain that actually the representation we learn doesn't have access to all edges in the $k$-hop neighborhood.} \ZJ{I don't think I understand what Oded said here...}


\Cref{fig:graph_representation} shows the $k$-hop neighborhood for a selected gate (identified by $G^{(0)}(v)$) and different values of $k \in \{1, 2, 3\}$. 

The GNN architecture takes as inputs (1) a circuit represented as a graph, (2) gate-level features (i.e., the type of each gate), and (3) edge-level features (i.e., the direction of each edge and the qubits it connects to).
%
Let $h_g^{(k)}, k \in \{1, \dots, K\}$ denote the representation of the $k$-hop neighborhood of gate $g$ outputted by the $k$th GNN layer, let $h_g^{(0)}$ denote the input features of the first GNN layer for $g$ (i.e., an embedding of $g$'s gate type), and let $e_{uv}$ denote the edge features of the edge between node $u$ and $v$.
$h_g^{(k)}$ is computed by taking $h_u^{(k-1)}$ and $e_{ug}$ as inputs, where $u$ is a neighbor gate of $g$.
Each GNN layer includes an {\em aggregation} phase, which first gathers the representations of each gate's neighbors from the previous GNN layer, and an {\em update} phase, which computes a new representation for each gate by combining its previous representation and the aggregated neighborhood representations.


\paragraph{Aggregation phase.}
Let $N(g)$ denote gate $g$'s neighbors in the graph (i.e. a set of gates that share an in- or out-edge with $g$).
For each gate $g$, the aggregation phase at layer $k$ takes as inputs $h_u^{(k-1)}$ and $e_{ug}$ ($u\in N(g)$) and computes an aggregated representation of $g$'s neighbors $a_{g}^{(k)}$ with a multi-layer perceptron (MLP)~\cite{gardner1998artificial}.
The neural architecture of the aggregate phase is formalized as follows:
\begin{equation}
    a_g^{(k)} = \sum_{u \in N(g)} \sigma\big(W_{a}^{(k)} \cdot \text{concat}(h_u^{(k-1)} , e_{ug}) + b_a^{(k)}\big)
\end{equation}
where $W_a^{(k)}$ and $b_a^{(k)}$ denote the weights of the MLP in the aggregation phase of the $k$-th layer, and $\sigma(\cdot)$ is the ReLU function~\cite{ReLU}.



\paragraph{Update phase.}
For each gate $g$, the update phase computes a new representation $h_g^{(k)}$ by combining $g$'s representation from the previous GNN layer (i.e., $h_{g}^{(k-1)}$) and the aggregated neighbor representation (i.e., $a_{g}^{(k)}$). 
The neural architecture of the update phase is formalized as follows:
\begin{equation}
    h_g^{(k)} = \sigma\big(W_u^{(k)}\cdot \text{concat}(h_g^{(k-1)} , a_g^{(k)})\big)
\end{equation}
where $W_u^{(k)}$ denotes the weight matrix of the MLP in the update phase.

Note that using more GNN layers allows \sys to represent a larger neighborhood of each gate but introduces more trainable parameters, which requires more time and resource to train (see \Cref{sec:training}).
\Cref{sec:eval} analyzes the choice of $K$.




\subsection{Gate Selector\label{subsec:design:critic}}

The \critic is composed of two parts: a {\em gate value predictor} and a {\em gate sampler}.
The gate value predictor predicts the on-policy value $V^{\pi_x}(C, g)$ (see \Cref{subsec:update}) of gate $g$ on circuit $C$ for the transformation selecting policy $\pi_x$.
The gate value predictor takes as an input $h^{(K)}_g$, which represents the $K$-hop neighborhood of $g$ on circuit $C$, and outputs $V^{\pi_x}(C, g;\theta_g)$, which approximates $V^{\pi_x}(C, g)$, and $\theta_g$ denotes the trainable parameters of the predictor.
The predictor uses a multi-layer perceptron (MLP)~\cite{hinton1987learning} in our current implementation.

Our gate selecting policy $\pi_g$ is formed by applying a temperature softmax~\cite{he2018determining} to the outputs of the gate value predictor, which is parameterized as follows:
\begin{equation}
    \pi_g(g | C; \theta_g) = \frac{exp\big(V^{\pi_x}(C, g; \theta_g)/t\big)}{\sum_{g'\in C}{exp\big(V^{\pi_x}(C, g'; \theta_g)/t\big)}}
\end{equation}
where $\pi_g(g|C;\theta_g)$ denotes the probability of choosing $g$ in circuit $C$, and $t$ is a temperature parameter for the softmax function.
The temperature $t \in (0, +\infty)$ balances exploration and exploitation.
Specifically, when $t$ is larger, $\pi_g$ selects gates with increased randomness and becomes more explorative.
On the other hand, when $t$ becomes smaller, $\pi_g$ becomes more exploitative and tends to select the gate with the highest estimated value. 
Balancing exploration and exploitation across different circuits requires circuit-specific temperatures. 
For a specific circuit $C$, we set the temperature $t$ as
\begin{equation}
\label{eq:temperature}
    t = 1/ln\frac{\lambda(|C| - 1)}{1-\lambda}, \lambda \in (0, 1)
\end{equation}
where $|C|$ is the number of gates in circuit $C$ and $\lambda$ is a measure of exploitation.
Specifically, we set $t$ such that even if there is only one gate with a value closes to 1 (representing an optimization opportunity to reduce cost by 1), and the values of all other gates are close to 0 (representing no optimization opportunity), the \critic samples the high-value gate with probability $\sim \lambda$.

\oded{I think \Cref{eq:temperature} needs to be explained more.}



\subsection{Transformation Selector\label{subsec:design:actor}}
Given a circuit $C$ and a gate $g$, \sys's {\em \actor} chooses a transformation to apply at $g$.
The \actor is an MLP, which takes as an input the representation of the selected gate $h^{(K)}_g$, and outputs a probability distribution $\pi_x(\cdot | C, g; \theta_x)$ over the entire set of transformations $X$, where $\theta_x$ denotes the trainable parameters of the transformation selector.
The final output of the MLP is followed by a masked softmax layer.
The mask is generated by the circuit transformation engine by checking every transformation in $X$ to figure out which of them can be applied to gate $g$.
The masked softmax function filters out invalid transformations.




\if 0
\ZL{
In PPO, the {\em advantage} of an action $a$ under state $s$ according to policy $\pi$ is defined as $A^{\pi}(s, a) = Q^{\pi}(s, a) - V^{\pi}(s)$, and in our case it is defined as $A^{\pi_x}(C, g, x) = Q^{\pi_x}(C, g, x) - V^{\pi_x}(C, g)$.
It evaluates how much better it is to perform a transformation $x$ at gate $g$ in circuit $C$ and act according to the policy $\pi_x$ from then on compared with randomly sample a transformation from the distribution $\pi_x({\cdot}|C, g)$ and act according to the policy from then on.
An {\em advantage estimator} $\hat{A}^{\pi_x}$ is needed in PPO to update the network.
There are multiple {\em advantage estimator} that estimate {\em advantage} from collected data in literature, such as generalized advantage estimation (GAE) in~\cite{gae}, and finite-horizon estimators in~\cite{a3c}. However, \sys uses one of the simplest approach, that is
\begin{equation}
    \hat{A}^{\pi_x}(C, g, x) = \hat{Q}^{\pi_x}(C, g, a) - \hat{V}^{\pi_x}(C, g)
\end{equation}
We use $\hat{V}^{\pi_x}(C, g) = V_\theta(C, g)$ given by our \critic to estimate $V^{\pi_x}(C,g)$, in line with many deep RL works.
In common RL settings, $Q^{\pi}(s, a)$ is defined as
\begin{equation}
    Q^{\pi}(s, a) = r(s, a) + \gamma \mathop{E}\limits_{a'\sim \pi}(Q^{\pi}(s', a')) = r(s, a) + \gamma V^{\pi}(s')
\end{equation}
where $\gamma$ is a discount factor that discounts future rewards, and $s'$ denotes the next state after the state transition.
Note that $r(s, a)$ is given by interacting with the environment, which turns estimating $Q^{\pi}(s, a)$ to estimating $V^{\pi}(s')$.
}

\ZL{
A challenge here is that we do not directly have the {\em next state} $s'$ in our environment setting.
Given the input form of our value function, it seems that we should define the next state as a circuit-gate pair $(C', g')$ where $g'$ is a gate in the new circuit $C'$ created after the transformation.
However, a transformation transforms a subcircuit in a local area to another equivalent subcircuit.
It is a local area with multiple gates that is influenced, but not a single gate, which means that it is neither reasonable nor possible to specify a single gate $g'$ in $C'$ such that $s' = (C', g')$ becomes the next state.
To tackle this issue, we avoid directly defining a next state $s'$ and estimating $V^{\pi_x}(s')$. 
Instead, we use a surrogate value $V'$ in place of $V^{\pi_x}(s')$ which conforms with the meaning of $V^{\pi_x}(s')$ and is compatible with our environment setting.
The surrogate value $V'$ is defined as 
\begin{equation}
    V' = \max_{\forall g' \in \Call{Local}{C, g, x}}{V^{\pi_x}(C', g')}
\end{equation}
where \Call{Local}{\cdot} denotes the local area that is influenced by the transformation.
This design is based on a key observation that, generally, in a long-sequence optimization process, transformations are dependent to their predecessors, i.e. all transformations before a particular transformation are its enablers.
To be more specific, a transformation $x_0$ enables another transformation $x_1$ if and only if $x_1$ is located in the local area influenced by $x_0$.
For example, in figure~\ref{fig:long-sequence}, transformation $T_2$ is enabled by transformation $T_1$, because $T_2$ cannot be applied if $T_1$ have not influenced that area.

Our surrogate value considers all the gates in the local area and returns the maximum on-policy value among them, which reflects the influence and optimization opportunities that a transformation brings to the area.
Therefore, our {\em advantage estimator} is given by
\begin{equation}\label{compute_adv}
\begin{aligned}
    \hat{A}^{\pi}(C, g, x) = r(C, g, x) + \gamma V' - \hat{V}^{\pi}(C, g)
\end{aligned}
\end{equation}
}

\ZL{
}

\ZL{
Argue against drawbacks: 1) max aggregation is not able to reflect multiple optimization opportunities, an justification: we form a lower bound 2) some nodes are not influenced by xfer but included in local(), a justification: trade-off
}




\ZJ{Not sure if this is the right place to mention this}
Note that both the node reward predictor and transformation selector take $h^{(K)}(v)$ as inputs and make prediction based on the {\em local} $K$-hop neighborhood of each node $v$, ignoring any nodes more than $K$ hops away from $v$. We discuss alternative design choices in Section xxx.

\fi

\section{Training and Inference Methodology \label{sec:training}}

\begin{algorithm}[t]
    \caption{\label{alg:hier_ppo}\sys's RL-based circuit optimization algorithm.
    $B$ and $T$ are hyper-parameters that specify the number of trajectories to collect in each training iteration and the maximum number of transformations in each trajectory.
    {\tt NOP} is a special transformation that stops the current trajectory when selected.
    \sys uses $\alpha$ to control data collection (see \Cref{subsec:data}).
    }
    {
        \small
        \begin{algorithmic}[1]
            \State {\bf Inputs:} A circuit $C_{\er{input}}$
            \State {\bf Output:} An optimized circuit
            \State $\m{C} = \{C_{\er{input}}\}$ \Comment{\em $\m{C}$ is the initial circuit buffer}
            \For{iteration = 1, \dots}
            \State \Comment{Training data collection}
            \label{line:data_begin}
            \State $\m{R} = \emptyset$ \Comment{Clear the rollout buffer in each iteration}
            \For{j = 1, \dots, B}
            \State{Sample an initial circuit $C_0^{(j)}$ from $\m{C}$ (see \Cref{subsec:data})}
            \For{t = 1, \dots, T
            }
            \State{Compute gate representations $h^{(K)}_g$ for $g\in C_{t-1}^{(j)}$ }
            \State{Selects gate $g_t^{(j)}$ using $\pi_g(\cdot|C_{t-1}^{(j)};\theta_g)$}\label{line:gate_select}
            \State{Selects transformation $x_t^{(j)}$ using $\pi_x(\cdot|C_{t-1}^{(j)}, g_t^{(j)};\theta_x)$}\label{line:xfer_select}
            \State{Generate new circuit $C_{t}^{(j)}$ by applying $x_t^{(j)}$ at $g_t^{(j)}$}
            \State{Compute reward $r_t^{(j)} = \Call{Cost}{C_{t - 1}^{(j)}} - \Call{Cost}{C_{t}^{(j)}}$}
            \State{$\m{R} = \m{R} \cup \big\{
            \big(C_{t-1}^{(j)}, C_{t}^{(j)}, g_t^{(j)}, x_t^{(j)}, r_t^{(j)},
           V^{\pi_x}(C_{t-1}^{(j)}, g_t^{(j)}; \theta_g), \,
           \pi_x(x_t^{(j)}|C_{t-1}^{(j)}, g_t^{(j)}; \theta_x)\big)\big\}$}\label{line:rollout}
            \If{$\Call{Cost}{C_{t}^{(j)}} \le \Call{Cost}{C_{0}^{(j)}}$} \label{line:initial_circuit_buffer_start}
                \State{$\m{C} = \m{C} \cup \{C_{t}^{(j)}\}$}
            \EndIf \label{line:initial_circuit_buffer_end}
            \If{$x_t^{(j)} = {\tt NOP} \vee \Call{Cost}{C_t^{(j)}} > \alpha \cdot \Call{Cost}{C_{\er{input}}}$}\label{line:stop_cost}\label{line:stop_nop}
                \State {\bf break} \Comment{End the trajectory}
            \EndIf\label{line:data_end}
            \EndFor
            \EndFor
            \State \Comment{Agent update}\label{line:update_begin}
            \State{Update $\theta_g, \theta_x$ using SGD 
            (loss given by \cref{loss}) for $M$ epochs}\label{line:update_end}
            \EndFor
        \State {\bf return} $\arg\!\min_{C\in\m{C}}{\Call{Cost}{C}}$
        \end{algorithmic}}
\end{algorithm}

To train \sys's neural architecture with PPO, \Cref{subsec:hae} introduces {\em hierarchical advantage estimator}, a novel approach to estimating the actions' advantages in our problem setting.
\Cref{alg:hier_ppo} lists \sys's RL-based optimization algorithm, which optimizes a circuit by training the RL agent.
A training iteration of \sys consists of two phases: {\em data collection}, which uses the current RL agent to generate trajectories for RL training (line \ref{line:data_begin}-\ref{line:data_end}), and {\em agent update}, which uses gradient descent to update the agent's trainable parameters (line \ref{line:update_begin}-\ref{line:update_end}).
The two phases are introduced in \Cref{subsec:data,subsec:update}, respectively.
Finally, \Cref{subsec:pretrain} discusses \sys's combination of pre-training and fine-tuning to optimize an input circuit.




\if 0
Quartz~\cite{pldi2022-quartz} discovers 8664 transformations for the Nam gate set~\cite{nam2018automated}.
For a circuit with a thousand gates, there are more than 8 millions possible application of transformations.
Even if we filter out all invalid actions, there are still thousands of actions left (e.g. for the original {\tt adder\_8} circuit with 900 gates, there exists 5626 valid transformations to apply across the circuit).
The size of the action space degrades the efficiency of training.
This is because a single sample only directly increases or decreases the probability of a single action, and it takes more sampling for the agent to comprehend a larger action space.
Additionally, most of the transformations do not improve the performance directly.
Thus, the agent needs to learn the dependencies in a sequence of transformations that result in a more optimized circuit, not to mention that in some cases there are some cost-increasing transformations in the sequence that ultimately leads to a lower-cost circuit.
\fi

\subsection{Hierarchical Advantage Estimator}
\label{subsec:hae}
\if 0
\ZL{
mention our actor-critic architecture \\
say that we need to estimate the advantage of $\pi_x$\\
challenge is that this is local but not global, so we cannot use methods in original ppo\\
so we use our HAE
}
\fi

As described in \Cref{subsec:approach,sec:design}, \sys uses a combined actor-critic architecture to jointly train two policies: the gate-selecting policy $\pi_g$, which is directly approximated using the values $V^{\pi_x}(C,g)$; and the transformation-selecting policy $\pi_x$, on which we apply RL training with an adaptation of PPO.
For the transformation-selecting policy $\pi_x$, its input is the embedding of the $K$-hop neighborhood of a gate $g$ on circuit $C$ (i.e., $C^{(K)}(g)$), and its output action is an applicable transformation $x$ at gate $g$.
To update $\pi_x$ using policy gradient, a key challenge is to estimate the advantage of applying transformation $x$.
In the canonical framework of PPO, the advantage $\hat{A}_t$ of applying $x$ at step $t$ over a sampled trajectory $\tau$ is estimated by the difference between the return $R_t$ and the value function, namely $\hat{A}_t=R_t-V^{\pi_x}(C,g)$.
However, such a straightforward approach can be problematic in our setting. The value function $V^{\pi_x}(C,g)$ is computed using the GNN embedding over the $K$-hop neighborhood of gate $g$, so it represents the \emph{local} value for the neighborhood of $g$ rather than the \emph{global} value of the entire circuit $C$. 
Note that when generating a trajectory $\tau$ during RL training, our hierarchical policy may choose an arbitrary gate $g$ according to $\pi_g$ to apply a transformation, so the following steps in $\tau$ after $x$ is applied to $g$ can involve gates that are arbitrarily far away from $g$. Accordingly, the trajectory return $R_t$ is in fact estimating the \emph{global} return over the circuit $C$ rather than the \emph{local} return over the $K$-hop neighborhood. Therefore, the \emph{multi-step} trajectory return $R_t$ may not be the most appropriate choice for advantage estimation.

In order to obtain an accurate \emph{local} advantage, we propose a \emph{hierarchical advantage estimator}, which is based on a \emph{1-step} return estimation given by
\begin{equation}
\label{eqn:hae}
\hat{A}(C, g, x) =  r(C, g, x) + \gamma\left( \mathop{max}\limits_{g'\in \Call{IG}{\ell, C, g, x}}V^{\pi_x}(C',g')\right) - V^{\pi_x}(C,g),
\end{equation}
%
\if 0
\begin{equation}
\label{eqn:hae}
    \hat{Q}^{\pi_x}(C(g), x) = r(C, g, x) + \gamma\max_{g' \in \Call{IG}{\ell, C, g, x}}{V^{\pi_x}(C'(g'))},
\end{equation}
\fi
where $r(C,g,x)$ is the reward of applying transformation $x$ to gate $g$ of circuit $C$, $C'$ denotes the new circuit obtained by applying the transformation, and $\Call{IG}{\ell, C, g, x}$ is the $\ell$-hop influenced gates of this transformation (defined blow) to ensure locality.

\begin{figure}
    \centering
    \includegraphics[scale=0.29]{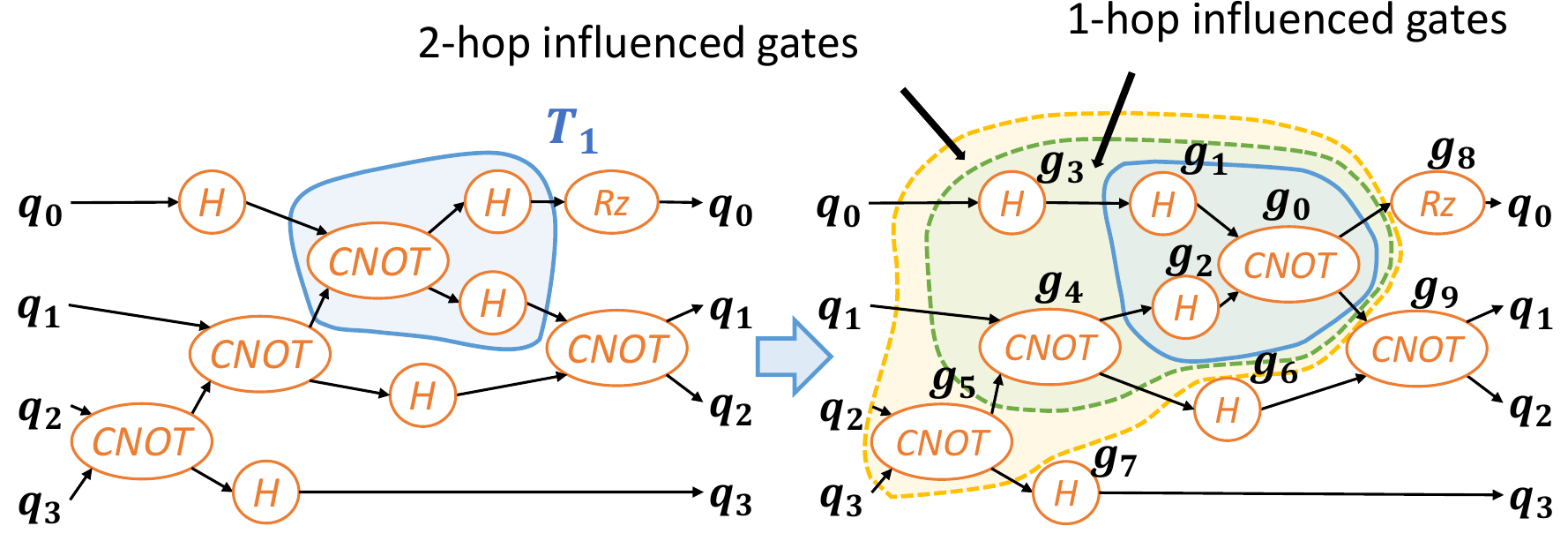}
    \vspace{-2mm}
    \caption{The $\ell$-hop influenced gates of applying the graph transformation in \Cref{fig:graph_subst}.}
    \label{fig:local}
\end{figure}




\begin{definition} [$\ell$-hop influenced gates]
\label{def:influenced_gates}
For a transformation $x$ applied at gate $g$ of circuit $C$, its \emph{$\ell$-hop influenced gates}, denoted $\Call{IG}{\ell, C, g, x}$, is gate set of the new circuit $C'$ that includes (1) all the new gates introduced by the transformation, and (2) all $\ell$-hop (directed) predecessors of these gates.
\end{definition}
$\Call{IG}{\ell, C, g, x}$ includes all gates in the new circuit $C'$ influenced by transformation $x$, which can fall two categories. First, all gates in the target graph of the transformation are in $\Call{IG}{\ell, C, g, x}$, since these gates are newly introduced by the transformation. For transformation $T_1$ in \Cref{fig:local}, $g_0$, $g_1$, and $g_2$ are added to the new circuit by applying $T_1$. Second, $\Call{IG}{\ell, C, g, x}$ also includes gates whose applicable transformations may change due to the transformation. For this category, we consider all $\ell$-hop predecessors of the new gates. \Cref{fig:local} shows the 1- and 2-hop influenced gates. 
\sys locates a transformation based on a topologically minimal gate in the source graph of the transformation.
Therefore for any gate not influenced by a transformation $x$, its applicable transformations remain the same after applying transformation $x$, as shown in \Cref{thm:influenced_gates}.


\begin{theorem}[Property of influenced gates]\label{thm:influenced_gates}
Let $C'$ be the new circuit obtained by applying transformation $x$ to gate $g$ on circuit $C$, and assume that for any transformation $(G,G')$ the depth of $G$ is at most $d$ (the depth of a directed acyclic graph is the maximal length of a path in the graph). For any gate $g'. g'\in C' \wedge g'\notin \Call{IG}{d, C, g, x}$, its set of applicable transformations is identical in $C$ and $C'$.
\end{theorem}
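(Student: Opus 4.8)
The plan is to show that the set of transformations applicable at $g'$ is determined entirely by a bounded neighborhood of $g'$ that the transformation $x$ leaves unchanged. First I would fix the partition of the gates of $C'$ into the \emph{new} gates introduced by $x$ (the image of the target graph $G'$) and the \emph{old} gates shared with $C$. By \Cref{def:influenced_gates}, the hypothesis $g'\in C'\wedge g'\notin\Call{IG}{d,C,g,x}$ says exactly that $g'$ is an old gate and that no new gate is reachable from $g'$ by a directed path of length at most $d$.

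The core step is a locality lemma for pattern matching: for any transformation $y=(H,H')$ with the depth of $H$ at most $d$, any match of $H$ whose topologically-minimal gate is mapped to $g'$ consists only of gates lying within $d$ hops of $g'$. This rests on two facts: (i) \sys locates $y$ at the topologically-minimal gate of $H$, so in every match $g'$ is a source of the matched subgraph, and --- at least when the minimal gate reaches every node of $H$ --- all remaining matched gates are its descendants; and (ii) the depth bound on $H$ caps the length of the corresponding directed paths by $d$. Consequently a match rooted at $g'$ can contain a new gate only if that new gate is a $\le d$-hop descendant of $g'$, i.e.\ only if $g'$ is a $\le d$-hop predecessor of a new gate, which is precisely the condition $g'\in\Call{IG}{d,C,g,x}$. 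The directed-predecessor form of \Cref{def:influenced_gates} is calibrated exactly for this implication.

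Next I would argue that the induced subgraph on the $d$-hop forward neighborhood of $g'$ --- its gates, their gate-type labels, and the qubit-labeled edges among them --- is identical in $C$ and $C'$. Applying $x$ only deletes edges internal to the matched source subgraph and reroutes boundary edges onto $G'$, which shares the same qubit interface; hence every edge between two old gates survives unchanged. By the hypothesis this neighborhood contains no new gate, and by the shared interface it symmetrically contains no deleted (old-match) gate, since a path from $g'$ into the deleted region within $d$ hops would, after the rewrite, reach a new gate of $G'$ within $d$ hops and force $g'\in\Call{IG}{d,C,g,x}$. Thus the matching evidence for every $y$ at $g'$ is bit-for-bit identical in the two circuits; convexity of a candidate match is decided by the same neighborhood, since any directed path between two matched gates stays among descendants of $g'$. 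Combining these observations, a valid convex match of $y$ rooted at $g'$ exists in $C$ iff it exists in $C'$, and since $y$ was arbitrary the applicable sets coincide, establishing \Cref{thm:influenced_gates}.

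I expect the main obstacle to be the locality lemma when the source graph $H$ has several topological sources: then a match rooted at the minimal gate $g'$ can legitimately include gates that are \emph{not} forward-descendants of $g'$ (they appear as predecessors of descendants of $g'$). Making the implication ``the match reaches a new gate $\Rightarrow g'\in\Call{IG}{d,C,g,x}$'' airtight in this situation --- and verifying that it is exactly the combination of the topologically-minimal locating convention with the directed-predecessor definition of influenced gates that secures it --- is the delicate part of the argument; everything else is routine bookkeeping about which gates and edges the rewrite actually touches.
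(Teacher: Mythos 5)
Your plan follows essentially the same route as the paper's proof: both rest on the locality lemma that applicability of a depth-$\le d$ transformation at a gate $g'$ is determined by the $d$-hop \emph{directed successors} of $g'$, and then verify that this successor set is untouched by the rewrite whenever $g'\notin\Call{IG}{d,C,g,x}$. The paper carries out the verification by an explicit three-way case split on whether $g'$ is a $k$-hop predecessor of the new gates with $k>d$, a successor of them, or neither; your interface-preservation argument covers the same ground more uniformly, and your extra bookkeeping about edge labels and convexity is detail the paper omits but does not change the structure. The one substantive difference is that the obstacle you flag at the end --- a match located at the topologically minimal gate of the source pattern may contain gates that are \emph{not} directed descendants of $g'$ when the pattern has several topological sources --- is precisely the step the paper's proof asserts without justification (``the availability of any transformation \dots only depends on the $d$-hop successors of $g$''). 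Neither your sketch nor the paper closes that step, so you have correctly isolated the genuinely delicate point rather than missed one; a fully rigorous version would need either to restrict attention to single-source patterns or to strengthen the locality lemma (and correspondingly enlarge the definition of influenced gates) to an undirected $d$-hop neighborhood.
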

\begin{proof}
See supplementary material. \ZL{No appendix now} \ZJ{I think we can still submit a supplemental material.}
\end{proof}

\sys uses influenced gates to capture dependencies between transformations when estimating their advantages. As per \cref{eqn:hae}, \sys estimates the local advantage of a transformation based on the maximum $V^{\pi_x}$ (calculated by the gate value predictor) across all influenced gates of the transformation.
The rationale behind the $max$ aggregator is to propagate the reward signal through the dependency path of transformations, so our hierarchical advantage estimator can capture the rewards that truly depend on the current transformation.
In our experiments, the maximal depth of circuits used in transformations is 4, but in practice we found that using 1-hop influenced gates leads to better results than using 4-hop influenced gates suggested by \Cref{thm:influenced_gates}. We note that while {\em some} transformations have a depth of 4, most have a lower depth; including more predecessors introduces more variance into the training process as the value of the maximal-value node in the influenced gates may not always be related to the applied transformation.

\oded{TODO: add experiments that check sensitivity to $\ell$ between 0 and 4 (or even $>4$ and also $\infty$, i.e., taking max over the whole graph).}

\commentout{
\sys uses the influenced gates to estimate the Q-value of a transformation. Specifically, \sys uses the maximum value of gates in $\Call{IG}{\ell, C, g, x}$ given by the critic network (i.e., the gate value predictor) as the value of the next state.
This formula uses the maximum value of gates in \local($C, g, x$) given by the critic network (the gate value predictor) as the value of the next state.
\ZJ{This sentence is broken: This design is based on the observation that, in a long-sequence optimization, transformations are dependent to their predecessors (i.e. some transformations before a particular transformation are its enablers).}
\ZL{This design is based on the observation that in the optimization process, the application of the cost-decreasing transformation is often enabled by previous transformations in the same local area.
For example, in \Cref{fig:long-sequence}, cost-reducing transformation $T_4$ is enabled by $T_1$, $T_2$ and $T_3$.
In this case, the reward generated by the cost-decreasing transformation depends on these enablers, and thus it should be counted into the Q-value of the enablers.
}
The rationale behind the $max$ aggregator in the equation is that we want the reward signal to propagate through the dependency path of transformations and be reflected, so that this Q-value estimator can capture the rewards that truly depend on the current transformation $x$.
}

\commentout{
Note that since we concentrate on the transformation-selecting policy here, the definitions of states and actions are distinct from what is stated in \Cref{sec:challenge}.
Instead of the entire circuit, the states taken by the transformation-selecting policy is the $K$-hop neighborhood of a gate $g$ on circuit $C$.
The action space of the transformation-selecting policy is the transformation set, a sub-space of the original action space.
To state it formally, the advantage to be estimated is $A(C, g, x)$ where circuit $C$ and gate $g$ is used to locate a sub-graph that forms a state, and transformation $x$ is the action.
Here comes the problem: how to estimate the advantage of the transformation-selecting policy given trajectories collected with two policies hierarchically?
To address this problem, instead of using the multi-step advantage estimators in the original PPO~\cite{ppo, a3c}, we propose the hierarchical advantage estimation, which estimates the advantage by estimating the Q-value, that is $\hat{A}(C, g, x) = \hat{Q}^{\pi_x}(C, g, x) - \hat{V}^{\pi_x}(C, g)$.
\if 0
\ZJ{Zikun's original version:
However, this is not the case in the quantum circuit optimization problem because in each trajectory, the RL agent may apply multiple {\em independent} transformations.
We say two transformations are independent iff they does not affect the availability of each other.
Intuitively, the rewards generated from independent actions should not be counted when computing returns for an action.
Otherwise, the return of an action is contingent on rewards from unrelated actions after it, which may varies a lot, veiling the real return that truly depends on the action.
Thus, summing rewards directly introduces high variance in the estimation of advantage.
Also, in terms of MDP, summing the rewards along a trajectory is not compatible with hierarchical PPO.
Though for the joint policy ($\pi_g$ and $\pi_x$ combined) an trajectory forms an MDP, it is not an MDP for the transformation selecting policy $\pi_x$ alone.
The reason is that the state space for $\pi_x$ is a gate-circuit pair, and in each step of the trajectory, the state faced by $\pi_x$ is not the outcome of its last action but a state selected by the gate selector.
Thus, a trajectory does not form a MDP for $\pi_x$, which invalidates the act of accumulating the rewards.
}
\fi
\if 0
To address this issue, HPPO uses a 1-step estimator to Q-value of an action:
\ZJ{Zikun's original version:To address this issue, HPPO computes returns (estimates Q-values) in a different way.
Note that for the transformation selecting policy, since the trajectory is not an MDP, the only thing the agent knows after an action (a transformation) is taken is the immediate new state.
Thus, we use a 1-step estimator to estimate the Q-value, that is:}
\begin{equation}
    \hat{Q}^\pi(s_t, a_t) = r(s_t, a_t) + \gamma\hat{V}^\pi(s_{t+1})
\end{equation}
where $\gamma$ is a discount factor.
A challenge is that we do not directly have the {\em next state} $s_{t+1}$ in our environment setting, and hence we need yet another estimator for $\hat{V}^\pi(s_{t+1})$.
\fi
The Q-value is estimated as follows:
\begin{equation}
    \hat{Q}^{\pi_x}(C, g, x) = r(C, g, x) + \gamma\max_{g' \in \Call{IG}{C, g, x}}{V^{\pi_x}(C', g'; \theta_g)}
\end{equation}
where \Call{IG}{\cdot} is a abbreviation of  \local(\cdot), denoting the set of gates in the local area that is influenced by the transformation, which is formally defined below.
This formula uses the maximum value of gates in \local($C, g, x$) given by the critic network (the gate value predictor) as the value of the next state.
This design is based on the observation that, in a long-sequence optimization, transformations are dependent to their predecessors (i.e. some transformations before a particular transformation are its enablers).
\ZL{Example mighe be changed due to change fo figure.For example, in \Cref{fig:long-sequence}, transformation $T_2$ depends on $T_1$, because $T_2$ cannot be applied if $T_1$ have not changed that area.}
The rationale behind the $max$ aggregator in the equation is that we want the reward signal to propagate through the dependency path of transformations and be reflected, so that this Q-value estimator can capture the rewards that truly depend on the transformation.

\begin{figure}
    \centering
    \includegraphics[scale=0.3]{figs/local.pdf}
    \caption{The $l$-hop influenced gates of applying the graph transformation in \Cref{fig:graph_subst}.}
    \label{fig:local}
\end{figure}
Now we define \Call{IG}{\cdot} formally.
Two types of gates are influenced by a transformation.
The first are the new gates that appear in the destination graph of the transformation (i.e. the substitutes).
There is no doubt that this kind of gates should be counted into \Call{IG}{\cdot}.
As shown in \Cref{fig:local}, for transformation $T_1$, gates $g_0$, $g_1$, $g_2$ belongs to the first kind.
The second are the gates which themselves are not changed but there are modifications occur in their neighborhood that leads to changes of validity of transformations on them.
Some of these changes open up new optimization opportunities.
For example, in \Cref{fig:local}, after transformation $T_1$, an H-gate cancellation can be performed on gate $g_1$ and $g_3$.
Nevertheless, we do not have a quantitative measurement of the influence on these gates. 
To decide which gates should be included into \Call{IG}{\cdot}, we follow the idea that the closer they are to the new gates, the greater they are influenced.
Specifically, we view gates in the $l$-hop neighbor of the new gates as "close", where $l$ as a hyperparameter, and exclude gates outside this area from \Call{\local}{\cdot}.
As an example, gate $g_3$ and $g_4$ in \Cref{fig:local} are 1-hop neighbor of the new gates, and $g_5$ is their 2-hop neighbour.
Also note that the first topology-order gate is used to locate a transformation, so successors of the new gates are never influenced by a transformation.
For instance, the validity of transformations on gate $g_8$ and $g_9$ in \Cref{fig:local} is not influenced by $T_1$.
Thus, successors of new gates are excluded from \Call{\local}{\cdot}.
Overall, the formal definition of \Call{\local}{\cdot} is as follows:
\begin{definition} [$l$-hop influenced gates]
For a transformation $x$ located on gate $g$ of circuit $C$, the $l$-hop influenced gates, denoted as \Call{IG$_l$}{$C, g, x$}, is a set of gates that includes the new gate introduced by $x$ and the $l$-hop predecessors of these gates, where $l$ is a hyperparameter.
\end{definition}
For $T_1$ in \Cref{fig:local}, the 1 hop dependent gates include $g_0$, $g_1$, $g_2$, $g_3$ and $g_4$, labeled as ``1-hop influenced gates'', and the 2 hop dependent gates include gates in the 1 hop dependent gates and gate $g_5$, labeled as ``2-hop influenced gates'' in the figure.
\begin{theorem}[Property of influenced gates]\label{thm:influenced_gates}
Assuming all transformations considered by \sys have at most $L+1(L\geq0)$ gates. 
For any gate g' in C' that is not in the ($L$)-hop influenced gates of transformation $x$ applied on gate $g$ of circuit $C$ (\Call{IG$_L$}{C, g, x}), its set of applicable transformation is identical to what it was in $C$.
\end{theorem}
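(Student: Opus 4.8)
The plan is to prove the contrapositive: assume some transformation $y$ is applicable at $g'$ in exactly one of $C$ and $C'$, and deduce $g' \in \Call{IG}{d, C, g, x}$. The first step is to recall from \Cref{sec:background} that applying $x$ only rewrites the convex subgraph $S$ matched by the source graph of $x$ into the target graph (the new gates) $N$; every gate lying outside $S$, together with every edge between two such outside gates, is literally unchanged in $C'$. Consequently, if a match of $y$ located at $g'$ used only unchanged gates and unchanged edges among them, it would be a valid match in both circuits, so applicability at $g'$ could not differ. Hence a \emph{differing} match must contain a gate of the rewritten region: a new gate $n \in N$ on the $C'$ side, or an old gate $s \in S$ on the $C$ side.

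The key lever for the core step is that \sys locates $y$ at the topologically minimal gate of its source graph, so in any match located at $g'$ the gate $g'$ is a source of the matched pattern: no matched gate is a directed predecessor of $g'$, and every matched gate is therefore either a directed successor of $g'$ or incomparable to it. Combining this with the depth bound — every source graph has depth at most $d$, so any two comparable gates in a match are within $d$ directed hops — I would argue that whenever the rewrite gate in the match is forward-reachable from $g'$, there is a directed path of length at most $d$ from $g'$ to a new gate, which is precisely the condition $g' \in \Call{IG}{d, C, g, x}$ by \Cref{def:influenced_gates}. The symmetric $C$-side case (the match touches some $s \in S$) reduces to the same conclusion, since $N$ replaces $S$ across the same input interface, so a short directed path into $S$ corresponds to a short directed path into $N$.

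The step I expect to be the main obstacle is the case where the rewrite gate appears in the match only as a \emph{sibling} of $g'$: that is, $g'$ and the new gate $n$ are incomparable and are linked in the connected pattern solely through a common descendant $w$ (a downstream ``merge'' gate, such as the two inputs of a {\tt CNOT}; cf. the output-boundary gate in \Cref{fig:graph_subst}). If $w$ is itself a new gate, the previous argument still applies, since then $g'$ directed-reaches a new gate within $d$ hops. The delicate subcase is when $w$ is an unchanged context gate on the output boundary of the rewrite: then $g'$ is a short directed predecessor of $w$ but not of any new gate, so it does not obviously lie in $\Call{IG}{d, C, g, x}$. I would attack this subcase using convexity of the matched subgraph together with the unique edge labelling, aiming to show that such a configuration cannot actually flip applicability at $g'$ — intuitively, the only edge that changed near $w$ is the one entering $w$ from the rewritten region, and matching that edge should force the sibling pattern-node to be located on the rewrite side rather than at $g'$. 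This is where the argument is most fragile, and it aligns with the paper's empirical observation that taking the maximum over larger $\ell$ (closer to $d$) injects variance, suggesting that influence propagating through such downstream merges is weak in practice.
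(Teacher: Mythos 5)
Your overall route is the same as the paper's: the paper's proof also reduces the theorem to the claim that the applicability of a transformation located at $g'$ depends only on the $d$-hop directed successors of $g'$ (because the match is anchored at a topologically minimal gate of the source graph), and then checks by cases that this successor set is unchanged for every $g' \notin \Call{IG}{d, C, g, x}$. Your contrapositive phrasing, and your observation that a differing match must touch the rewritten region, are a repackaging of that same argument.

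The difference is that you explicitly flag, and then fail to close, the sibling subcase --- and that is a genuine gap in your proposal as written. When the source graph of $y$ contains two incomparable gates $a$ and $b$ feeding a common descendant $c$ (e.g., the two inputs of a {\tt CNOT}), a match anchored at $g'$ (playing $a$) reaches the image of $b$ by going forward to $w$ (playing $c$) and then backward along $w$'s other in-edge; that image can be a new gate $n$ introduced by $x$ even though $g'$ is not a directed predecessor of any new gate, so $g' \notin \Call{IG}{d, C, g, x}$ while the applicability of $y$ at $g'$ differs between $C$ and $C'$. Your proposed rescue via convexity and edge labelling does not go through: the changed in-edge of $w$ is exactly the edge that the pattern node $b$ must match, and neither convexity of the matched subgraph nor uniqueness of edge labels forces the anchor of such a pattern to lie on the rewrite side rather than at $g'$. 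To be fair, the paper's own proof simply asserts the key lemma (``availability \dots\ only depends on the $d$-hop successors of $g$'') without addressing this configuration, so its case (3) silently contains the same hole; a complete proof needs either an extra hypothesis on the source graphs (e.g., that the anchor gate is an ancestor of every other gate in the pattern) or an enlargement of $\Call{IG}{\cdot}$ to include short predecessors of the out-boundary of the rewritten region.
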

Please see \Cref{prf:1} for the proof of \Cref{thm:influenced_gates}.
}
\commentout{
Overall, the {\em advantage estimator} is given by:
\begin{small}
\begin{equation}\label{compute_adv}
\begin{aligned}
\hat{A}(C(g), x) =  r(C, g, x) + \gamma \mathop{max}\limits_{g'\in \Call{IG}{\ell, C, g, x}}V^{\pi_x}(C'(g')) - V^{\pi_x}(C(g))
\end{aligned}
\end{equation}
\end{small}
}


\subsection{Training Data Collection\label{subsec:data}}
\paragraph{Generating Trajectories.}
As shown in \Cref{alg:hier_ppo} line \ref{line:data_begin}-\ref{line:data_end}, in each training iteration, \sys generates a total of $B$ trajectories according to the current policy, and each trajectory is limited to at most $T$ steps.
These generated trajectories are taken as the training dataset to update the gate- and transformation-selecting policies.
To generate the $j$-th trajectory, \sys randomly selects an initial circuit $C_0^{(j)}$ from the {\em initial circuit buffer} $\m{C}$ (introduced later in this section), and iteratively applies transformations on the selected circuit.
Specifically, at the $t$-th time step, \sys takes the current circuit $C_{t-1}^{(j)}$ as an input and chooses a gate $g_t^{(j)}$ and a transformation $x_t^{(j)}$ using the gate- and transformation-selecting policies. After that, $x_t^{(j)}$ is applied to $g_t^{(j)}$ on $C_{t-1}^{(j)}$ to generate a new circuit $C_t^{(j)}$.
At each time step, \sys collects the following data:
(1) the current and new circuits $C_{t-1}^{(j)}$ and $C_{t}^{(j)}$,
(2) the selected action $(g_t^{(j)}, x_t^{(j)})$,
(3) the immediate reward $r_t$,
(4) the value of $g_t^{(j)}$ given by the gate value predictor $V^{\pi_x}(C_{t-1}^{(j)}, g_t^{(j)}; \theta_g)$ (see \Cref{subsec:design:critic}), and
(5) the probability of choosing $x_t^{(j)}$ under the current transformation-selecting policy (i.e., $\pi_x(x_t^{(j)}|C_{t-1}^{(j)}, g_t^{(j)}; \theta_x)$.
The collected data is saved in a {\em rollout buffer} $\m{R}$, which is initialized to empty at the start of each iteration.

\if 0
\fi

\paragraph{Stop conditions.}
After a trajectory begins, \sys keeps applying transformations on the current circuit until one of the following stop conditions is satisfied.
First, each trajectory can have at most $T$ time steps, where $T$ is a hyper-parameter.
Second, \sys introduces a special transformation named {\tt NOP} into the transformation set.
Selecting {\tt NOP} as the transformation (\Cref{alg:hier_ppo} line \ref{line:stop_nop}) indicates that \sys chooses to end the current trajectory, which provides \sys the ability to stop when it finds that moving forward cannot bring benefits or even leads to negative rewards.
Third, \sys stops the current trajectory when the cost of the current circuit is $\alpha$ times greater than the cost of the input circuit (\Cref{alg:hier_ppo} line \ref{line:stop_cost}), which prevents \sys from moving toward a wrong direction too far. We set $\alpha$ to be 1.2 in our evaluation.

\paragraph{Initial circuit buffer.} Instead of always starting from the input circuit $C_{input}$ in a trajectory, \sys samples a circuit from an {\em initial circuit buffer} $\m{C}$ (\Cref{alg:hier_ppo} line~\ref{line:initial_circuit_buffer_start}-\ref{line:initial_circuit_buffer_end}). $\m{C}$ includes all circuits discovered in previous trajectories whose cost is lower than the trajectory's initial circuit. 
$\m{C}$ is a hash map with keys being the cost and values being the set of circuits with that cost.
To select a circuit from $\m{C}$, \sys first samples cost and then uniformly selects a circuit from the set of circuits with the sampled cost.
Users can define customized probability distributions over costs depending on how they expect their agent to behave.
Specifically, a greedier distribution where probabilities for sampling lower costs are larger makes the agent more progressive in doing optimization, preferring to extend lower-cost states.

Compared to always starting a trajectory from $C_{input}$, sampling circuits from the initial circuit buffer enhances the exploration of the search space.
In particular, starting from $C_{input}$ restricts \sys to only explore circuits at most $T$ steps away from $C_{input}$, where $T$ is the maximum number of steps in a trajectory. In contrast, starting from a randomly selected circuit allows \sys to explore previously unknown areas, enabling \sys to discover more optimized circuits.

\if 0
$\m{C}$ is implemented as a hash map with keys being the cost and values being the set of circuits with that cost.
To sample a circuit from the initial circuit buffer, \sys first samples cost and then uniformly selects a circuit from the set of circuits with the sampled cost.
Users can define customized probability distributions over costs depending on how they expect their agent to behave.
Specifically, a greedier distribution where probabilities for sampling lower costs are larger makes the agent more progressive in doing optimization, preferring to extend lower-cost states.
\oded{I think we need to move the part about the distribution to section 5.2, and then we can drop this paragraph (the hash table details don't seem that important to me).}
\ZJ{Done.}
\fi

\if0
\ZJ{Old version: (I think the two advantages can be combined and am not sure we need to discuss the hash function details --- we can move it to implementation). 
is a buffer of low-cost circuits discovered across iterations.
At the beginning of each trajectory, \sys first samples a circuit $C_0$ from the {\em initial circuit buffer}.
At the first iteration, {\em initial circuit buffer} contains only the original input circuit $C_{origin}$.
In every iteration we add all circuits whose cost are equal or lower than $\Call{Cost}{C_0}$ where $C_0$ is the initial circuit of its corresponding trajectory.
Compared to always starting a trajectory from $C_{origin}$, sampling circuits from the initial circuit buffer has the following advantages.
First, it enhances the exploration of the search space.
In limited steps, always starting from a fixed beginning state keeps the agent within the bound of a limited area, stemming it from going further.
In contrast, starting from multiple different states helps extend the known area, enabling the agent to discover more optimized circuits.
Second, randomly sampling initial states diversifies the circuits seen by the agent, improving its generalization ability.
Circuits in the initial circuit buffer are organized as follows.
The initial circuit buffer is, essentially, a hash map with keys being the cost and values being sets of circuits with same costs.
We can use costs as keys because most well known cost metrics in the quantum setting, such as gate count, {\tt CNOT} count, depth, are integers.
However, floating point costs (e.g. execution time, fidelity) can be supported by changing the keys of the hash map from costs to cost ranges, which is a minor change that does not affect our primary design and is on our plan.
To sample a circuit from the initial circuit buffer, \sys first samples cost and then uniformly samples a circuit from the set of circuits corresponding with that cost.
Users can define their own distributions over costs depending on how they expect their agent to behave.
For example, a greedier distribution where probabilities for sampling lower costs are larger makes the agent behave more progressively in doing optimization, preferring to extend lower-cost states.
}
\fi

\subsection{Agent Update\label{subsec:update}}
To update \sys's neural architecture, we first traverse the rollout buffer, and compute the advantage value for each time step according to \cref{eqn:hae}.
Next, we train the graph embedding network, the node reward predictor network and the transformation selector network jointly with stochastic gradient descent for $M$ epochs.
Following the PPO algorithm in ~\cref{eqn:ppo_objective}, the loss function for the transformation selector can be rewritten as 
\begin{equation}\label{loss_ts}
    L^{TS}(\theta_x) = E_\tau\left[ \sum_t \min(\rho_t(\theta_x)\hat{A}_t, \er{clip}(\rho_t(\theta_x), \epsilon)\hat{A}_t)\right]
\end{equation}
where $\rho_t(\theta_x) = \frac{\pi_x(x_t|C_t, g_t; \theta_x)}{\pi_x(x_t|C_t, g_t; \theta_{x(old)})}$. 
To update the value estimator network, which is also the gate selecting policy, \sys minimizes $L^{VE}(\theta_g)$ given by
\begin{equation}
    L^{VE}(\theta_g) = E_{\tau}\left[\sum_t \hat{A}(C_t, g_t, x_t)^2\right]
\end{equation}

To train the networks jointly, we combine these loss functions into
\begin{equation}\label{loss}
    L(\theta) = L^{TS}(\theta_x) + c_1 L^{VE}(\theta_g) + c_2 H(\pi_x;\theta_x) 
\end{equation}
where $\theta$ denotes all the trainable parameters combining $\theta_g$ and $\theta_x$, $c_1 \geq 0$ and $c_2 \geq 0$ are two coefficients, and $H(\pi_x)$ denotes the entropy of policy $\pi_x$, which serves as a regularization term to promote exploration. 

\subsection{Pre-training, Fine-tuning, and Policy-guided Search \label{subsec:pretrain}}


Circuits implemented in the same gate set share common local topologies, which offers opportunities to transfer the learned optimizations from one circuit to another in the same gate set.
This observation motivates our pre-training and fine-tuning approach through which we can avoid training from scratch on previously unseen circuits and accelerate training.
As introduced in \Cref{sec:design}, \sys's neural architecture can automatically adapt to circuits with different sizes and/or topologies, which enables the pre-training and fine-tuning pipeline.

\paragraph{Pre-training}
The goal of the pre-training phase is to help the agent learn how to optimize different local structures.
\sys uses a diversity of circuits for pre-training, which allows the agent to explore various local structures and prevents over-fitting to the structure of a single circuit.
Minor changes are needed to support training \sys on multiple circuits.
In the initial buffer, circuits are clustered into equivalent groups.
At the beginning of each trajectory, \sys chooses the initial circuit of a trajectory by first uniformly sampling an equivalent group and then randomly selecting a circuit from the group.
All trajectories are used in gradient estimations to update the agent.


\paragraph{Fine-tuning}
\sys optimizes a new circuit by fine-tuning the pre-trained model on the circuit, which allows \sys to discover optimizations specific to the new circuit.
During the fine-tuning, there is a single equivalent group in the initial buffer, which corresponds to the new circuit \sys fine-tunes on.
Optimized circuits may be discover during the fine-tuning, however, the primary goal of fine-tuning is to make the agent fit to the new circuit, specifically, the model should learn the gate values and policy distribution specific the new circuit.
Once the model has fitted to the new circuit, we should stop the fine-tuning to avoid wasting computation and use the fine-tuned model to apply a more efficient {\em policy-guided search} to finish the rest of the optimization.

\paragraph{Policy-guided search}
In this stage, we employ a circuit optimization model that has been fitted to the circuit at hand. 
We use this model to guide the further optimization of the circuit using a technique we call {\em policy-guided search}, which shares some similarities with the trajectory collection stage during training. 
However, there are a few key differences between policy-guided search and trajectory collection.
First, policy-guided search only maintains circuits with lowest cost in the initial circuit buffer, ensuring that \sys uses one of the best discovered circuits to start a trajectory.
Second, after selecting a gate using the gate-selecting policy, instead of sampling a transformation from the transformation-selecting policy, \sys selects the transformation with highest probability.
During the search, if the transformation-selecting policy triggers a stop condition (described in \Cref{subsec:data}) when select a transformation for a gate $g$ on circuit $C$, \sys applies a {\em hard mask} to $g$, which will prevent \sys from revisiting $g$.
Furthermore, to prevent \sys from exploiting the policy without exploration, we also apply {\em soft masks} to gates that have been visited by \sys for at least once.
The difference between hard and soft masks is that once all gates in a circuit have been masked out, either by hard or soft masks, we remove the soft masks and do not reapply them.
The purpose of soft masks is to ensure that every gate in a circuit $C$ is visited at least once if the circuit is visited at least $|C|$ times.
Whenever a circuit with lower cost is discovered, \sys clears the initial circuit buffer, adds the circuit to the initial circuit buffer, and restarts the search.

\paragraph{Optimizing a circuit}
To optimize an input circuit, \sys runs a fine-tuning process and a policy-guided search process simultaneously.
\Cref{fig:ft_search} shows the interactions between the two processes, which exchange information whenever one process discovers a circuit with new lowest cost.
Specifically, if a circuit $C_1$ with new lowest cost is found during fine-tuning, \sys restarts the search using $C_1$, as shown in \Cref{fig:ft_search_a}.
Conversely, if a circuit $C_2$ with new lowest cost is found during the search, the fine-tuning process will include $C_2$ in its initial circuit buffer and continue fine-tuning, as shown in \Cref{fig:ft_search_b}.
Moreover, a timeout is set for the search process in case that its model is obsolete.
If no new lowest cost circuit is discovered before timeout, \sys restarts the policy-guided search using the most recently model from fine-tuning, as shown in \Cref{fig:ft_search_c}.

\begin{figure}
  \subfloat[Fine-tuning discovers a\\circuit with new lowest cost.] {
    \label{fig:ft_search_a}
    \includegraphics[width=0.3\textwidth]{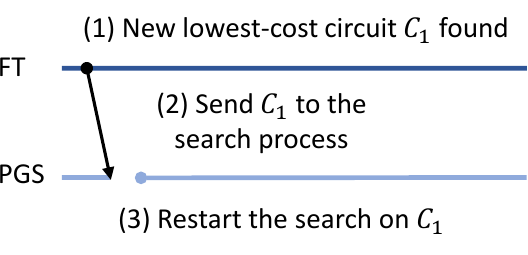}
  }
  \hspace{0.01\textwidth}
  \subfloat[Policy-guided search finds a \\circuit with new lowest cost.] {
    \label{fig:ft_search_b}
    \includegraphics[width=0.3\textwidth]{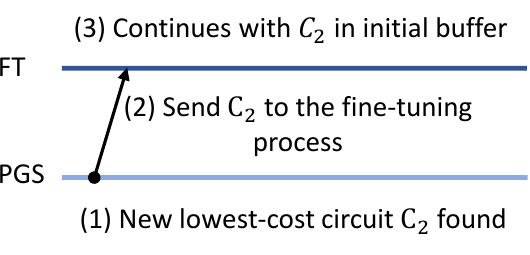}
  }
  \hspace{0.01\textwidth}
  \subfloat[No processes find a better circuit before search times out.] {
    \label{fig:ft_search_c}
    \includegraphics[width=0.3\textwidth]{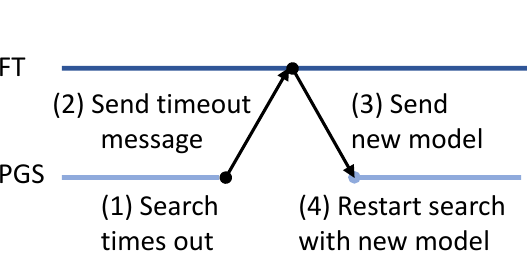}
  }
\if 0
  \begin{minipage}[t]{0.32\textwidth}
    \centering
    \includegraphics[width=\linewidth]{figs/ft_search_a.pdf}
    \label{fig:ft_search_a}
    \caption*{(a) Fine-tuning discovers a\\circuit with new lowest cost.}
  \end{minipage}
  \hfill
  \begin{minipage}[t]{0.32\textwidth}
    \centering
    \includegraphics[width=\linewidth]{figs/ft_search_b.pdf}
    \label{fig:ft_search_a}
    \caption*{(b) Policy-guided search finds a \\circuit with new lowest cost.}
  \end{minipage}
  \hfill
  \begin{minipage}[t]{0.32\textwidth}
    \centering
    \includegraphics[width=\linewidth]{figs/ft_search_c.pdf}
    \caption*{(c) No processes find a better circuit before search times out.}
  \end{minipage}
\fi 
\vspace{-2mm}
  \caption{Interactions between \sys's fine-tuning (FT) and policy-guided search (PGS) processes. (a) and (b) show how the two processes exchange information when a circuit with new lowest cost is discovered. (c) shows the case where the policy-guided search times out.}
  \label{fig:ft_search}
\vspace{-6mm}
\end{figure}

\oded{we should explain here how this achieves a balance of local and global reasoning. we should also mention here (if not somewhere else) that to optimize a circuit we do fine-tuning and just take the best circuit. We never use the RL agent not in training mode.}

\subsection{Scaling to Large Circuits\label{subsec:scale}}
This section analyzes the scalability of \sys during the training data collection and agent update phases, and describe how we improve its scalability for larger circuits.

\paragraph{Scalability of training data collection.}
To optimize an input circuit $C$, \sys's gate value predictor computes a value for every gate in $C$ at each step in a trajectory.
This process consists of a GNN inference to generate the representation of all gates in $C$, and an MLP inference that calculates the values of $C$'s gates.
For each gate, both the GNN and MLP inferences take constant time, therefore the overall time complexity of the gate selector is $O(|C|)$.
After a gate $g$ is selected, selecting a transformation and identifying the $l$-hop influenced gates takes constant time.
Overall, the time complexity of \sys's data collection is $O(|C|)$.


However, as the size of the circuit grows, a significant amount of computation is discarded since all but one gate's representation are not used after gate selection. 
To address this inefficiency, \sys partitions an input circuit into sub-circuits by limiting the number of gates in each sub-circuit, and optimize these sub0circuits individually.
This approach reduces the cost of training data collection to a constant. We evaluate the performance improvement of circuit partitioning in \Cref{subsec:scale_evaluation}. 
However, partitioning circuits may lead to missed optimization opportunities that span across partitions, which we plan to explore in future work.

\paragraph{Scalability of agent update.}

The agent update phase involves recomputing the probability of choosing the same transformation in the data collection phase. 
Specifically, to calculate the importance sampling ratio $\rho_t(\theta_x)$ in the loss function (\Cref{loss_ts}) \ZJ{add a reference to the loss function} \ZL{fixed}, we need the probability $\pi_x(x_t | C_t, g_t; \theta_x)$ under the updated $\theta_x$. Since $\pi_x$ is based on the gate representation, which is also changed due to the update to the parameters of the GNN. As a result, \sys needs to re-calculate the representation for $g_t$, which takes $O(|C_t|)$. Also, the network update phase involves gradient computation and backward-propagation, whose peak GPU memory usage is $O(|C_t|)$. 
On GPUs with limited device memory, we hand-tune the training batch-size to prevent out-of-memory issues.

Again, only a small number of gates are involved in the gradient generation, while the representation of other gates are compute-to-discard.
We deal with this issue based on the fact that since the gate representation only contains information of its $k$-hop neighborhood, we can obtain exactly the same representation by running GNN only on its $k$-hop neighborhood.
The number of gates in the $k$-hop neighborhood of a gate has a constant upper bound due to the sparse structure of quantum circuits.
With this optimization, the complexity of computing gradient on a single data point becomes a constant.

Note that due to the design of \sys's neural architecture, the time and space complexity of \sys only depend the number of gates in a circuit, rather than the number of qubits.

\section{Evaluation\label{sec:eval}}

\commentout{
\ZJ{
We describe our implementation of \sys and evaluate its performance in the following aspects:
\begin{itemize}
    \item Can \sys outperform existing rule-based and search-based quantum circuit optimizers?
    \item Can \sys's RL agent learn to adapt to different cost metrics?
    \item Can \sys automatically discover novel transformation paths from its own exploration?
\end{itemize}
}
}

We include the implementation details and evaluate \sys in this section.
\Cref{subsec:exp_setup} introduces the experimental setups and~\Cref{subsec:imple} describes the implementation of \sys and hyperparameters.
\Cref{subsec:nam_exp} and~\Cref{subsec:ibm_exp} shows the comparison of \sys with existing optimizers on the Nam gate set and the IBM gate set, respectively.
\Cref{subsec:scale_evaluation} focuses on evaluating \sys's scalability, and \Cref{subsec:ablation} performs a comprehensive ablation study.

\subsection{Experimental Setup\label{subsec:exp_setup}}

\paragraph{Benchmarks}
To evaluate the effectiveness of our framework, we employ a widely-adopted quantum circuit benchmark suite, which has been previously used by several works~\cite{nam2018automated, amy2014, extended, VOQC, pyzx} for logical circuit optimization. 
The benchmark suite primarily comprises arithmetic and reversible circuits, and we evaluate them in the Nam gate set (${CX, Rz, H, X}$), following existing studies. 
Additionally, we transpile these benchmark circuits to the IBM gate set ($\{CX, Rz, X, SX\}$) to demonstrate \sys's compatibility with different gate sets.
Since this benchmark suite contains a limited number of circuit types, we extend our evaluation to include circuits from the MQTBench library~\cite{quetschlich2022mqtbench}, which encompasses circuits from various categories, such as QAOA~\cite{qaoa} and VQE~\cite{vqe}.

\paragraph{Metrics}
We use four metrics in our evaluation: total gate count, CNOT count, circuit depth, and fidelity.
Operations on NISQ devices are affected by noise, and the error rates of different gates vary.
Specifically, the error rate of two-qubit gates (e.g. CNOT) are typically an order of magnitude larger than single-qubit gates (e.g., $3\times 10^{-2}$ and $4.43\times 10^{-3}$, respectively, on IBM Q20~\cite{sabre}).
In light of this, we evaluate \sys with not only total gate count but also CNOT count.
Moreover, since qubits in NISQ devices have limited coherence time, circuit depth should be within a certain range for successful execution.
The optimizing results ultimately translate to the fidelity of executions.
However, the end-to-end fidelity involves many factors, such as mapping method, routing method, device coupling map and device error rate, which are out of the scope of logical circuit optimization.
To rule out the effect of these factors, we follow prior work~\cite{xu2022synthesizing} and report the absolute circuit fidelity, which is defined as the product of the success rate of all gates in the circuit.
For a circuit $C$, its fidelity can be expressed as $\prod_{g\in C}(1 - e(g))$ where $e(g)$ denotes the empirical device error rate for gate $g$.
During evaluation, we use the calibration data of IBM Washington device~\cite{ibm_washington} where CNOT error rate is $1.214\times 10^{-2}$, Rz error rate is {0}, and the error rate of other single-qubit gates (i.e. X gate and SX gate) are $2.77\times 10^{-4}$.

\paragraph{Server specification}
Our experiments are conducted on the Perlmutter supercomputer~\cite{perlmutter}. Pre-training is performed on a node equipped with an AMD EPYC 7763 64-core 128-thread processor, 256GB DRAM, and four NVIDIA A100-SXM4-40GB GPUs. 
For circuit optimization, we use a node with the same hardware specification, but with only one NVIDIA A100-SXM4-40GB GPU.



\commentout{
\JP{
Since GNN in our model captures local features of a circuit and different circuits may share similar local structures, it is possible for the model to generalize to other circuits after training on some specific circuits. 
So we first pre-train the model on 6 small circuits ({\tt barenco\_tof\_3}, {\tt gf2\^{}4\_mult}, {\tt mod5\_4}, {\tt mod\_mult\_55}, {\tt tof\_5}, {\tt vbe\_adder\_3}) simultaneously (we input multiple circuits at once and the agent will collect data containing different circuits for training the network) to obtain a pre-trained model that has already learned some optimizations, and then fine-tune the pre-trained model on each single circuit to get more circuit-specific optimizations. This pre-training and fine-tuning mechanism saves us lots of time to get the best results, because we do not have to start from scratch for every circuit.}
}

\subsection{Implementation Details\label{subsec:imple}}

We build the \sys training pipeline on top of PyTorch~\cite{pytorch} and DGL~\cite{dgl}. We utilize the APIs provided by Quartz~\cite{pldi2022-quartz} to generate graph representations of circuits and transformation rules and perform graph pattern matching. These APIs are encapsulated with Cython~\cite{cython} to facilitate their invocations by Python processes.

\paragraph{Data-parallel pre-training.} 

During pre-training, \sys employs a distributed data-parallel pipeline to train its neural architecture across multiple GPUs.
Specifically, each process occupies a single GPU and contains an independent agent that takes actions to collect trajectories and compute gradients using the collected data. 
At the end of each training iteration, \sys synchronizes the gradients across all GPUs to update the model parameters. 
Each process maintains an individual initial circuit buffer and exchanges information about the lowest-cost circuits with each other. 
In the experiments, for each gate set, we pre-train \sys's neural architecture on six circuits (i.e. {\tt barenco\_tof\_3}, {\tt gf2\^{}4\_mult}, {\tt mod5\_4}, {\tt mod\_mult\_55},  {\tt tof\_5}, and {\tt vbe\_adder\_3}) for 8 hours with 4 GPUs.
We choose these circuits for pre-training because they are relatively small and have diverse circuit topologies.
Since the pre-trained models are reused in circuit optimization, we do not count the pre-training time when reporting \sys's search time.

\begin{wraptable}{l}{0.46\textwidth}
\vspace{-5mm}
\caption{Hyperparameters used in \sys's training.}
\setlength{\tabcolsep}{1.1mm}{\fontsize{9}{10}\selectfont
\label{tab:hyp_2}
\vspace{-2mm}
\begin{tabular}{ l|l|l } 
\hline
 Hyperparameter & Pre-training & Fine-tuning \\ 
 \hline
 Horizon (T) & 600 & 600 \\ 
 Actor learning rate & 3e-4 & 3e-4  \\ 
 Critic learning rate & 5e-4 & 5e-4 \\
 GNN learning rate & 3e-4 & 3e-4 \\
 Num. actors & 128 & 64 \\
 Num. epochs & 20 & 5 \\
 Minibatch size & 4800 & 4800 \\
 Discount ($\gamma$) & 0.95 & 0.95 \\
 Clipping parameter $\epsilon$ & 0.2 & 0.2 \\
 Entropy coeff. & 0.02 & 0.02 \\
\hline
 
\end{tabular}
\vspace{-6mm}
}
\end{wraptable}

\paragraph{Fine-tuning and policy-guided search.}
In the experiment, the fine-tuning process starts 5 minutes earlier than the search process, allowing \sys's neural architectures to learn to discover circuit-specific optimizations before starting the search.
The search timeout is set to 20 minutes.
We run \sys for 6 hours on each input circuit.

\paragraph{Rotation merging.} 
Aside from circuit transformations, many quantum circuit optimizers, including Nam~\cite{nam2018automated}, VOQC~\cite{VOQC} and Quartz~\cite{pldi2022-quartz}, incorporate {\em rotation merging}, a technique that merges $R_z$ gates with identical phase polynomial expressions~\cite{nam2018automated}, into their optimization pipeline.
However, the $R_z$ gates being merged may be arbitrarily far apart, making it difficult to represent rotation merging as combinations of local circuit transformations.
To address this challenge, we integrate rotation merging into \sys's circuit transformation process, following prior work.
\Cref{fig:rotation_merging_transfers} demonstrates that \sys can autonomously identify optimizations similar to rotation merging through its own exploration of certain circuits.

\paragraph{Cost function} 
In all our experiments, we use total gate count as the cost function for \sys. 
While 2-qubit gate count (e.g., CNOT count) can also be used as the cost function, we find that both of them yield equally good results in 2-qubit gate count. 
However, in some scenarios, minimizing the count of a specific type of 1-qubit gate is crucial. 
For example, minimizing the count of $Rz$ gates is essential in fault-tolerant quantum computing settings, as $Rz$ is non-Clifford and is more expensive to implement than Clifford gates in error correction. 
Thus, we select total gate count as the cost function. 
We do not adopt fidelity as the cost function because, at the logical optimization layer, we lack hardware specification, and the circuit is not physically mapped or routed.

\commentout{
\begin{table*}[th]
\small
\caption{Hyperparameters of \sys's neural architecture used in evaluation.}
\label{tab:hyp_1}
\begin{center}
\begin{tabular}{ l|l|l|l|l|l|l } 
\hline
\multirow{2}{7em}{Hyperparameter} & \multicolumn{2}{c|}{GNN} & \multicolumn{2}{c|}{Actor} & \multicolumn{2}{c}{Critic}\\
 \cline{2-7}
 & $\#$ layers &  hidden size & $\#$ layers & hidden size & $\#$ layers & hidden size\\ 
 \hline
 Value & 6 & 128 & 2 & 256 & 2 & 128 \\
 \hline
\end{tabular}
\end{center}
\end{table*}
}

\paragraph{Hyperparameters} 
In our evaluation, \sys uses a 6-layer graph neural network (GNN) as the gate representation generator, with a hidden dimension size of 128.
The gate selector and the transformation selector are both two-layer multiple layer perceptron (MLP) with a hidden dimension size of 128 and 256, respectively.
\Cref{tab:hyp_2} lists the hyperparameters used in \sys's learning process. 
We choose $\lambda=0.9$ in \sys's gate selector and use 1-hop influenced gates in training.

\subsection{Comparison on the Nam Gate Set\label{subsec:nam_exp}}

\begin{table*}[t]
\centering
\caption{Comparing \sys and existing circuit optimizers on reducing total gate count of the benchmark circuits for the Nam gate set. The best result for each circuit is in bold.}
\vspace{-2mm}
\label{tab:nam_total_gate}
\setlength{\tabcolsep}{1.1mm}{\fontsize{7}{8}\selectfont
\begin{tabular}{l||r|r|r|r|r|r|r||r|r}
\hline
\multicolumn{10}{c}{\bf Total Gate Count (Nam gate set)}\\
\hline
{\bf Circuit} & 
{\bf Orig.} & 
{\bf Nam} & 
{\bf VOQC} & 
{\bf Qiskit} & 
{\bf Tket} & 
{\bf \begin{tabular}{@{}c@{}}Quartz \\ w/ R.M.\end{tabular}} &
{\bf \begin{tabular}{@{}c@{}}Quartz \\ w/o R.M.\end{tabular}} &
{\bf \begin{tabular}{@{}c@{}}\sys \\ w/ R.M.\end{tabular}} &
{\bf \begin{tabular}{@{}c@{}}\sys \\ w/o R.M.\end{tabular}} \\ 
\hline
\texttt{tof\_3} & 45 & 35 & 35 & 43 & 39 & 35 & 35 & \textbf{33} & \textbf{33}\\
\texttt{barenco\_tof\_3} & 58 & 40 & 40 & 54 & 49 & 40 & 46 & \textbf{35} & 36\\
\texttt{mod5\_4} & 63 & 51 & 51 & 61 & 57 & 37 & 39 & \textbf{24} & \textbf{24}\\
\texttt{tof\_4} & 75 & 55 & 55 & 71 & 64 & 55 & 55 & \textbf{51} & \textbf{51}\\
\texttt{barenco\_tof\_4} & 114 & 72 & 72 & 106 & 97 & 72 & 90 & \textbf{62} & \textbf{62}\\
\texttt{tof\_5} & 105 & 75 & 75 & 99 & 89 & 75 & 75 & \textbf{69} & \textbf{69}\\
\texttt{mod\_mult\_55} & 119 & 91 & 92 & 115 & 103 & 90 & 94 & \textbf{84} & \textbf{84}\\
\texttt{vbe\_adder\_3} & 150 & 89 & 89 & 189 & 130 & 89 & 112 & \textbf{71} & \textbf{71}\\
\texttt{barenco\_tof\_5} & 170 & 104 & 104 & 158 & 145 & 104 & 134 & \textbf{90} & 96\\
\texttt{csla\_mux\_3} & 170 & 155 & 158 & 249 & 149 & 148 & 146 & \textbf{141} & \textbf{141}\\
\texttt{rc\_adder\_6} & 200 & 140 & 152 & 226 & 172 & 152 & 170 & \textbf{134} & 148\\
\texttt{gf2\^{}4\_mult} & 225 & 187 & 186 & 212 & 205 & 176 & 215 & \textbf{160} & 163\\
\texttt{tof\_10} & 255 & 175 & 175 & 239 & 214 & 175 & 175 & \textbf{169} & \textbf{169}\\
\texttt{mod\_red\_21} & 278 & 180 & 184 & 256 & 223 & 198 & 268 & 179 & \textbf{177}\\
\texttt{hwb6} & 259 & - & 209 & 258 & 217 & 214 & 250 & \textbf{193} & 196\\
\texttt{gf2\^{}5\_mult} & 347 & 296 & 287 & 327 & 319 & 274 & 334 & \textbf{257} & 266\\
\texttt{csum\_mux\_9} & 420 & 266 & 280 & 420 & 365 & 256 & 420 & \textbf{224} & 282\\
\texttt{barenco\_tof\_10} & 450 & 264 & 264 & 418 & 385 & 264 & 450 & \textbf{236} & 274\\
\texttt{qcla\_com\_7} & 443 & 284 & 269 & 498 & 377 & 288 & 431 & \textbf{258} & 277\\
\texttt{ham15-low} & 443 & - & 348 & 421 & 392 & 360 & 435 & 323 & \textbf{322}\\
\texttt{gf2\^{}6\_mult} & 495 & 403 & 401 & 464 & 453 & 391 & 485 & \textbf{352} & 386\\
\texttt{qcla\_adder\_10} & 521 & 399 & 416 & 630 & 444 & 404 & 518 & \textbf{380} & 391\\
\texttt{gf2\^{}7\_mult} & 669 & 555 & 543 & 627 & 614 & 531 & 657 & \textbf{484} & 530\\
\texttt{gf2\^{}8\_mult} & 883 & 712 & 706 & 819 & 806 & 703 & 883 & \textbf{648} & 755\\
\texttt{qcla\_mod\_7} & 884 & \textbf{624} & 678 & 933 & 759 & 652 & 884 & 629 & 664\\
\texttt{adder\_8} & 900 & 606 & 596 & 1001 & 802 & 624 & 874 & \textbf{560} & 669\\
\hline
{\bf \begin{tabular}{@{}c@{}}Geo. Mean\\Reduction\end{tabular}} & - & 28.0\% & 27.0\% & -0.6\% & 13.1\% & 28.3\% & 12.9\% & \textbf{35.8\%} & 32.6\% \\
\hline
\end{tabular}
}
\end{table*}

\begin{table*}[t]
\centering
\caption{Comparing \sys and existing circuit optimizers on reducing CNOT count of the benchmark circuits for the Nam gate set. The best result for each circuit is in bold.}
\label{tab:nam_cnot}
\vspace{-2mm}
\setlength{\tabcolsep}{1.1mm}{\fontsize{7}{8}\selectfont
\begin{tabular}{l||r|r|r|r|r|r|r||r|r}
\hline
\multicolumn{10}{c}{\bf CNOT Count (Nam gate set)}\\
\hline
{\bf Circuit} & 
{\bf Orig.} & 
{\bf Nam} & 
{\bf VOQC} & 
{\bf Qiskit} & 
{\bf Tket} & 
{\bf \begin{tabular}{@{}c@{}}Quartz \\ w/ R.M.\end{tabular}} &
{\bf \begin{tabular}{@{}c@{}}Quartz \\ w/o R.M.\end{tabular}} &
{\bf \begin{tabular}{@{}c@{}}\sys \\ w/ R.M.\end{tabular}} &
{\bf \begin{tabular}{@{}c@{}}\sys \\ w/o R.M.\end{tabular}} \\ 
\hline
\texttt{tof\_3} & 18 & 14 & 14 & 18 & 18 & 14 & 14 & \textbf{12} & \textbf{12}\\
\texttt{barenco\_tof\_3} & 24 & 18 & 18 & 24 & 24 & 18 & 20 & \textbf{13} & 14\\
\texttt{mod5\_4} & 28 & 28 & 28 & 28 & 28 & 20 & 22 & \textbf{13} & \textbf{13}\\
\texttt{tof\_4} & 30 & 22 & 22 & 30 & 30 & 22 & 22 & \textbf{18} & \textbf{18}\\
\texttt{barenco\_tof\_4} & 48 & 48 & 34 & 48 & 48 & 34 & 40 & \textbf{24} & \textbf{24}\\
\texttt{tof\_5} & 42 & 30 & 30 & 42 & 42 & 30 & 30 & \textbf{24} & \textbf{24}\\
\texttt{mod\_mult\_55} & 48 & 40 & 42 & 48 & 48 & 39 & 41 & 37 & \textbf{36}\\
\texttt{vbe\_adder\_3} & 70 & 50 & 50 & 62 & 62 & 44 & 50 & \textbf{32} & \textbf{32}\\
\texttt{barenco\_tof\_5} & 72 & 50 & 50 & 72 & 72 & 50 & 60 & \textbf{36} & 42\\
\texttt{csla\_mux\_3} & 80 & 70 & 74 & 71 & 71 & 70 & 68 & \textbf{63} & \textbf{63}\\
\texttt{rc\_adder\_6} & 93 & 71 & 71 & 81 & 81 & 71 & 77 & \textbf{61} & 67\\
\texttt{gf2\^{}4\_mult} & 99 & 99 & 99 & 99 & 99 & 95 & 99 & \textbf{79} & 81\\
\texttt{tof\_10} & 102 & 70 & 70 & 102 & 102 & 70 & 70 & \textbf{64} & \textbf{64}\\
\texttt{mod\_red\_21} & 105 & 81 & 81 & 105 & 104 & 81 & 105 & \textbf{74} & \textbf{74}\\
\texttt{hwb6} & 116 & -  & 104 & 115 & 111 & 99 & 115 & \textbf{86} & 88\\
\texttt{gf2\^{}5\_mult} & 154 & 154 & 154 & 154 & 154 & 149 & 154 & \textbf{132} & 133\\
\texttt{csum\_mux\_9} & 168 & 140 & 140 & 168 & 168 & 140 & 168 & \textbf{112} & 140\\
\texttt{barenco\_tof\_10} & 192 & 130 & 130 & 192 & 192 & 130 & 192 & \textbf{102} & 122\\
\texttt{qcla\_com\_7} & 186 & 132 & 132 & 174 & 174 & 127 & 178 & \textbf{111} & 120\\
\texttt{ham15-low} & 236 & - & 210 & 236 & 225 & 209 & 236 & 185 & \textbf{180}\\
\texttt{gf2\^{}6\_mult} & 221 & 221 & 221 & 221 & 221 & 221 & 221 & \textbf{182} & 197\\
\texttt{qcla\_adder\_10} & 233 & 183 & 199 & 213 & 205 & 187 & 230 & \textbf{165} & 169\\
\texttt{gf2\^{}7\_mult} & 300 & 300 & 300 & 300 & 300 & 300 & 300 & \textbf{253} & 271\\
\texttt{gf2\^{}8\_mult} & 405 & 405 & 405 & 405 & 402 & 405 & 405 & \textbf{364} & 393\\
\texttt{qcla\_mod\_7} & 382 & 292 & 328 & 366 & 366 & 307 & 382 & \textbf{285} & 296\\
\texttt{adder\_8} & 409 & 291 & 301 & 385 & 383 & 305 & 395 & \textbf{265} & 313\\
\hline
{\bf \begin{tabular}{@{}c@{}}Geo. Mean\\Reduction\end{tabular}} & - & 18.4\% & 17.8\% & 2.5\% & 3.0\% & 20.6\% & 10.9\% & \textbf{33.9\%} & 30.7\% \\
\hline
\end{tabular}
}
\end{table*}

We compare \sys with existing rule-based~\footnote{PyZX~\cite{pyzx} is another rule-based optimizer. However, it only minimizes the T gate count and does not explicitly optimize our chosen metrics, namely total gate count, CNOT count, circuit fidelity, and depth. We observe that PyZX achieves worse performance on these metrics and therefore exclude it in this comparison.} (i.e. Nam~\cite{nam2018automated}, VOQC~\cite{VOQC}, Qiskit~\cite{qiskit} and \tket~\cite{tket}) and search-based~\footnote{QUESO~\cite{xu2022synthesizing} is another search-based optimizer. It is not publicly available yet, which prevents us from conducting a comparison. We plan to include it in the final paper if it is open source later.} (i.e. Quartz~\cite{pldi2022-quartz}) optimizers on total gate count, CNOT count, and circuit depth on the Nam gate set.
We did evaluate on circuit fidelity since the Nam gate set is not hardware native.
Both \sys and Quartz~\cite{pldi2022-quartz} use a set of 6206 transformation rules generated by Quartz.
This section (and \Cref{subsec:ibm_exp}) focus on circuits with less than 1,500 gates \ZJ{check if this is true}, and \Cref{subsec:scale_evaluation} evaluates the performance and scalability of \sys on larger circuits.
\citet{nam2018automated} and VOQC~\cite{VOQC} have both incorporated rotation merging into their framework while Quartz adopts it as a preprocessing procedure.
Similarly, \sys adopts rotation merging as a preprocessing step.
However, to demonstrate the effectiveness of \sys without rotation merging, we report results of both \sys and Quartz without rotation merging.

\begin{figure}
\label{fig:rotation_merging}
    \centering
    \subfloat[Rotation merging] {
    \includegraphics[scale=0.26]{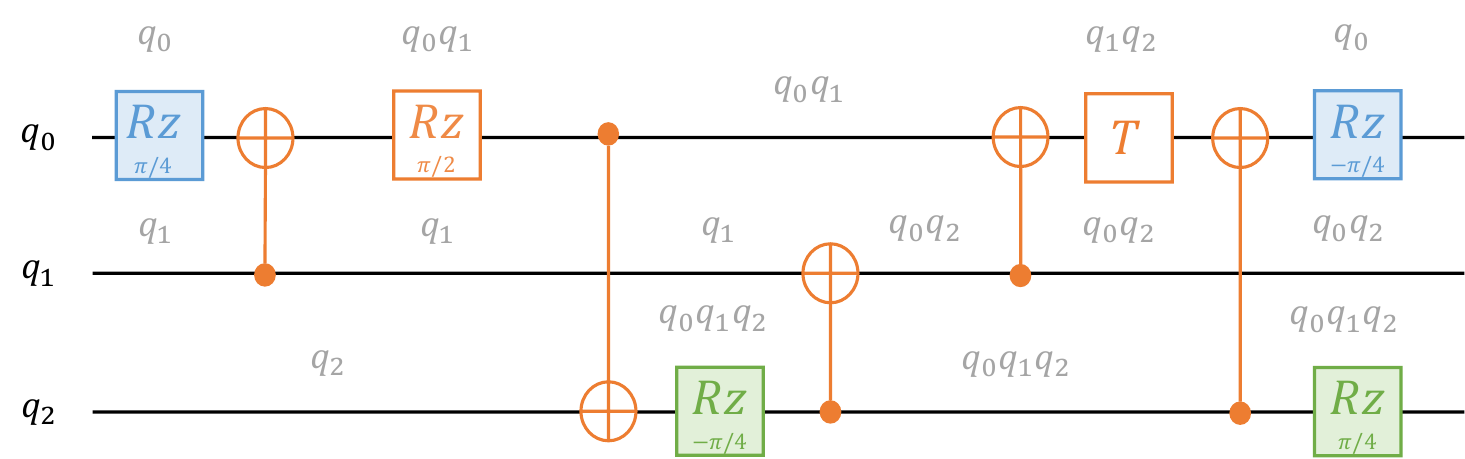}
    \label{fig:rotation_merging_phases}
    }
    \\
    \subfloat[Rotation merging as transformations]{
    \includegraphics[scale=0.24]{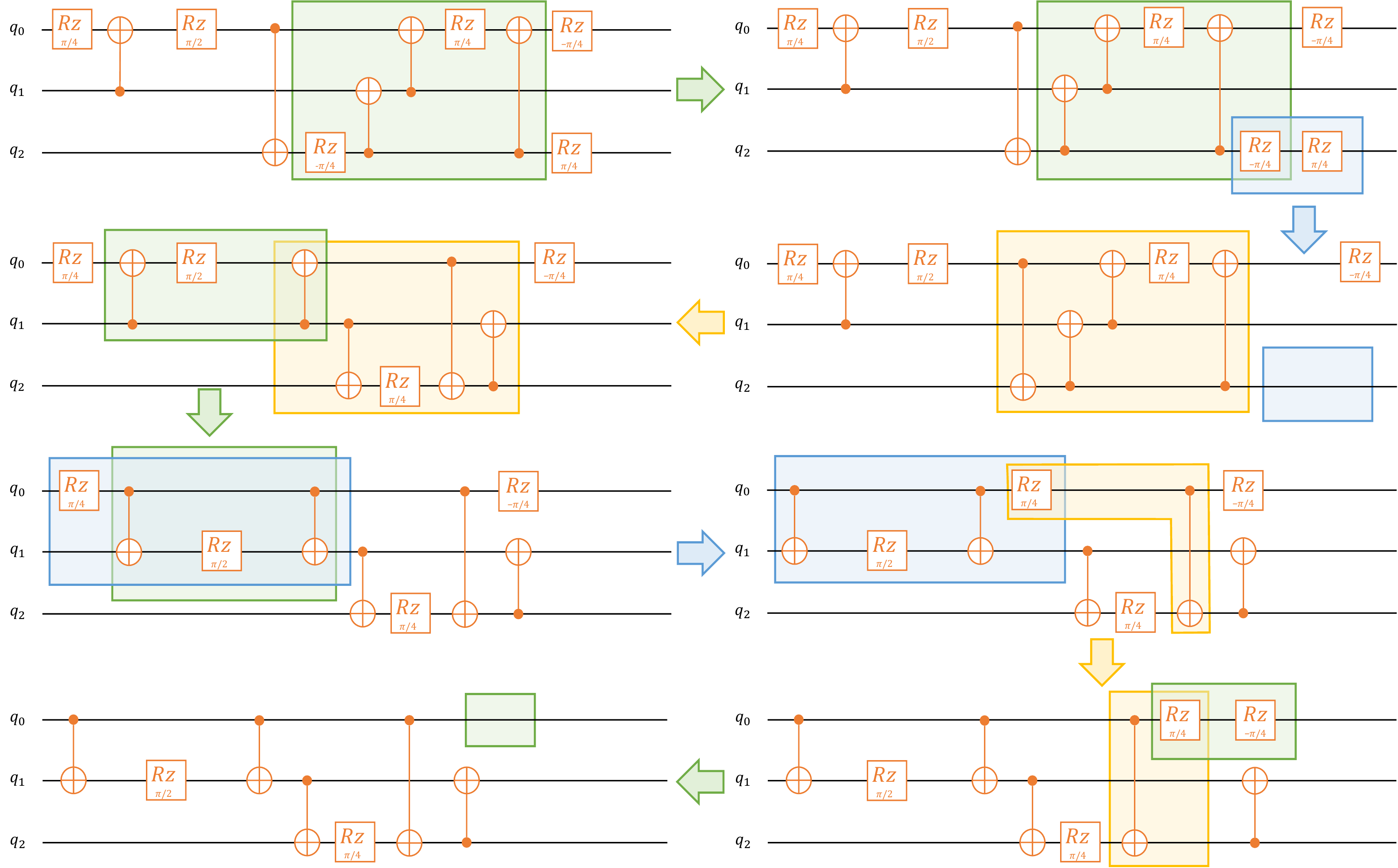}
    \label{fig:rotation_merging_transfers}
    }
\vspace{-2mm}
    \caption{
    (a) shows an example circuit with $CNOT$ and $Rz$ gates. The phase of each $Rz$ gate is shown above the gate. Rotations with identical phase can be combined by rotation merging.
    (b) represents the rotation merging optimization as a sequence of transformations discovered by \sys. Note that \sys can also update the final circuit to have the same CNOT topology as the original circuit by applying a few additional transformations, which are omitted in the figure.}
\vspace{-6mm}
\end{figure}

As shown in~\Cref{tab:nam_total_gate}, \sys outperforms existing rule-based circuit optimizers on almost all benchmark circuits.
Rule-based optimizers rely on a fixed set of manually designed rules and schedule them in a predetermined, typically greedy, manner.
For instance, Nam~\cite{nam2018automated} is an optimizer that has been specifically fine-tuned for the Nam gate set and is among the best-performing rule-based optimizers on that gate set.
Nam applies gate-set-specific heuristics and rules, such as rotation merging and floating $R_z$ gates, and also uses 1- and 2-qubit gate cancellation as subroutines to simplify the circuits.
To apply these subroutines, \citet{nam2018automated} uses two fixed schedules.
We report the results of the {\em heavy} schedule, which applies more aggressive optimization and achieves better results.
Other rule-based optimizers (i.e., VOQC~\cite{VOQC}, Qiskit~\cite{qiskit}, \tket~\cite{tket}) operate similarly, where VOQC~\cite{VOQC} also implements specific optimizations for the Nam gate set, which are combined into a pass \texttt{optimize\_nam}.
In contrast, \sys is equipped with transformation rules automatically generated by Quartz~\cite{pldi2022-quartz}.

\sys surpasses Quartz~\cite{pldi2022-quartz} for {\em all} benchmark circuits. 
Both tools employ the same set of transformations, and therefore the difference in their performance arises from their respective search algorithms.
Quartz uses cost-based backtracking search to schedule transformations, while \sys leverages reinforcement learning to detect optimization opportunities and determine the best combination of transformations to achieve them.
\ZJ{I don't like the following sentence --- it seems subjective and incorrect since Quartz is actually guided by cost during the search.
In essence, the cost-based backtracking search can be compared to searching for something valuable in an unknown area without guidance, while the reinforcement learning approach resembles exploring with the aid of a compass and a map.} \ZL{I think we can remove this sentence.}

The reduction in gate count achieved by \sys translates to an average reduction of 25.7\% in circuit depth, which outperforms the best depth reduction achieved by existing optimizers (i.e., 19.6\%). We report the circuit depth results in the supplementary material due to the page limit.

\Cref{tab:nam_cnot} shows CNOT count reduction on the same gate set.
\sys reduces CNOT count by 33.9\% on average, while the reduction achieves by existing optimizers is at most 20.6\%.
Reducing CNOT count is a desirable advantage since it significantly increases the fidelity of executing a circuit on modern NISQ devices with sparse inter-qubit connections.
Even for relatively small circuits, \sys achieves better performance than Quartz~\cite{pldi2022-quartz}, as shown in \Cref{tab:nam_total_gate} and \Cref{tab:nam_cnot}.
This is because exhaustively exploring the search space is prohibitively expensive even for very small circuits, and Quartz's back-tracking search discards circuits whose cost is larger than a threshold.
This approach can miss optimizations that require applying cost-increasing transformations.
In contrast, \sys's learning-based approach provides a more efficient and scalable solution, which allows \sys to better explore the search space and discover more optimization opportunities.




As shown in \Cref{tab:nam_total_gate} and \Cref{tab:nam_cnot}, the performance gap between \sys with and without rotation merging is relatively small (i.e., 3.2\% in both cases); as a comparison, disabling rotation merging decreases Quartz's performance by 15.4\% for total gate count and 9.7\% for CNOT count.
Rotation merging is an efficient global optimization technique that merges $R_z$ gates with identical phase polynomial expressions.
Though it can be achieved by multiple applications of local $R_z$ transformation rules, it is challenging to rebuild rotation merging with those rules since applying them does not provide immediate rewards until we finally fuse the rotation gates.
Though it takes some time to learn, once the best schedule is learned, \sys can apply it every where, which is much more efficient.
\sys can learn to perform optimizations similar to rotation merging through its own exploration.
\Cref{fig:rotation_merging} shows such an example, where \sys merges two pairs of $R_z$ gates with identical phase polynomial using a sequence of local transformations. 

\commentout{
\JP{
First we measure the performance of \sys in terms of total gate count. The results are obtained by our pre-training and fine-tuning way. The pre-training takes 6 circuits mentioned before in which lasts about 2.5 hours for 115 iterations, and gets on-par results with previous works on these small circuits. Then we fine-tune each circuit in the benchmark suite to optimize them further. We can see after this two-step approach \sys provides much better results than other works in Table \ref{tab:nam}.
}
}


\subsection{Comparison on the IBM Gate Set\label{subsec:ibm_exp}}

\begin{table*}[t]
\centering
\caption{Comparing \sys and existing circuit optimizers on reducing total gate count and CNOT count of the benchmark circuits for the IBM gate set. The best result for each circuit is in bold.}
\label{tab:ibm}
\vspace{-2mm}
\setlength{\tabcolsep}{1.1mm}{\fontsize{7}{8}\selectfont
\begin{tabular}{l||r|r|r|r|r|r|r|r||r|r}
\hline
\multirow{2}{7em}{\bf Circuit} & \multicolumn{2}{c|}{\bf Orig.} & \multicolumn{2}{c|}{\bf Qiskit} & \multicolumn{2}{c|}{\bf Tket} & \multicolumn{2}{c||}{\bf Quartz} & \multicolumn{2}{c}{\bf \sys}\\
\cline{2-11}
& {\bf \# gates} & {\bf \# CXs} & {\bf \# gates} & {\bf \# CXs} &{\bf \# gates} & {\bf \# CXs} &{\bf \# gates} & {\bf \# CXs} &{\bf \# gates} & {\bf \# CXs} \\
\hline
\texttt{tof\_3} & 53 & 18 & 51 & 18 & 51 & 18 & 42 & \textbf{14} & \textbf{39} & \textbf{14}\\
\texttt{barenco\_tof\_3} & 65 & 24 & 61 & 24 & 61 & 24 & 52 & 20 & \textbf{42} & \textbf{16}\\
\texttt{mod5\_4} & 71 & 28 & 69 & 28 & 69 & 28 & 62 & 27 & \textbf{50} & \textbf{22}\\
\texttt{tof\_4} & 88 & 30 & 84 & 30 & 84 & 30 & 67 & \textbf{22} & \textbf{62} & \textbf{22}\\
\texttt{tof\_5} & 123 & 42 & 117 & 42 & 117 & 42 & 92 & \textbf{30} & \textbf{85} & \textbf{30}\\
\texttt{barenco\_tof\_4} & 125 & 48 & 117 & 48 & 117 & 48 & 99 & 40 & \textbf{75} & \textbf{30}\\
\texttt{mod\_mult\_55} & 140 & 48 & 135 & 48 & 145 & 48 & 114 & 41 & \textbf{100} & \textbf{39}\\
\texttt{vbe\_adder\_3} & 165 & 70 & 186 & 62 & 159 & 62 & 124 & 50 & \textbf{82} & \textbf{36}\\
\texttt{barenco\_tof\_5} & 185 & 72 & 173 & 72 & 173 & 72 & 146 & 60 & \textbf{108} & \textbf{44}\\
\texttt{csla\_mux\_3} & 200 & 80 & 235 & 71 & 188 & 71 & 181 & 72 & \textbf{164} & \textbf{65}\\
\texttt{rc\_adder\_6} & 229 & 93 & 250 & 81 & 277 & 81 & 189 & 75 & \textbf{164} & \textbf{71}\\
\texttt{gf2\^{}4\_mult} & 246 & 99 & 232 & 99 & 232 & 99 & 229 & 99 & \textbf{192} & \textbf{84}\\
\texttt{hwb6} & 287 & 116 & 281 & 115 & 282 & 111 & 272 & 114 & \textbf{208} & \textbf{85}\\
\texttt{mod\_red\_21} & 294 & 105 & 279 & 105 & 317 & 104 & 292 & 105 & \textbf{204} & \textbf{77}\\
\texttt{tof\_10} & 298 & 102 & 282 & 102 & 282 & 102 & 217 & \textbf{70} & \textbf{203} & \textbf{70}\\
\texttt{gf2\^{}5\_mult} & 374 & 154 & 353 & 154 & 353 & 154 & 372 & 154 & \textbf{295} & \textbf{135}\\
\texttt{csum\_mux\_9} & 459 & 168 & 453 & 168 & 474 & 168 & 459 & 168 & \textbf{318} & \textbf{140}\\
\texttt{barenco\_tof\_10} & 485 & 192 & 453 & 192 & 453 & 192 & 482 & 192 & \textbf{278} & \textbf{116}\\
\texttt{ham15-low} & 485 & 236 & 456 & 236 & 455 & 225 & 480 & 234 & \textbf{366} & \textbf{188}\\
\texttt{qcla\_com\_7} & 486 & 186 & 511 & 174 & 486 & 174 & 470 & 174 & \textbf{304} & \textbf{119}\\
\texttt{gf2\^{}6\_mult} & 528 & 221 & 496 & 221 & 496 & 221 & 528 & 221 & \textbf{422} & \textbf{198}\\
\texttt{qcla\_adder\_10} & 587 & 233 & 636 & 213 & 556 & 205 & 586 & 232 & \textbf{412} & \textbf{164}\\
\texttt{gf2\^{}7\_mult} & 708 & 300 & 665 & 300 & 665 & 300 & 708 & 300 & \textbf{573} & \textbf{273}\\
\texttt{gf2\^{}8\_mult} & 928 & 405 & 864 & 405 & 861 & 402 & 923 & 403 & \textbf{821} & \textbf{392}\\
\texttt{qcla\_mod\_7} & 982 & 382 & 980 & 366 & 933 & 366 & 978 & 382 & \textbf{746} & \textbf{308}\\
\texttt{adder\_8} & 1004 & 409 & 1010 & 385 & 1169 & 383 & 997 & 405 & \textbf{685} & \textbf{283}\\
\texttt{vqe\_8} & 199 & 14 & 91 & \textbf{14} & 93 & \textbf{14} & 92 & \textbf{14} & \textbf{89} & \textbf{14}\\
\texttt{qgan\_8} & 256 & 28 & 114 & 28 & 98 & 28 & 100 & 26 & \textbf{81} & \textbf{18}\\
\texttt{qaoa\_8} & 284 & 32 & 211 & \textbf{32} & 159 & \textbf{32} & \textbf{157} & \textbf{32} & 163 & \textbf{32}\\
\texttt{ae\_8} & 502 & 56 & 350 & \textbf{56} & \textbf{283} & \textbf{56} & 373 & \textbf{56} & 305 & \textbf{56}\\
\texttt{qpeexact\_8} & 555 & 64 & 373 & 64 & \textbf{286} & \textbf{55} & 371 & 64 & 326 & 63\\
\texttt{qpeinexact\_8} & 571 & 65 & 381 & 65 & \textbf{312} & \textbf{56} & 381 & 65 & 332 & 65\\
\texttt{qft\_8} & 578 & 68 & 392 & 68 & \textbf{262} & \textbf{56} & 326 & 67 & 298 & 67\\
\texttt{qftentangled\_8} & 647 & 75 & 415 & 75 & \textbf{295} & \textbf{61} & 489 & 75 & 332 & 75\\
\texttt{portfoliovqe\_8} & 708 & 84 & 288 & 84 & 232 & 84 & 359 & 84 & \textbf{202} & \textbf{54}\\
\texttt{portfolioqaoa\_8} & 1352 & 168 & 975 & 168 & \textbf{712} & 168 & 1207 & 168 & 806 & \textbf{161}\\
\hline
{\bf \begin{tabular}{@{}c@{}}Geo. Mean\\Reduction\end{tabular}}  & - & - & 14.4\% & 1.8\% & 20.1\% & 4.1\% & 19.9\% & 7,7\% & \textbf{36.6\%} & \textbf{21.3\%} \\
\hline
\end{tabular}
}
\end{table*}

We also conduct a comparison between \sys and existing optimizers on the IBM gate set using four metrics: total gate count, CNOT count, circuit depth, and fidelty.
To optimize a circuit, both \sys and Quartz~\cite{pldi2022-quartz} use a set of 6881 transformation rules generated by Quartz.
For parametric quantum gates, Quartz only discovers symbolic transformations applicable to these gates with arbitrary parameter values, and misses transformations that are only valid for specific parameter values.
Some of these transformations missed by Quartz are important to optimize circuits on the IBM gate set.
To address this limitation, we also include three single-qubit transformations missed by Quartz: (1) $Rz(\pi) = SX\ Rz(\pi)\ SX$, (2) $SX\ Rz(\pi/2)\ SX = Rz(\pi/2)\ SX\ Rz(\pi/2)$, and (3) $SX\ Rz(3\pi/2)\ SX = Rz(3\pi/2)\ SX\ Rz(3\pi/2)$. 
We verify their correctness by directly computing the concrete matrix representation of the two circuits in each transformation.
These three transformations are used by both \sys and Quartz to optimize circuits.
Since rotation merging is not native to the IBM gate set, it is not applied during the evaluation. 

\begin{wrapfigure}{r}{0.48\textwidth}
    \vspace{-2mm}
    \includegraphics[scale=0.48]{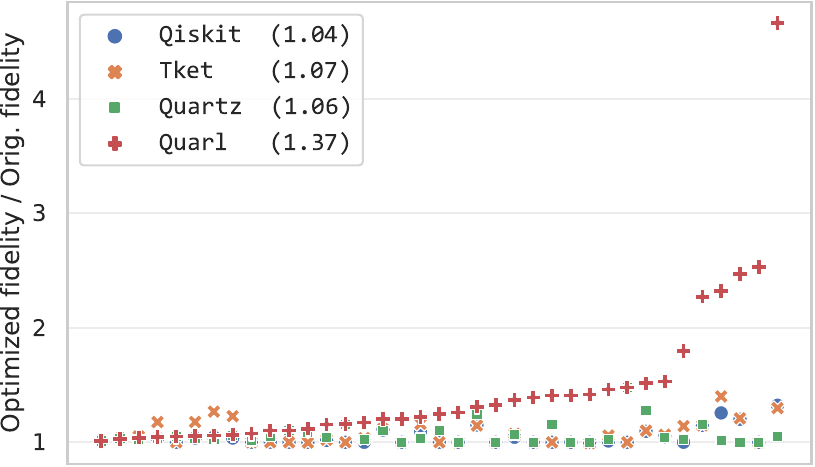}
    \vspace{-6mm}
    \caption{Circuit fidelity comparison for the IBM gate set. The numbers in parentheses in the legend indicate the average relative fidelity improvement (geometric mean) achieved by the optimizers.}
    \label{fig:ibm_fidelity}
\vspace{-2mm}
\end{wrapfigure}

\commentout{
\ZL{
Though transformation rules from Quartz are proven to be complete, the rules involving rotation gates only take symbolic parameters and constant parameters are not supported.
For example, the rule $Rz(\theta_0 + \theta_1) = Rz(\theta_0) + Rz(\theta_1)$ is generated, but rule $Z = Rz(\pi/2) + Rz(\pi/2)$ is not.
However, some of the rules involving constant parameters are critical in the IBM gate set.
Thus, we manually implemented 3 local transformation rules that use constant parameters and use them in the evaluation of both \sys and Quartz (Quartz's pattern matching engine supports such rules).
The rules we added are: 1) $Rz(\pi) = SX\ Rz(\pi)\ SX$; 2) $SX\ Rz(\pi/2)\ SX = Rz(\pi/2)\ SX\ Rz(\pi/2)$ and 3) $SX\ Rz(3\pi/2)\ SX = Rz(3\pi/2)\ SX\ Rz(3\pi/2)$.
Note that \sys's contribution is the rule scheduling algorithm, which is orthogonal to rule generation. 
We believe with higher quality rules, \sys can achieve better performance.
}
Since rotation merging is not native to the IBM gate set, it is not applied during the evaluation. 
}


As shown in~\Cref{tab:ibm}, \sys greatly outperform existing optimizers for the IBM gate set for both total gate count and CNOT count.
Specifically, \sys reduces total gate count by 36.6\% on average, while existing optimizers reduce total gate count by at most 20.1\%. 
The \sys-optimized circuits also reduce CNOT count by 21.3\%, while existing approaches can achieve up to 7.7\% CNOT count reduction.
The performance gap between \sys and Quartz~\cite{pldi2022-quartz} is widened on the IBM gate set.
We hypothesize this is because of Quartz's reliance on rotation merging as a preprocessing pass, which is not applicable on the IBM gate set.
\sys achieves 22.1\% circuit depth reduction on average, whereas existing optimizers reduce the circuit depth by at most 13.5\%. We report the circuit depth results in the supplementary material.

\Cref{fig:ibm_fidelity} shows the fidelity improvement achieved by different optimizers.
\sys improves circuit fidelity by up to 4.66$\times$ (on \texttt{adder\_8}) and 1.37 times on average, while Qiskit, Tket and Quartz improve circuit fidelity by 1.04$\times$, 1.07$\times$, 1.06$\times$ on average, respectively.
\sys achieves the best fidelity improvements for most circuits.
This is largely because CNOT involves a much higher error rate than other gates, and \sys performs the best on CNOT reduction.

\subsection{Ablation Studies\label{subsec:ablation}}

\begin{figure}[htbp]
  \centering
  \begin{minipage}[t]{0.32\textwidth}
    \centering
    \includegraphics[width=\linewidth]{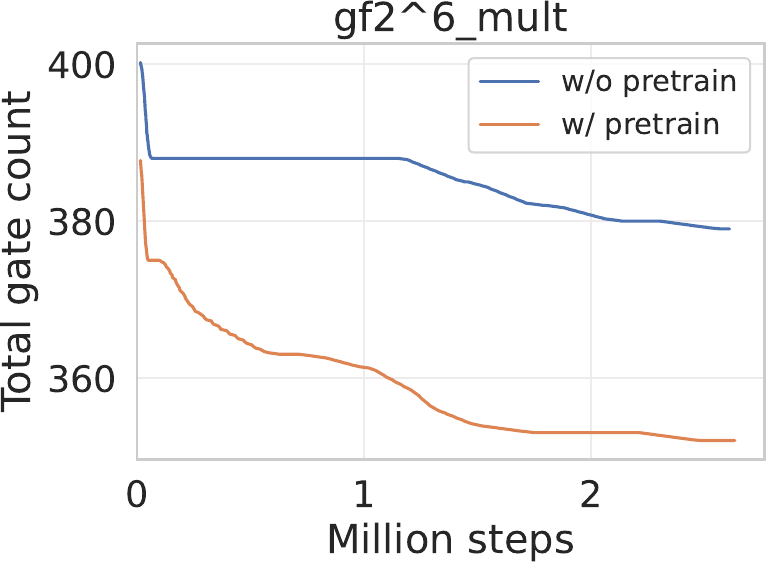}
  \end{minipage}
  \hfill
  \begin{minipage}[t]{0.32\textwidth}
    \centering
    \includegraphics[width=\linewidth]{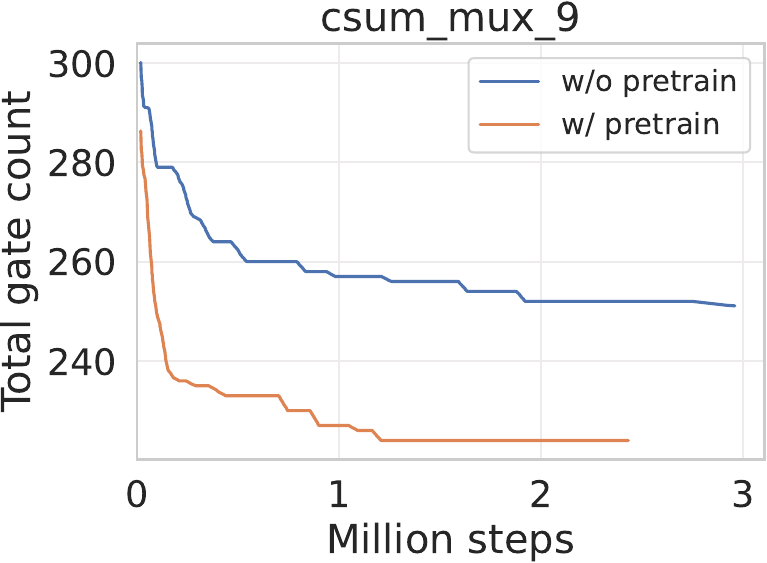}
  \end{minipage}
  \hfill
  \begin{minipage}[t]{0.32\textwidth}
    \centering
    \includegraphics[width=\linewidth]{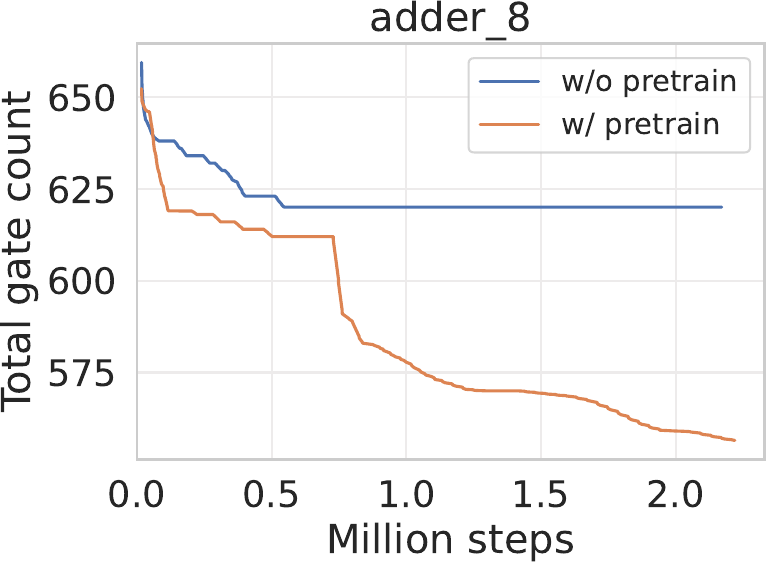}
  \end{minipage}
\vspace{-2mm}
  \caption{Comparison on \sys's optimization with and without pretraining.}
  \label{fig:pretrain}
\vspace{-5mm}
\end{figure}

\begin{figure}[htbp]
  \centering
  \begin{minipage}[t]{0.32\textwidth}
    \centering
    \includegraphics[width=\linewidth]{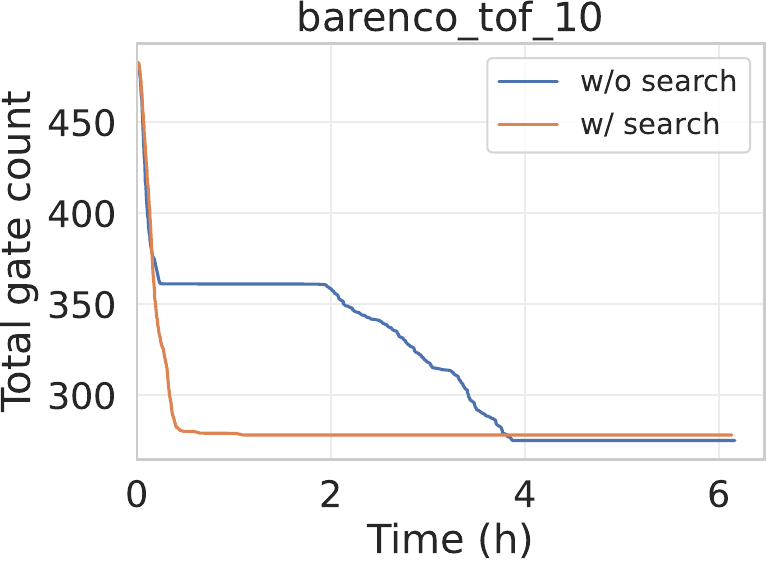}
  \end{minipage}
  \hfill
  \begin{minipage}[t]{0.32\textwidth}
    \centering
    \includegraphics[width=\linewidth]{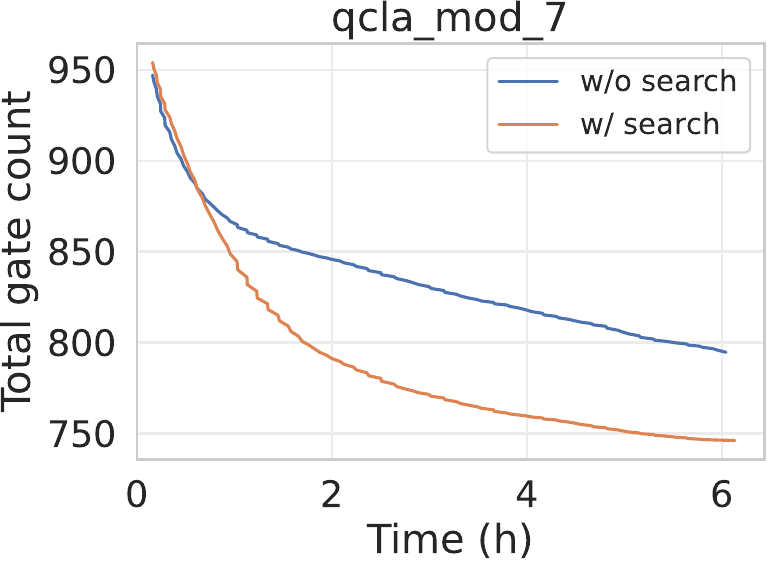}
  \end{minipage}
  \hfill
  \begin{minipage}[t]{0.32\textwidth}
    \centering
    \includegraphics[width=\linewidth]{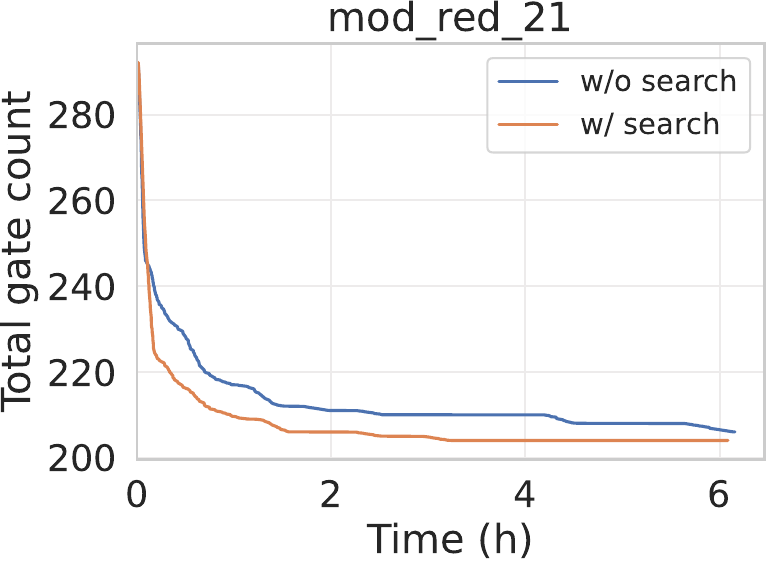}
  \end{minipage}
\vspace{-2mm}
  \caption{Comparison on \sys's optimization with and without policy-guided search.}
  \label{fig:search}
\end{figure}

\begin{figure}[htbp]
\label{fig:gnn_ab}
  \centering
  \begin{minipage}[t]{0.32\textwidth}
    \centering
    \includegraphics[width=\linewidth]{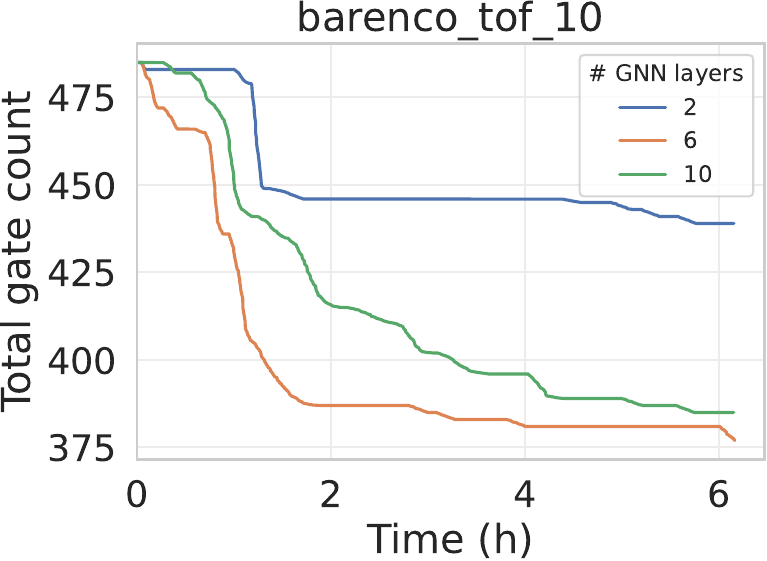}
  \end{minipage}
  \hfill
  \begin{minipage}[t]{0.32\textwidth}
    \centering
    \includegraphics[width=\linewidth]{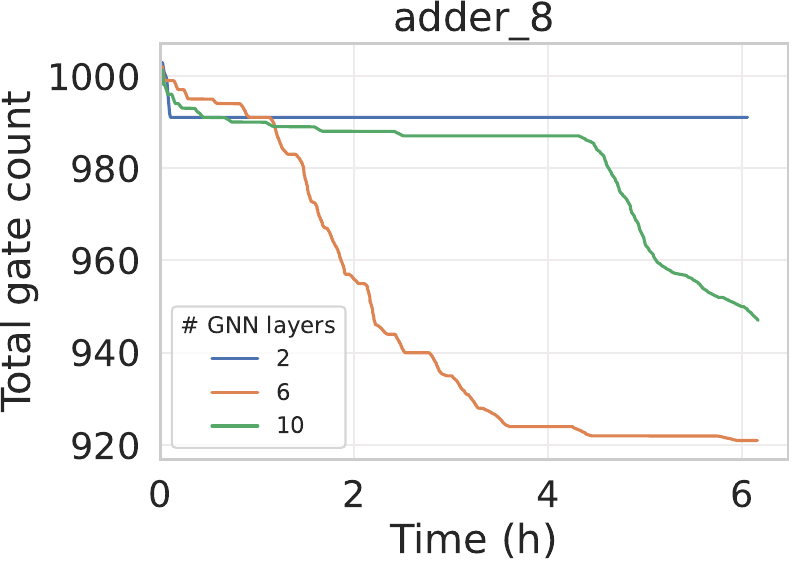}
  \end{minipage}
  \hfill
  \begin{minipage}[t]{0.32\textwidth}
    \centering
    \includegraphics[width=\linewidth]{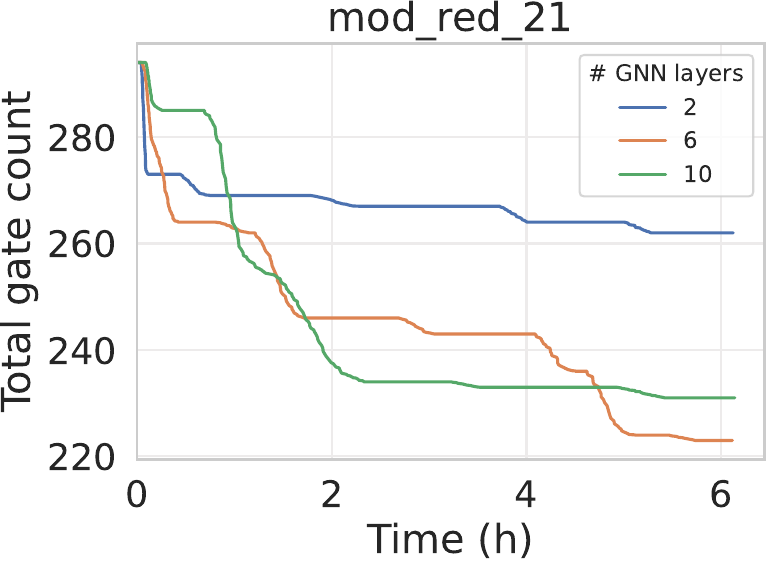}
  \end{minipage}
\vspace{-2mm}
  \caption{Comparison on \sys's optimization with different GNN layers.}
  \label{fig:gnn_layers}
\end{figure}



\paragraph{Pre-training.}
We evaluate whether \sys's pre-trained neural architecture can generalize to unseen circuits and how pre-training affects \sys's performance. 
We start a group of fine-tuning processes (with search) using the pre-trained model and another group of fine-tuning processes (with search) using randomly initialized parameters, and compare their performance.
The results are shown in~\Cref{fig:pretrain}.
Note that these three circuits are much larger than the circuits used in pre-training.
Pre-training allows \sys to quickly identify optimizations for all three circuits and eventually achieve better results, which shows that \sys's pre-trained model generalizes well to larger circuits and boosts optimization performance on previously unseen circuits.

\commentout{
\paragraph{Hierarchical gate-transformation selection.}
\ZL{Do we still want to include this part?}
\sys uses a hierarchical approach to optimizing quantum circuits by jointly training a gate- and a transformation-selecting policy. A key  feature re of our design is using a single actor-critic architecture to train both policies (see \Cref{subsec:approach,sec:design,sec:training}).
To evaluate this feature, we tried to implement a straightforward PPO approach using two actors and two critics to train the gate- and transformation-selecting policies,
where the networks use a global state representation. However, our attempts resulted in very poor optimization results, and the agent could hardly find any useful optimizations.
}

\paragraph{Policy-guided search.}
To evaluate \sys's policy-guided search, we start two groups of experiments from the same pre-trained model checkpoint: the first group performs fine-tuning with policy-guided search while the second only conducts fine-tuning.
As shown in \Cref{fig:search}, policy-guided search allows \sys to discover better solution faster.
Compared to fine-tuning, \sys's policy-guided search allows greedier exploitation (i.e., storing only best cost circuits in the initial buffer and using argmax instead of sampling during transfer selection) and bolder exploration (i.e. using soft mask to force exploration).

\commentout{
\sys uses a hierarchical approach to optimizing quantum circuits by jointly training a gate- and a transformation-selecting policy. A key insight behind our design is to use a single actor-critic architecture to train both policies (see \Cref{subsec:approach}).
To evaluate this hierarchical design, we compare \sys against a strawman PPO approach that uses two actors to train the gate- and transformation-selecting policies. The critic networks of these actors take as input a representation of the global circuit and estimate its value.
\Cref{fig:ppo_ablation} compares these two approaches on the {\tt barenco\_tof\_10} circuit.
Directly applying PPO to the circuit optimization problem cannot discover highly optimized solutions, since the action space is very large and highly depends on the circuit, and it's difficult to learn a unified representation for the entire circuit. 
\sys's hierarchical approach addresses these challenges by decomposing the action space into sub-spaces and using a single actor-critic architecture to train both policies.
\ZJ{I don't like the current version. Maybe we should remove this paragraph...}
}

\paragraph{The number of GNN layers.}
An important hyperparameter in \sys is the number of GNN layers in the gate representation generator (i.e., $K$ in \Cref{subsec:design:gnn}). 
Using more GNN layers allows \sys to encode a larger subcircuit for each gate to guide the gate and transformation selection at the cost of introducing deeper network and more trainable parameters, which in turn requires more time and resource to train.
\Cref{fig:gnn_layers} shows \sys's performance with different values of $K$.
We use randomly initialized model parameters to rule out the effect of pre-training, and train \sys's neural architecture with different values of $K$ for 6 hours.
Using a small value of $K$ (e.g., $K=2$) results in suboptimal performance, since \sys's gate and transformation selectors can only view a small neighborhood of each gate to make decision.
A larger $K$ enables representations to better guide the gate and transformation selection.
However, a very large value of $K$ (e.g., $K=10$) is also suboptimal as it introduces more parameters which consumes more data to converge and increases the time cost of \sys's representation generator to encode subcircuit into representation.

\oded{it seems $K=10$ is worse even if we don't consider training time, just number of experiences. I think we should say something about it here. Is it just bad luck? Does $K=10$ require more experiences to train because it has more parameters? Does it overfit and not generalize well?}
\oded{Also, I think we should add to the legend of \Cref{fig:khop} the total number of collected experiences. Instead of just listing the final number of gates, we can list both $(x,y)$ value of the final point, something like $(22M, 403)$ for $K = 2$ and so on.}



\commentout{
\begin{figure}
    \centering
    \includegraphics[scale=0.36]{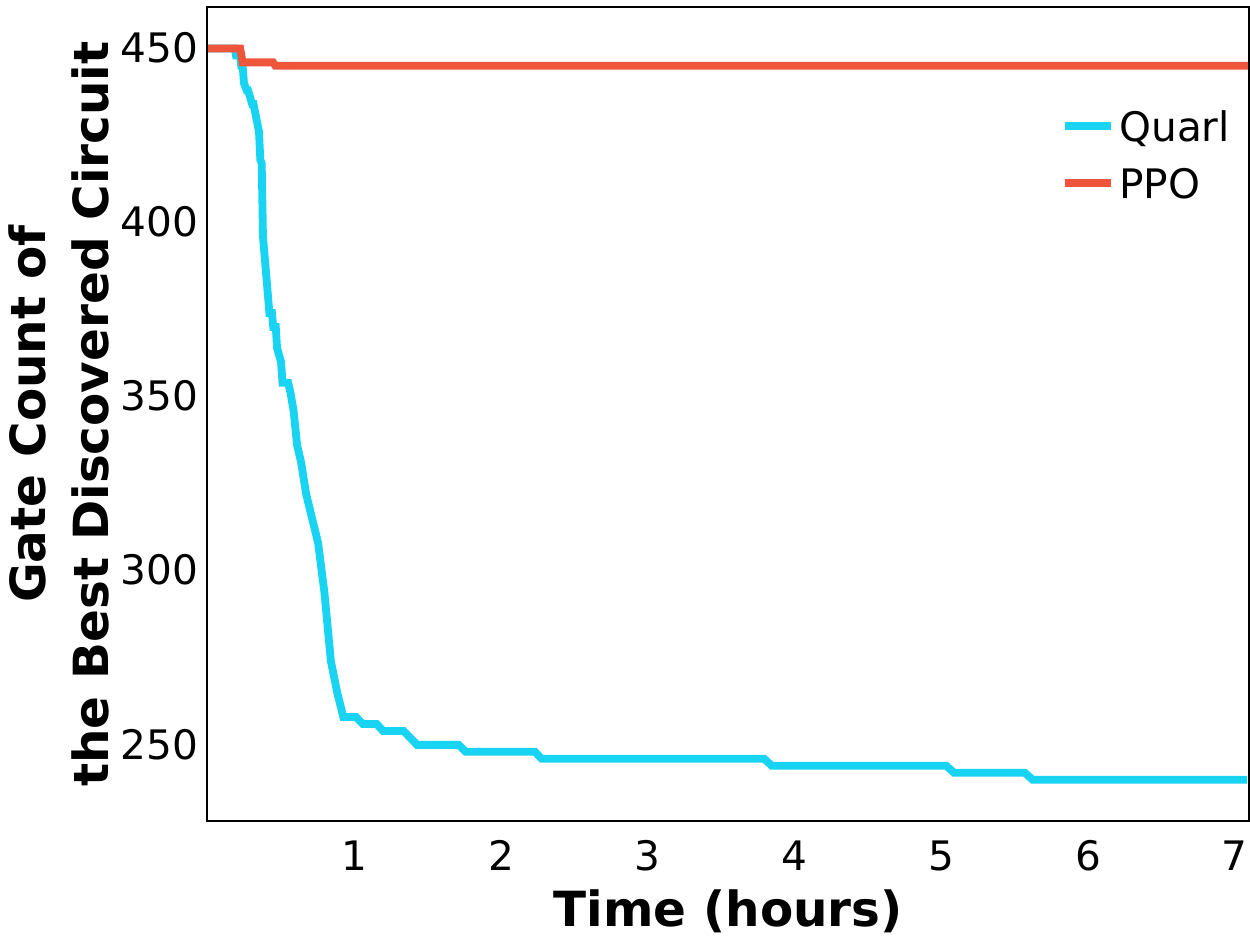}
    \caption{Comparison between our approach and a strawman approach.}
    \label{fig:ppo_abl}
\end{figure}
}

\commentout{
\begin{figure}
    \centering
    \includegraphics[scale=0.36]{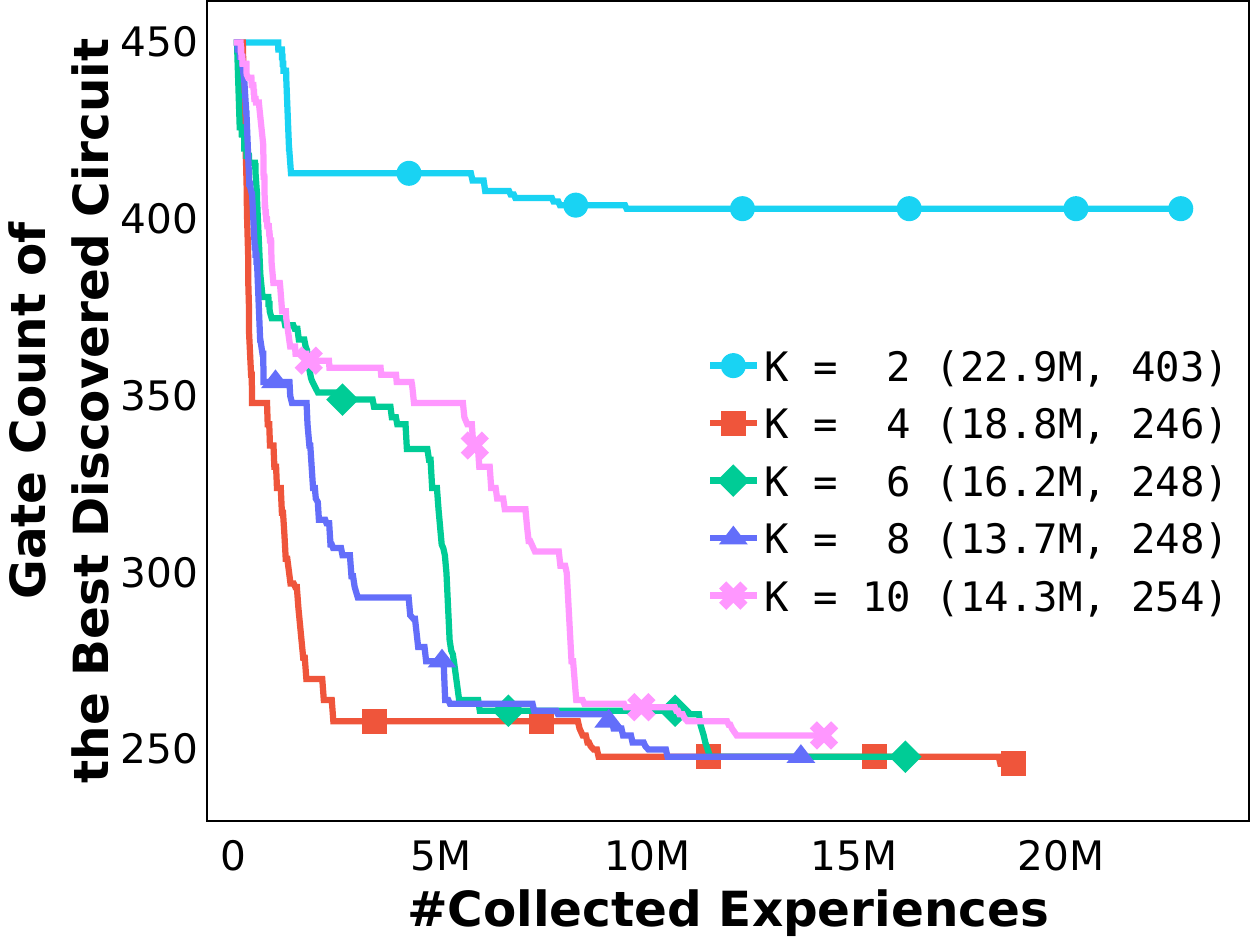}
    \caption{Evaluation of different $K$ values in \sys's graph neural network. Number in parenthesis shows the gate count of the best discovered circuit in each run.}
    \label{fig:khop}
\end{figure}
}

\subsection{Scalability Analysis\label{subsec:scale_evaluation}}

This section investigates the relationship between circuit size and time to collect training data, and evaluates how circuit partitioning improves \sys's scalability.
We evaluate \sys using four circuits with similar structures, namely \texttt{adder\_8}, \texttt{adder\_16}, \texttt{adder\_32}, and \texttt{adder\_64}, which have 900, 1437, 3037, and 6237 gates, respectively. 
\Cref{fig:scale_a} shows the time to collect training data for these circuits, which aligns well with our analysis in \Cref{subsec:scale} and demonstrates a linear relation.
Additionally, \Cref{fig:scale_b} compares the performance of \sys on the original and partitioned circuits, where we partition the input circuit into subcircuits with at most 512 gates based on a topological order.
The experiments were conducted on the same computation platform with a search time budget of 2 hours to ensure a fair comparison and rule out the influence of optimization saturation.
The results indicate that circuit partitioning generally increases \sys's performance in limited search time, but it may lead to performance degradation due to missed optimization opportunities across partitions.

\begin{figure}
  \subfloat[Relation between gate count and single-step time cost in data collection.] {
    \label{fig:scale_a}
    \includegraphics[width=0.45\textwidth]{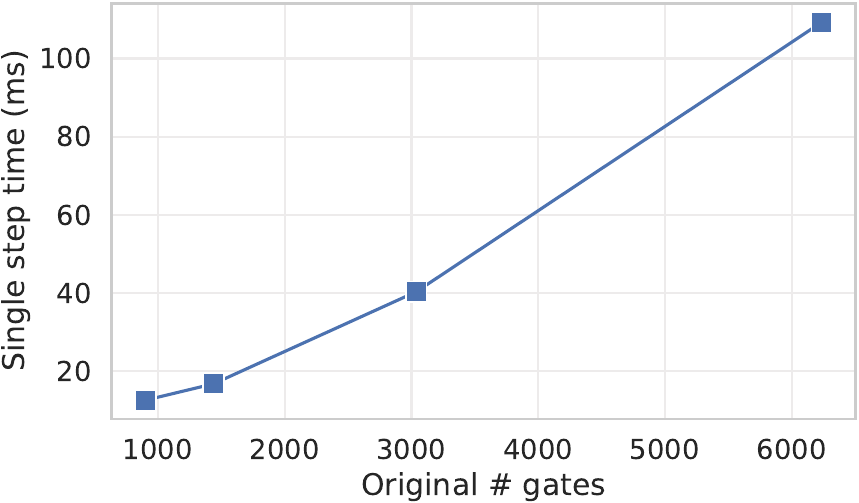}
  }
  \hspace{0.03\textwidth}
  \subfloat[Comparison between \sys's performance with and without circuit partitioning.] {
    \label{fig:scale_b}
    \includegraphics[width=0.45\textwidth]{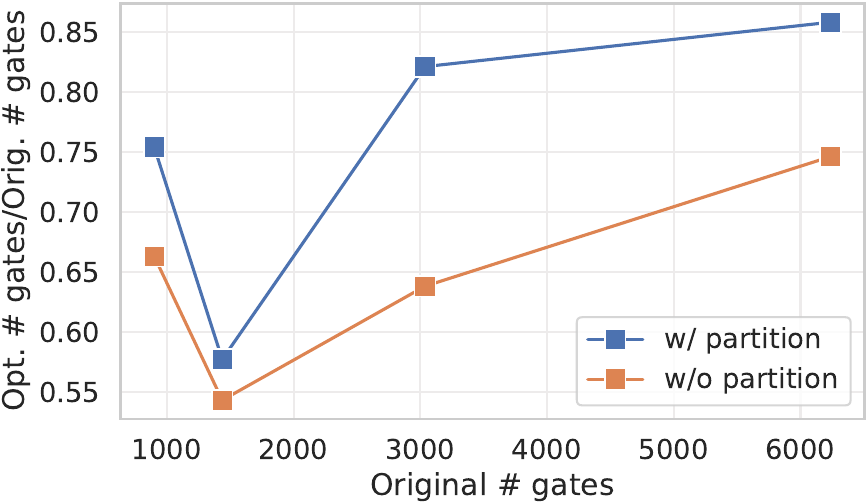}
  }
\vspace{-2mm}
  \caption{Scalability analysis.}
  \label{fig:scale}
\end{figure}

\section{Related Work}
\label{sec:related}
\paragraph{Quantum circuit optimization.} 

In recent years, several optimizing compilers for quantum circuits have been introduced, such as \qiskit~\cite{qiskit}, \tket~\cite{tket}, \quilc~\cite{quilc}, and \voqc~\cite{VOQC}. These optimizers rely on a rule-based strategy that applies manually designed circuit transformations whenever possible. Quartz~\cite{pldi2022-quartz} uses a search-based approach that generates a comprehensive set of circuit transformations and applies them using a backtracking search algorithm. QUESO~\cite{xu2022synthesizing}, on the other hand, uses a path-sum-based approach to synthesize non-local circuit transformations and apply them using beam search. 
However, the large search space of functionally equivalent circuits and the need for cost-iecreasing transformations pose challenges for existing approaches to effectively discover circuit optimizations. 
\sys addresses this challenge with a novel neural architecture and RL-training procedure, enabling it to automatically identify optimization opportunities and apply the right transformations through interaction with the environment.
Different from the aforementioned compilers that apply transformations, PyZX~\cite{pyzx} represents circuits as ZX-diagrams and use ZX-calculus~\cite{hadzihasanovic18, jeandel18} to simplify ZX-diagrams, which are eventually converted back to the circuit representation.
In contrast, \sys is designed to learn to perform transformations. Applying \sys's techniques to optimize circuits in ZX-calculus is a promising avenue for future research.

\paragraph{Reinforcement learning for quantum computing.}
Reinforcement learning has been applied to optimize quantum circuits. For instance, \citet{ostaszewski2021reinforcement} used Double Deep Q-Learning to optimize Variational Quantum Eigensolvers (VQEs) \cite{peruzzo2014variational}. Similarly, \citet{fosel2021quantum} utilized a convolutional neural network and PPO \cite{ppo} to learn circuit transformations. However, this approach is limited to a small set of transformations with up to two qubits. Notably, \sys differs from these methods in its ability to support a comprehensive set of thousands of transformations and optimize arbitrary quantum circuits.

\if 0
as the backbone to optimize randomly generated circuits and the QAOA~\cite{qaoa} circuits for the MaxCut problem~\cite{harrigan2021quantum} only on a specific gate set. 
Compared to previous methods, \sys is a more generic approach in that its optimization is not limited to specific types of circuit or specific gate set. 
\sys optimizes generic circuits, and works well on circuits with real-world usage.
\fi 


Recently, reinforcement learning has also been applied to the qubit routing problem, which involves inserting {\tt SWAP} gates into the circuits to enable their execution on near-term intermediate-scale quantum (NISQ) devices~\cite{sabre}. Existing examples include QRoute~\cite{sinha2022qubit} and a DQN-based method proposed by~\citet{pozzi2020using}, both of which aim to minimize the final depth of circuits after qubit routing using RL.
It is worth noting that \sys's techniques are orthogonal to RL-based qubit routing techniques, and combining them may lead to further improvements in circuit optimization on NISQ devices.

\section{Conclusion\label{sec:conclusion}}

In this paper, we present \sys, a learning-based quantum circuit optimizer.
In \sys's hierarchical approach, circuits are optimized with a sequence of transformation applications where in each application a gate-selecting policy and a transformation-selecting policy are used to choose a gate and a transformation to apply on the selected gate, respectively.
The two policies are trained jointly, with a single actor-critic architecture.
To apply PPO in the training of this architecture, we propose hierarchical advantage estimation (HAE) as an novel advantage estimator.
Experiment results show that \sys significantly outperforming existing circuit optimizers.

\section*{Acknowledgement}
We thank Shinjae, Yoo and Mingkuan Xu for their helpful feedback. This work is supported by NSF awards CNS-2147909, CNS-2211882, and CNS-2239351, and research awards from Amazon, Google, and Meta.
This research used resources of the National Energy Research Scientific Computing Center (NERSC), a U.S. Department of Energy Office of Science User Facility located at Lawrence Berkeley National Laboratory, operated under Contract No. DE-AC02-05CH11231 using NERSC award DDR-ERCAP0023403.



\bibliography{references}

\newpage
\appendix
\section{Proof of Theorem~\ref{thm:influenced_gates}}
\label{prf:1}

\begin{proof}
Let $NG(C, g, x)$ refer to the new gates introduced by $x$ in $C'$.
For gates $g'\in C' \wedge g'\notin\Call{IG}{d, C, g, x}$, there are three cases: 
(1) $g'$ is a $k$-hop predecessor of $NG(C, g, x)$ where $k > d$; 
(2) $g'$ is a successor of $NG(C, g, x)$; 
(3) $g'$ is neither a predecessor nor a successor of $NG(C, g, x)$.
The availability of any transformation $(G, G')$ where the depth of $G$ is less than or equal to $d$ on gate $g$, only depends on the $d$-hop successors of $g$.
For case (1), (2) and (3), the $d$-hop successors of $g'$ remains unchanged.
For case (1), since $g'$ is a $k$-hop predecessor of $NG(C, g, x)$ where $k > d$, then all gates in $NG(C, g, x)$ are outside $g'$'s $d$-hop successors, so $g'$'s $d$-hop successors remain unchanged.
For case (2), all successors of $g'$ are successors of $NG(C, g, x)$, remaining unchanged.
For case (3), since $g'$ is not a predecessor of $NG(C, g, x)$, none of its successors are in $NG(C, g, x)$. Thus, $g'$'s successors remain unchanged.
\end{proof}

\section{Depth Results on the Nam Gate Set}
\begin{table*}[ht]
\centering
\caption{Comparing \sys and existing circuit optimizers on reducing the depth of the benchmark circuits under the Nam gate set. The best result for each circuit is in bold.}
\label{tab:nam_depth}
\setlength{\tabcolsep}{1.1mm}{\fontsize{8}{9}\selectfont
\begin{tabular}{l||r|r|r|r|r|r|r||r|r}
\hline
\multicolumn{10}{c}{\bf Depth (Nam gate set)}\\
\hline
{\bf Circuit} & 
{\bf Orig.} & 
{\bf Nam} & 
{\bf VOQC} & 
{\bf Qiskit} & 
{\bf Tket} & 
{\bf \begin{tabular}{@{}c@{}}Quartz \\ w/ R.M.\end{tabular}} &
{\bf \begin{tabular}{@{}c@{}}Quartz \\ w/o R.M.\end{tabular}} &
{\bf \begin{tabular}{@{}c@{}}\sys \\ w/ R.M.\end{tabular}} &
{\bf \begin{tabular}{@{}c@{}}\sys \\ w/o R.M.\end{tabular}} \\ 
\hline
\texttt{tof\_3} & 33 & 31 & 31 & 33 & 31 & 31 & 32 & 29 & \textbf{25}\\
\texttt{barenco\_tof\_3} & 45 & 38 & 39 & 45 & 42 & 37 & 39 & \textbf{32} & 34\\
\texttt{mod5\_4} & 52 & 46 & 46 & 52 & 51 & 29 & 31 & \textbf{18} & \textbf{18}\\
\texttt{tof\_4} & 54 & 46 & 46 & 54 & 50 & 46 & 45 & \textbf{39} & 40\\
\texttt{barenco\_tof\_4} & 87 & 68 & 70 & 87 & 80 & 64 & 73 & \textbf{56} & \textbf{56}\\
\texttt{tof\_5} & 75 & 61 & 61 & 75 & 69 & 61 & 60 & 48 & \textbf{47}\\
\texttt{mod\_mult\_55} & 55 & 47 & 50 & 54 & 51 & 45 & 53 & \textbf{41} & 48\\
\texttt{vbe\_adder\_3} & 90 & 58 & 60 & 96 & 81 & 60 & 67 & 43 & \textbf{42}\\
\texttt{barenco\_tof\_5} & 129 & 95 & 98 & 129 & 118 & 90 & 114 & \textbf{79} & 87\\
\texttt{csla\_mux\_3} & 72 & 63 & 68 & 84 & 64 & 62 & 63 & 64 & \textbf{61}\\
\texttt{rc\_adder\_6} & 111 & 83 & 95 & 111 & 97 & 90 & 95 & \textbf{74} & 88\\
\texttt{gf2\^{}4\_mult} & 112 & 102 & 102 & 110 & 108 & 105 & 103 & 87 & \textbf{81}\\
\texttt{tof\_10} & 180 & 136 & 136 & 180 & 164 & 136 & 131 & 122 & \textbf{121}\\
\texttt{mod\_red\_21} & 176 & \textbf{129} & 135 & 173 & 150 & 146 & 176 & 131 & 135\\
\texttt{hwb6} & 178 & - & 153 & 177 & 157 & 147 & 173 & 132 & \textbf{130}\\
\texttt{gf2\^{}5\_mult} & 146 & 132 & 131 & 142 & 141 & 135 & 150 & \textbf{114} & 133\\
\texttt{csum\_mux\_9} & 64 & 47 & \textbf{46} & 64 & 61 & 55 & 64 & 53 & 70\\
\texttt{barenco\_tof\_10} & 339 & 230 & 238 & 339 & 308 & 228 & 339 & \textbf{215} & 249\\
\texttt{qcla\_com\_7} & 94 & 70 & 67 & 99 & 88 & 66 & 91 & 76 & \textbf{63}\\
\texttt{ham15-low} & 285 & - & 245 & 283 & 258 & 250 & 285 & 221 & \textbf{209}\\
\texttt{gf2\^{}6\_mult} & 184 & 166 & 166 & 180 & 178 & 165 & 182 & \textbf{153} & 186\\
\texttt{qcla\_adder\_10} & 84 & \textbf{63} & 67 & 89 & 78 & 65 & 84 & 78 & 75\\
\texttt{gf2\^7{}\_mult} & 220 & 198 & 198 & 215 & 213 & \textbf{197} & 218 & 232 & 245\\
\texttt{gf2\^8{}\_mult} & 262 & 236 & 235 & 255 & 253 & \textbf{235} & 262 & 295 & 315\\
\texttt{qcla\_mod\_7} & 229 & \textbf{184} & 199 & 231 & 211 & 195 & 229 & 191 & 205\\
\texttt{adder\_8} & 259 & \textbf{193} & 196 & 264 & 243 & 199 & 255 & 242 & 217\\
\hline
{\bf \begin{tabular}{@{}c@{}}Geo. Mean\\Reduction\end{tabular}} & - & 19.4\% & 17.3\% & -0.7\% & 7.4\% & 19.6\% & 9.1\% & \textbf{25.7\%} & 23.4\% \\
\hline
\end{tabular}
}
\end{table*}

\section{Depth Results on the IBM Gate Set}

\begin{table*}[ht]
\centering
\caption{Comparing \sys and existing circuit optimizers on reducing the depth of the benchmark circuits under the IBM gate set. The best result for each circuit is in bold.}
\label{tab:ibm_depth}
\setlength{\tabcolsep}{1.1mm}{\fontsize{8}{9}\selectfont
\begin{tabular}{l||r|r|r|r||r}
\hline
\multicolumn{6}{c}{\bf Depth (IBM gate set)}\\
\hline
{\bf Circuit} & 
{\bf Orig.} & 
{\bf Qiskit} & 
{\bf Tket} & 
{\bf Quartz} & 
{\bf \sys}  \\
\hline
\texttt{tof\_3} & 37 & 37 & 37 & \textbf{32} & \textbf{32}\\
\texttt{barenco\_tof\_3} & 50 & 50 & 50 & 40 & \textbf{39}\\
\texttt{mod5\_4} & 57 & 57 & 57 & 52 & \textbf{47}\\
\texttt{tof\_4} & 60 & 60 & 60 & \textbf{47} & \textbf{47}\\
\texttt{tof\_5} & 83 & 83 & 83 & 64 & \textbf{61}\\
\texttt{barenco\_tof\_4} & 96 & 96 & 96 & 80 & \textbf{67}\\
\texttt{mod\_mult\_55} & 60 & 58 & 61 & 55 & \textbf{52}\\
\texttt{vbe\_adder\_3} & 98 & 99 & 93 & 73 & \textbf{48}\\
\texttt{barenco\_tof\_5} & 142 & 142 & 142 & 119 & \textbf{94}\\
\texttt{csla\_mux\_3} & 80 & 83 & 75 & 84 & \textbf{65}\\
\texttt{rc\_adder\_6} & 124 & 119 & 125 & 109 & \textbf{92}\\
\texttt{gf2\^{}4\_mult} & 117 & 115 & 114 & 118 & \textbf{95}\\
\texttt{hwb6} & 186 & 183 & 189 & 188 & \textbf{131}\\
\texttt{mod\_red\_21} & 183 & 178 & 201 & 183 & \textbf{128}\\
\texttt{tof\_10} & 198 & 198 & 198 & 149 & \textbf{140}\\
\texttt{gf2\^{}5\_mult} & 148 & 144 & 143 & 149 & \textbf{138}\\
\texttt{csum\_mux\_9} & 68 & \textbf{68} & 77 & \textbf{68} & 84\\
\texttt{barenco\_tof\_10} & 372 & 372 & 372 & 376 & \textbf{249}\\
\texttt{ham15-low} & 310 & 298 & 288 & 300 & \textbf{235}\\
\texttt{qcla\_com\_7} & 98 & 100 & 97 & 95 & \textbf{78}\\
\texttt{gf2\^{}6\_mult} & 189 & 185 & \textbf{184} & 189 & 197\\
\texttt{qcla\_adder\_10} & 89 & 92 & 84 & 89 & \textbf{77}\\
\texttt{gf2\^{}7\_mult} & 225 & 220 & \textbf{219} & 225 & 248\\
\texttt{gf2\^{}8\_mult} & 267 & 260 & \textbf{259} & 268 & 312\\
\texttt{qcla\_mod\_7} & 241 & 241 & \textbf{235} & 238 & 250\\
\texttt{adder\_8} & 278 & 274 & 318 & 275 & \textbf{208}\\
\texttt{vqe\_8} & 69 & \textbf{20} & 27 & 24 & 28\\
\texttt{qgan\_8} & 60 & 31 & \textbf{30} & 45 & 34\\
\texttt{qaoa\_8} & 56 & 45 & \textbf{42} & 48 & 51\\
\texttt{ae\_8} & 193 & 143 & \textbf{136} & 175 & 142\\
\texttt{qpeexact\_8} & 199 & 140 & \textbf{132} & 160 & 149\\
\texttt{qpeinexact\_8} & 212 & \textbf{144} & 146 & 165 & 156\\
\texttt{qft\_8} & 170 & 115 & \textbf{108} & 131 & 120\\
\texttt{qftentangled\_8} & 184 & \textbf{121} & 153 & 162 & 129\\
\texttt{portfoliovqe\_8} & 151 & 67 & \textbf{61} & 86 & 100\\
\texttt{portfolioqaoa\_8} & 300 & 232 & \textbf{216} & 288 & 305\\
\hline
{\bf \begin{tabular}{@{}c@{}}Geo. Mean\\Reduction\end{tabular}} & - & 13.5\% & 12.9\% & 13.4\% & \textbf{22.1}\% \\
\hline
\end{tabular}
}
\end{table*}


\end{document}